\def\block(#1,#2)#3{\multicolumn{#2}{c}{\multirow{#1}{*}{$ #3 $}}}
\newtheorem{thm}{Theorem}
\newtheorem{rmrk}{Remark}
\newtheorem{lem}{Lemma}
\newtheorem{cor}{Corollary}
\newtheorem{defn}{Definition}
\theoremstyle{definition}
\newenvironment{example}
  {\pushQED{\qed}\examplex}
  {\popQED\endexamplex}
\begin{document}
\title{Bandwidth Adaptive \& Error Resilient MBR Exact Repair Regenerating Codes}

\author{\IEEEauthorblockN{Kaveh Mahdaviani\IEEEauthorrefmark{1}, Ashish Khisti\IEEEauthorrefmark{1}, and Soheil Mohajer\IEEEauthorrefmark{2} }\\
\IEEEauthorblockA{\IEEEauthorrefmark{1}ECE Dept., University of Toronto, Toronto, ON M5S3G4, Canada\\
\IEEEauthorrefmark{2}ECE Dept., University of Minnesota, Minneapolis, MN 55455, USA \\              
Email: \{kaveh, akhisti\}@comm.utoronto.ca,~soheil@umn.edu}}

\maketitle

\begin{abstract}
Regenerating codes are efficient methods for distributed storage in storage networks, where node failures are common. They guarantee low cost data reconstruction and repair through accessing only a predefined number of arbitrarily chosen storage nodes in the network. In this work we consider two simultaneous extensions to the original regenerating codes framework introduced in \cite{Regenerating}; \emph{i}) both data reconstruction and repair are resilient to the presence of a certain number of erroneous nodes in the network and \emph{ii}) the number of helper nodes in every repair is not fixed, but is a flexible parameter that can be selected during the run-time. We study the fundamental limits of required total repair bandwidth and provide an upper bound for the storage capacity of these codes under these assumptions. We then focus on the minimum repair bandwidth (MBR) case and derive the exact storage capacity by presenting explicit coding schemes with exact repair, which achieve the upper bound of the storage capacity in the considered setup. To this end, we first provide a more natural extension of the well-known Product Matrix (PM) MBR codes \cite{PM_Codes}, modified to provide flexibility in the choice of number of helpers in each repair, and simultaneously be robust to erroneous nodes in the network. This is achieved by proving the non-singularity of family of matrices in large enough finite fields. We next provide another extension of the PM codes, based on novel repair schemes which enable flexibility in the number of helpers and robustness against erroneous nodes without any extra cost in field size compared to the original PM codes.\footnote{Parts of these results has been presented in IEEE International Symposium on Information Theory (ISIT'16) \cite{BAER_ISIT16}}

\end{abstract}

\IEEEpeerreviewmaketitle

\section{Introduction}

Many distributed storage systems (DSS) are at work these days to store huge digital contents, accessed by many users from different locations. These DSSs typically consist of many storage nodes each with limited storage capacity and service rate. These nodes collaborate to store and serve the data to the users, while from time to time a storage node or part of its data become inaccessible for various reasons. This event is referred to as a \emph{"failure"}. As the number of storage nodes increase, failures have become a norm in nowadays large scale DSSs.

Efficiency and robustness against data loss and data corruption is a key feature of any storage system. As a result, a reliable DSS is required to be able to guarantee the recovery of the stored data, referred to as \emph{data reconstruction}, even if a certain number of nodes are unavailable. This requires a mechanism to store some redundancy in the DSS, which results in some storage overhead. Moreover, to maintain the level of redundancy the DSS needs to be capable of replacing any missing node by a new one which contains an equivalent data. This procedure is referred to as \emph{repair}. In order to perform the repair, it is required to download some data from the remaining nodes in the DSS, which is referred to as the \emph{repair bandwidth}. Therefore, efficiency of a reliable DSS could be measured in terms of its storage overhead and required repair bandwidth.

Among various methods of performing data reconstruction and repair, \cite{Regenerating_Conf, Regenerating} suggested a framework, named regenerating codes, for storage in a DSS consisting $n$ storage nodes. In this framework, all the symbols in the storage network are considered to be elements of a Galois field of appropriate size $q$, denoted by $\mathbb{F}_q$. Each storage node is supposed to have \emph{per node storage capacity} equal to $\alpha$ symbols. This framework suggests optimizing both storage overhead and repair bandwidth, while guarantees capability of data reconstruction using any arbitrary subset of $k$ nodes, and repair using any arbitrary subset of $d$ nodes, referred to as \emph{helpers}, each providing $\beta$ \emph{per node repair bandwidth}. The aggregate amount of repair data is referred to as the \emph{total repair bandwidth} as is denoted by $\gamma = \beta d$. 

A regenerating code encodes a message consisting of source symbols from $\mathbb{F}_q$ into $n \alpha$ coded symbols to be stored on the $n$ storage nodes in the system, such that data reconstruction and repair processes are possible as described. The maximum size of the source data that could be stored in such a DSS is named the \emph{total storage capacity} of the coding scheme, which we denote by $F$. The aim of the code designer is to design a coding scheme which provides the largest possible total storage capacity $F$, for the set of given parameters $(n,$ $k,$ $d,$ $\alpha,$ $\beta)$. It has been shown in \cite{Regenerating_Conf,Regenerating} that $F$ could be upper bounded for any regenerating coding scheme by
\begin{align}\label{eq_DimakisBound}
F \leq \sum_{i=0}^{\min{(k,d)}-1}{\min\left\lbrace \alpha,(d-i)\frac{\gamma}{d} \right\rbrace}.
\end{align}

Based on (\ref{eq_DimakisBound}), It is shown that in an optimally efficient regenerating code with a given total storage capacity, reducing the storage overhead requires increasing the repair bandwidth and vice-versa \cite{Regenerating}. This result is known as the storage-bandwidth tradeoff. In particular, for given $k$, and $d$, the set of pairs of per node storage capacity and per node repair bandwidth, $(\alpha, \beta)$ achieving the equality in (\ref{eq_DimakisBound}) are the optimal pairs. One could see that there is a trade-off between the two parameters $\alpha$, and $\beta$. Among the possible optimal choices for $\alpha$ and $\beta$ for a given value of $F$, one extreme point referred to as the \emph{minimum storage regenerating} (MSR) codes, could be achieved by first minimizing the per node storage capacity $\alpha$ as 
\begin{align}
\alpha_{\text{MSR}} = \frac{F}{k}, \nonumber
\end{align}
for which the optimal value of $\beta$ will be 
\begin{align}
\beta_{\text{MSR}} = \frac{F}{k(d-k+1)}. \nonumber
\end{align}

The other extreme point is obtained by first minimizing the per node repair bandwidth $\beta$, as $\beta_{\text{MBR}}=\alpha/d$, and hence referred to as the \emph{minimum bandwidth regenerating} (MBR) codes, which results in 
\begin{align}
\alpha_{\text{MBR}} = \frac{2F d}{k(2 d-k+1)}, \nonumber
\end{align}
and hence 
\begin{align}
\beta_{\text{MBR}} = \frac{\alpha}{d} = \frac{2 F}{k(2 d-k+1)}. \nonumber
\end{align}

If any replacement node stores exactly the same data as of the corresponding failed node the regenerating code is called an \emph{exact repair} code, and otherwise it is called a \emph{functional repair} code. The existence of functional repair regenerating codes for any point in the trade-off described by (\ref{eq_DimakisBound}), follows from the equivalence of functional repair regenerating code design problem to that of network coding for single source multicast \cite{Deterministic_Regenerating}. However, exact repair regenerating codes are much more appealing from practical point of view e.g., for the fact that they can be tuned as systematic codes. For both MBR and MSR points exact repair codes exists for different parameters \cite{Exact_R_by_T_1, Permutation, PM_Codes, Exact_AI_Asymptotic, MBR_replication, Coupled_Layer, Zigzag_Codes, ExplicitMSR_Tamo, ExplicitMSR, ExplicitMSR_nearOptimal}.

The adopted models in the regenerating codes' literature are typically considering the \emph{rigid symmetric repair} in which a predetermined number $d\leq n-1$ of helpers participate in the repair scenario, each providing $\beta = \gamma/d$ repair symbols. However, for a distributed storage system, the capability to adapt to various conditions in which data reconstruction and repair could be performed increases the robustness significantly. The varying capability of helpers for participating in an specific repair is not only limited to load unbalance or varying quality of connection, and could rise due to many other practical reasons such as geographical distance among the nodes in the storage network or topological and infrastructure asymmetry of the network. Besides loosing access to storage nodes, it has been shown that data corruption is a common problem in large scale DSS's \cite{DataCorruption}.

In this work we consider two simultaneous extensions to the original formulation of Dimakis \emph{et. al.} \cite{Regenerating}. One extra feature we consider in our setup is that the number of helpers chosen for repair can be adaptively selected, which allows for run-time optimization according to the dynamic state of the system. We refer to this property as \emph{bandwidth adaptivity}. Such flexibility adds a notable robustness to distributed storage systems with time-varying conditions such as peer-to-peer distributed storage systems where storage nodes join and leave the system frequently. It is also an important feature for systems with significant load unbalance or heterogeneous connectivity where the number of available nodes for repair vary for different repair cases. The importance of this feature has been addressed by other researchers \cite{Flexible_Regenerating, Exact_AI_Asymptotic, BWAdapt_Kermarrec, BWAdapt_Opportunistic, ProgressiveEngagement, ExplicitMSR}. Yet none of these works have addressed the MBR exact repair case.

As the second extension to the original regenerating codes, in our model we also encounter the presence of error in the system by adopting an adversarial intruder which can compromise up to a fixed number $b < k/2$ of nodes, and following \cite{ErrRes_Salim_Journal,ErrRes_PM} we will refer to it as \emph{limited power adversarial intruder}. Such intruder is considered to be omniscient, i.e. knows the original data stored in the system and the coding scheme, and can control the data stored in, and being transferred by no more than $b$ nodes under his control.

In \cite{ErrRes_Salim_Journal} an upper bound for the capacity of distributed storage systems is presented in the presence of different intruders including the limited power omniscient intruder described above. Exact repair coding schemes are also proposed for MBR and MSR modes in \cite{ErrRes_Salim_Journal,ErrRes_PM} to achieve this upper bound. However, none of these results consider bandwidth adaptivity.


Note that the code design for bandwidth adaptive functional repair regenerating codes simply reduces to the network coding for single source multicast problem. However, for the exact repair regenerating codes, this problem has only been considered for the MSR case, and a solution is provided in \cite{Exact_AI_Asymptotic} based on interference alignment, which achieves optimality in asymptotically large per node storage capacity $\alpha$, and repair bandwidth $\beta$. Recently, a few other bandwidth adaptive exact repair MSR codes have been introduced which provide optimality with finite values for $\alpha$, and $\beta$ \cite{ExplicitMSR, BAMSR, BAMSR_ITW17}.

In this work we focus on the natural extension of MBR mode with bandwidth adaptivity and error resiliency, and present exact repair coding schemes which we show are optimal. To the best of our knowledge, this work is the first non-asymptotic exact repair code construction for such a setting. The main contributions of this work are explained in Section \ref{Sec_MainResults}, after briefly reviewing the related works and formally defining the setup in the next two sections.

 
\section{Background and Related Works}

\subsection{Background on Error Resiliency}\label{Sec_ErrRes}

The introduction of errors into the data stored or transmitted over the network has been the subject of interest for a long time. In applications such as the distributed storage systems, where the integrity of the stored data is the highest priority, any practical scheme has to encounter an appropriate mechanism for dealing with the issue of introducing and propagating errors. While the source of error can also be non-adversarial, such as random errors introduced naturally through the I/O process in the storage nodes or during the data transmission, in order to provide performance guarantee levels it is common to consider adversarial models for the error source. In this work we consider a malicious intruder controlling a subset of storage nodes in the system as considered in \cite{ErrRes_Salim_Journal, ErrRes_Salim_Conf, ErrRes_Polytope, ErrRes_PM, ErrRes_Byzentine_MultiRep, ErrRes_Byzentine_Infocom, ErrRes_Byzentine_TCOM}. The intruder model in these works, which has also been referred to as the "Byzentine intruder", will be referred to as \emph{"limited power omniscient adversary"} in this work. In this model, the adversarial intruder could control anything about the data transmitted from, or stored in a fixed number, namely $b$, of the storage nodes, and has complete knowledge of the coding scheme and the encoded message. The formal definition of a reliable and error resilient distributed storage system under such a limited power omniscient adversary is given as follows.

\begin{defn}[Error Resilient Regenerating Code]\label{dfn_BasicERRCodes}
An error resilient regenerating code, $\mathcal{C}(n,$ $k,$ $d,$ $b,$ $\alpha,$ $\gamma)$, is a regenerating code with $n$ nodes, among which at most $b$ nodes provide adversarial compromised data. The code should be capable of performing two operations; i) Genuine repair for any failed node, in which the genuine content of the failed node is regenerated by accessing any $d$ remaining nodes, and downloading $\beta=\gamma/d$ repair symbols from each of them. ii) Genuine data reconstruction, in which the whole stored data is reconstructed by accessing the content of any $k$ nodes.
\end{defn}

In \cite{ErrRes_Salim_Journal}, an upper bound was derived for the total storage capacity of an error resilient regenerating code, which is a network version of the Singleton bound \cite{Singleton2, Singleton}. Note that however, the model considered in \cite{ErrRes_Salim_Journal} does not require the repair procedure to be error-free, and hence allows propagation of error during the repair procedure. The following theorem rephrases this upper bound.

\begin{thm}\label{Thm_ErrRes_Cap_UB}\cite{ErrRes_Salim_Journal}
The total storage capacity of any error resilient regenerating code, $\mathcal{C}(n,$ $k,$ $d,$ $b,$ $\alpha,$ $\gamma)$, is upper bounded as follows.
\begin{align}\label{eq_ErrRes_Cap_UB}
F \leq \sum_{i=2b}^{k-1}{\min \left\lbrace \alpha,(d-i)\frac{\gamma}{d}\right\rbrace }. 
\end{align}
\end{thm}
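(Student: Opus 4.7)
The plan is to combine a Singleton-type error-correction reduction with the cut-set argument on the information flow graph from \cite{Regenerating}.

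\textbf{Step 1 (Singleton reduction).} I would first establish that the stored contents of any $k - 2b$ storage nodes must uniquely determine the source message $M$. Arguing by contradiction, suppose two distinct messages $m_1 \neq m_2$ produced identical stored contents on a set $T$ of $k - 2b$ nodes. Pick any $k$-superset $S \supseteq T$ and partition $S \setminus T$ into two disjoint subsets $A$ and $B$, each of size $b$. When the true source is $m_1$, the omniscient adversary compromises the nodes in $B$ and overwrites their outputs to match the $m_2$-contents; when the true source is $m_2$, it symmetrically compromises $A$ to match the $m_1$-contents. A data collector querying $S$ then receives identical observations on $T$, $A$, and $B$ in both scenarios, violating the genuine data reconstruction requirement of Definition \ref{dfn_BasicERRCodes}. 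Hence $F = H(M) \leq H(X_T)$ for every such $T$.

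\textbf{Step 2 (Cut-set evaluation).} Next I would upper bound $H(X_T)$ by the min-cut from the source to $T$ in an information flow graph set up as in \cite{Regenerating}. I take a sequence of $k$ repair events producing storage nodes $v_0, v_1, \ldots, v_{k-1}$, with $v_i$ using as many of its predecessors as helpers as possible and completing the complement of $d$ helpers from external original nodes, and choose $T = \{v_{2b}, \ldots, v_{k-1}\}$. The contribution of each $v_i \in T$ to the min-cut is bounded by either its $\alpha$-capacity storage edge or by its incoming repair edges carrying information not already accounted for by earlier chosen nodes. Combined with a Singleton-style accounting at every repair, the per-node contribution evaluates to $\min\{\alpha, (d - i)\gamma/d\}$, and summing over $i = 2b, \ldots, k-1$ gives the stated bound in~(\ref{eq_ErrRes_Cap_UB}).

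\textbf{Main obstacle.} The delicate point is the cut accounting of Step 2. A naive application of the Dimakis cut to a $(k-2b)$-subset in the genuine flow graph produces only the weaker per-node term $\min\{\alpha, (d - i + 2b)\gamma/d\}$, because the $2b$ predecessors of $v_i$ that lie outside $T$ still appear to supply helper bandwidth. Tightening the term to $(d-i)\gamma/d$ requires pushing the Singleton argument inside every repair event along the chain: up to $b$ of the $d$ helpers feeding any repair may be adversarial, which together with the $b$ parity slots needed for error correction forces each repair to effectively ``spend'' $2b$ of its $d$ helper slots on adversary tolerance rather than on conveying fresh source information. Propagating this reduction consistently through the chain, compatibly with the error-propagating repair model of \cite{ErrRes_Salim_Journal}, is the technical crux and where most of the work would lie.
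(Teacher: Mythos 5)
Your Step 1 is sound and reproduces (for data reconstruction) exactly the argument behind the paper's Lemma \ref{Lem_Sufficiency_of_a-2b_nodes}, and the cut-set framework you sketch in Step 2 is the right vehicle. But as you yourself flag, Step 2 as written does not deliver the stated bound, and the obstacle is not a bookkeeping subtlety to be pushed through — it signals a missing lemma. If you build an ordinary length-$k$ chain of $d$-helper repairs and then cut around $T=\{v_{2b},\ldots,v_{k-1}\}$, the predecessors $v_0,\ldots,v_{2b-1}$ still feed live repair edges into every node of $T$, so every cut of that graph inherits their contribution and you are stuck at $\min\{\alpha,(d-i+2b)\gamma/d\}$; the flow graph itself is simply too rich.

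The resolution used in the paper (proof of Lemma \ref{Lem_Converse}) is to apply your Step-1 two-message/adversary-swap contradiction to the \emph{repair} operation as well, not only to data reconstruction: the same argument shows that the repair data supplied by any $d-2b$ of the $d$ selected helpers must already determine the regenerated content. This is the second half of Lemma \ref{Lem_Sufficiency_of_a-2b_nodes}. With that in hand one passes to a \emph{genie-aided} information flow graph in which a genie deletes $2b$ of the $d$ incoming repair edges at every replacement node (and $2b$ of the $k$ download edges); by the repair-side Singleton lemma this graph surgery cannot increase capacity. What survives is a standard regenerating-code flow graph with effective parameters $d'=d-2b$, $k'=k-2b$, per-edge capacity $\beta=\gamma/d$, and storage $\alpha$, so the usual chain-of-repairs cut yields $\sum_{j=0}^{k-2b-1}\min\{\alpha,(d-2b-j)\gamma/d\}$, which after reindexing $i=j+2b$ is exactly (\ref{eq_ErrRes_Cap_UB}). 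Your intuition that each repair must ``spend'' $2b$ of its $d$ helper slots is correct; what is missing is the formal repair-side Singleton statement and the genie reduction that realizes it at the level of the flow graph before the cut is evaluated.
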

%

Moreover, \cite{ErrRes_Salim_Journal} also provides an explicit exact repair code construction for the MBR case in the special case of $d = n-1$ based on the exact repair regenerating code introduced in \cite{Exact_R_by_T_1}, which achieves the total storage capacity upper bound presented in Theorem \ref{Thm_ErrRes_Cap_UB}. However, \cite{ErrRes_Salim_Journal} leaves this as an open problem whether or not the upper bound of (\ref{eq_ErrRes_Cap_UB}) is tight for exact repair error resilient regenerating codes in any other case. 

Rashmi \emph{et. al.} \cite{ErrRes_PM, ErrRes_PM_Journal} later included the error-free repair mechanism in the model considered in \cite{ErrRes_Salim_Journal}, and showed that the upper bound of (\ref{eq_ErrRes_Cap_UB}) is tight for the MBR exact repair regenerating codes with any choice of fixed parameters $k$, and $d\geq k$. They also proved the tightness of upper bound for MSR exact repair regenerating codes with $d \geq 2(k-1)$ based on the PM exact repair regenerating codes introduced in \cite{PM_Codes}. In the presented coding scheme for MBR and MSR cases in  \cite{ErrRes_PM_Journal}, the number of accessed nodes for repair and data reconstruction procedures can be chosen in the run-time. As a result by increasing the number of accessed nodes, and proportionally increasing the total required data transmission, the maximum number of erroneous nodes against which the procedure remains secure also increases. This concept is referred to as \emph{"on-demand security"}. However, this is different from the bandwidth adaptivity property, presented in our work, in which, the maximum number of erroneous nodes is predefined, and increasing the number of accessed nodes helps by reducing the total repair bandwidth.

In \cite{ErrRes_MRD_Silberstein}, the omniscient adversary is considered to be able to replace the content of an affected node only once, and the total storage capacity bounds and achievable schemes for this setting are provided. The paper also considers the MSR setting without restrictions on the number of times that adversary can compromise the contents of affected nodes, and provides schemes that are optimal for a specific choice of the parameters.

In \cite{ErrRes_Byzentine_Infocom, ErrRes_Byzentine_TCOM}, a similar setup is considered and a "progressive decoding" mechanism is introduced to reduce the computational complexity of repair and reconstruction procedures. These papers use a cyclic redundancy check (CRC) to confront errors in the repair and reconstruction procedures. However, CRC based schemes have this problem that the CRC may also be compromised by the adversary in the omniscient adversarial model. 

Many other researchers have also considered the error resiliency problem in regenerating codes along with other properties \cite{ErrRes_Byzentine_MultiRep, ErrRes_UpdateEfficient, ErrRes_LimittedKnowledge, ErrRes_Polytope, ErrRes_Hermit, ErrRes_RateMatch, ExplicitMSR}. For instance, \cite{ErrRes_Byzentine_MultiRep} considered multiple failure repair with cooperation among replacement nodes, and shows such cooperation could be adversely affective in the presence of an intruder. Efficiency in updating the stored data is considered in \cite{ErrRes_UpdateEfficient}, while \cite{ErrRes_LimittedKnowledge} considered limitations on the knowledge of the intruder. In \cite{ErrRes_Polytope} Polytope codes are used to provide error resiliency. Recently, \cite{ExplicitMSR, ErrRes_RateMatch} presented new error resilient coding schemes to address the maximum storage capacity in presence of an intruder. The schemes presented in \cite{ExplicitMSR} also achieve bandwidth adaptivity in the MSR case. In this work we consider bandwidth adaptivity along with the error resiliency, and focus on the coding schemes that achieve the minimum repair bandwidth.

%

\subsection{Background on Bandwidth Adaptive Repair}

The adopted models in the regenerating codes' literature are typically considering the \emph{rigid symmetric repair} in which a fixed number $d\leq n-1$ of helpers participate in the repair scenario, each providing $\beta = \gamma/d$ repair symbols. However, in practical systems, run-time optimization is of great interest. In particular, for a distributed storage system, the capability of adaptation to various conditions such that optimal repair procedure could be performed provides a lot of robustness. Such capabilities enables the system to maintain its functionality when some of the nodes are not able to provide enough repair bandwidth, e.g. due to being overloaded by an unbalanced load or due to temporarily loosing the quality of their links. In such conditions, it is valuable if the coding scheme is capable of performing the repair by accessing more helpers and receive less repair data from each. With such capability, we do not need to ignore the nodes with lowered capability of service and can still let them participate as much as they can, which could provide a potentially significant collective gain compared to the rigid setting. We will also show that accessing large number of helpers can even reduce not only the per node but also the total repair bandwidth, which in turn reduces the overall traffic load in such conditions. 

Similarly, we are interested to be able to perform repair based on the repair data provided by a small number of helpers as long as they are able to provide enough information to compensate for the absence of more helpers. Such conditions happen when a large number of nodes are inaccessible or have poor channel qualities but instead a few strong helpers are available. Then the coding scheme could ignore the poor helpers and expedite the repair based on the small group of strong helpers.

We will refer to a regenerating code capable of such flexibility in repair as a bandwidth adaptive regenerating code. Note that the dynamic capability of service for storage nodes is a well-known characteristic for many practical distributed systems such as peer-to-peer systems or heterogeneous network topologies \cite{SpaceMonkey, Tahoe, CFS, Google, OceanStore, TotalRecall}.

It was first in \cite{Flexible_Regenerating} that Shah \emph{et. al.} extend the regenerating code design problem in \cite{Regenerating_Conf, Regenerating} to include more flexibility. In \cite{Flexible_Regenerating} the authors consider the number of participating helpers to be selected independently in each repair or reconstruction. Moreover, they also relax the constraint of downloading the same amount of information from each node in both repair and reconstruction to allow \emph{asymmetric} participation of helpers, as long as all helpers contribute less than a fixed upper bound, $\beta_{\max}$, for repair. While the setting considered by Shah \emph{et. al.} provides much more flexibility in repair and reconstruction, the total repair bandwidth in their setting is always larger than that of the original regenerating codes formulations, except for the MSR case, where both settings achieve the same total repair bandwidth. Moreover, the coding scheme presented in \cite{Flexible_Regenerating} performs functional repair. 

The first work to address bandwidth adaptivity in the original setting of regenerating codes was \cite{BWAdapt_MFR}. In \cite{BWAdapt_MFR} Wang \emph{et. al.} introduced a functional repair coding scheme which works in the MSR mode and supports bandwidth adaptivity. Later \cite{BWAdapt_Kermarrec} also considered a similar setup and introduced functional repair MSR coding schemes with bandwidth adaptivity, while their main focus was on the derivation of the storage-repair-bandwidth trade-off for the functional \emph{coordinated} repair in regenerating codes. Note, however, that none of these works address the exact repair with bandwidth adaptivity in regenerating codes. 

Aggrawal \emph{et. al.} \cite{BWAdapt_Opportunistic} analysed the mean-time-to-failure (MTTF) in the regenerating codes with and without bandwidth adaptivity. This analysis is based on a birth-death process model in which the population of available storage node randomly changes with appropriately chosen rates. They showed that bandwidth adaptivity provides a significant gain in terms of MTTF. 
%

In exact repair regenerating codes, Cadambe \emph{et. al.} were the first to address the bandwidth adaptivity as an important property for regenerating codes in \cite{Exact_AI_Asymptotic}. They presented an exact repair coding scheme for the MSR mode with bandwidth adaptivity in the repair procedure, based on Interference Alignment. The code presented by Cadambe \emph{et. al.} is the first exact repair regenerating code with bandwidth adaptivity, however, their coding scheme only asymptotically achieves the optimal trade-off, when $\alpha$ and $\beta$ tend to infinity with proper ratio. Recently, bandwidth adaptive exact repair regenerating codes have been introduced for various parameters in the MSR case \cite{ExplicitMSR, BAMSR, BAMSR_ITW17}. In this work we focus on the MBR setting with error resiliency.
\section{Model and Results}\label{Section_Symmetric_Model}

\subsection{Model}
In this section we will briefly introduce the setup for a \emph{bandwidth adaptive and error resilient} (BAER) distributed storage system and the coding scheme of our interest. This model is a modified version of the original regenerating code's setup introduced in \cite{Regenerating_Conf, Regenerating}.

Throughout this work we consider a predefined finite filed, $\mathbb{F}_{q}$ of size $q$, as the code alphabet, such that all the symbols and operations through the network belong to $\mathbb{F}_q$.

\begin{defn}[BAER Regenerating Code and Flexibility Degree]
Consider the set of parameters $n$, $k$, $D=\{d_{1}, \cdots , d_{\delta} \}$, $b$, $\alpha$, and a \emph{total repair bandwidth function} $\gamma: D \rightarrow [\alpha,\infty)$. A BAER regenerating code $\mathcal{C}(n$, $k$, $D$, $b$, $\alpha$, $\gamma(\cdot))$ is a regenerating code with per node storage capacity $\alpha$, which performs repair and data reconstruction processes in a distributed storage network of $n$ nodes, when up to $b$ out of $n$ nodes are allowed to be erroneous, and provide adversarial data. Moreover, in any repair process the number of helpers, $d$, can be chosen arbitrarily from the set $D$. The choice of helper nodes is also arbitrary and each of the chosen helpers then provides $\gamma(d) / d$ repair symbols. Similarly, in any data reconstruction process the data collector accesses any arbitrary set of $k$ nodes and downloads $\alpha$ symbols from each. Moreover, the number of elements in the set $D$ is referred to as flexibility
degree of the code, and is denoted by $\delta$.
\end{defn}

\begin{rmrk}
Note that when erroneous nodes are present in the system the repair process should prevent the propagation of the errors. In other words, the repair of any node should replace that with a node that stores genuine data, which is the data that the coding scheme would have stored in the replacement node if no compromised node exists in the network. 
\end{rmrk}

\begin{defn}[Total Storage Capacity] 
For the set of parameters $n$, $k$, $D=\{d_{1}, \cdots , d_{\delta}\}$, $b$, $\alpha$, and a given function $\gamma: D \rightarrow [\alpha,\infty)$, the \emph{total storage capacity} of a BAER distributed storage system is the maximum size of the file that could be stored in a network of $n$ storage nodes with per node storage capacity $\alpha$, using a BAER regenerating code $\mathcal{C}(n$, $k$, $D$, $b$, $\alpha$, $\gamma(\cdot))$. We will denote the total storage capacity of such a system by $F(n$, $k$, $D$, $b$, $\alpha$, $\gamma(\cdot))$, or simply $F$, whenever the parameters of the system could be inferred from the context. 
\end{defn}

\begin{defn}[Optimal Codes]
Consider a set of parameters $n$, $k$, $D=\{d_{1}, \cdots , d_{\delta}\}$, $b$. For a given total storage capacity $F$, a BAER regenerating code $\mathcal{C}(n$, $k$, $D$, $b$, $\alpha$, $\gamma(\cdot))$ is optimal, if it realizes the total storage capacity $F$, and for any other BAER regenerating code $\mathcal{C}'(n$, $k$, $D$, $b$, $\alpha'$, $\gamma'(\cdot))$, realizing total storage capacity $F$, with,
\begin{align}
\alpha' \leq \alpha , \nonumber
\end{align}
we have
\begin{align}
\exists{d \in D},~ \gamma(d) < \gamma'(d). \nonumber
\end{align}
\end{defn}


\begin{rmrk}\label{Rmrk_no_per_node_redundancy}
It is obvious that for any optimal coding scheme, there exists no redundancy among the symbols stored in a single storage node. In other words, none of the symbols stored in a single storage node in an optimal BAER regenerating code can be calculated as a function of other symbols, otherwise, removing that symbol will reduce $\alpha$, while all the repair and data reconstruction procedures are still possible with the same data transmission as before. Through the rest of this work, we will only consider optimal BAER regenerating codes and hence assume there exists no per node redundancy.
\end{rmrk}

It worth to mention that optimal BAER codes do not necessarily exist for all sets of parameters $n$, $k$, $D=\{d_{1}, \cdots , d_{\delta}\}$, $b$, $\alpha$, $\gamma(\cdot)$. For example having a large $\alpha$ and a very small $\gamma(d)$ for some $d \in D$, makes genuine repair with $d$ helper impossible if no per node redundancy is allowed. This will be studied in details in the following sections and we will characterize the set properties of the sets of parameters for which optimal BAER codes exist.

Since there are many parameters involved in the presented setting, hereafter we will consider $n$, $k$, $D$, $b$, $\alpha$ to be fixed and mainly focus on exploring the tension between $F$ and $\gamma(\cdot)$. However, the results of this work still capture the overall trade-off between all the parameters. The following definitions set the scene for studying the tension between $F$ and $\gamma(\cdot)$, for fixed values of other parameters.

\begin{defn}[Set $\mathcal{O}_{\alpha}$, and Set $\Gamma_{\alpha}$]
For a given set of parameters $n$, $k$, $D=\{d_{1}, \cdots , d_{\delta}\}$, $b$, and a fixed per node storage capacity $\alpha$, we define the set of all optimal BAER regenerating codes with the same per node storage capacity $\alpha$ as $\mathcal{O}_{\alpha}$. Moreover, we denote the set of all the total repair bandwidth functions pertaining to optimal codes in $\mathcal{O}_{\alpha}$ by $\Gamma_{\alpha}$. In other words,
\begin{align}
\Gamma_{\alpha} = \{\gamma(\cdot) | \mathcal{C}'(n, k, D, b, \alpha, \gamma(\cdot))\in \mathcal{O}_{\alpha} \}. \nonumber
\end{align}
\end{defn}

Note that from Remark \ref{Rmrk_no_per_node_redundancy} we know that all codes in $\mathcal{O}_{\alpha}$ have zero per node redundancy. Moreover, the optimal codes in $\mathcal{O}_{\alpha}$ may have different total storage capacities, $F$, and total repair bandwidth functions, $\gamma(\cdot)$. The following definition addresses this set.

As mentioned above, for a given set of parameters $n$, $k$, $D$, $b$, and $\alpha$, we may have a set of optimal BAER regenerating codes  $\mathcal{O}_{\alpha}$, with different total repair bandwidth functions, $\gamma(\cdot)$. As a result, $\gamma$ can be optimized for any given $d\in D$. The natural question is whether the minimal values of $\gamma(d)$ can be simultaneously achieved for all values of $d\in D$ in an optimal BAER code. While optimal BAER regenerating codes correspond to the Pareto optimal functions $\gamma(\cdot)$, in this work we consider the strongest definition for optimality as will be introduced in the following definitions. Surprisingly, we will show that such strong optimality is achievable for the minimum repair bandwidth BAER regenerating codes.

\begin{defn}[MBR BAER Codes]\label{Def_MBR_mode}
For a set of parameters $n$, $k$, $D$, $b$, $\alpha$, the MBR BAER code is an optimal BAER regenerating code with total repair bandwidth function $\gamma_{\text{MBR}}(\cdot)$, such that
\begin{align}
\gamma_{\text{MBR}}(d) = \min_{\gamma(\cdot) \in \Gamma_{\alpha}}{\gamma(d)}. \nonumber
\end{align}
In other words, $\gamma_{\text{MBR}}(d)$ is the minimum possible repair bandwidth for all values of $d \in D$, among all optimal BAER codes with fixed parameters $n$, $k$, $D$, $b$, and $\alpha$. We also denote the total storage capacity associated with parameters $n$, $k$, $D$, $b$, $\alpha$, and $\gamma_{\text{MBR}}(\cdot)$, by $F_{\text{MBR}}(n$, $k$, $D$, $b$, $\alpha$, $\gamma_{\text{MBR}}(\cdot))$, or $F_{\text{MBR}}$ for short. Moreover, the BAER regenerating codes with parameters $n$, $k$, $D$, $b$, $\alpha$, and $\gamma_{\text{MBR}}(\cdot)$, realizing $F_{\text{MBR}}$ are referred to as MBR BAER codes.
\end{defn}

Note that since the $\gamma_{\text{MBR}}(\cdot)$ is defined to be the point-wise minimum of the total repair bandwidth functions in $\Gamma_{\alpha}$, there is no guarantee that there exists a non-trivial MBR BAER code for a given set of parameters. In this work we will show that non-trivial MBR BAER codes exists for a wide range of parameters.


\subsection{Main Results}\label{Sec_MainResults}

In this work we focus on the MBR mode with exact repair. Considering the set of parameters $n$, $k$, $b$, $\alpha$, and a set $D=\{d_{1}, \cdots, d_{\delta}\}$, for some flexibility degree $\delta > 1$, such that
\begin{align}\label{eq_d_i_Cond}
2 b < k \leq d_{1} \leq \cdots \leq d_{\delta},
\end{align}
and 
\begin{align}\label{eq_PernodeCap}
\alpha = \textrm{lcm}(d_{1} - 2 b , \cdots , d_{\delta} -  2 b) a,
\end{align}
for some integer $a \geq 1$. Through this work we will use the notation $d_{\min} = d_{1}$ to emphasise the fact that this is the smallest element is the set $D$.

\begin{rmrk}
Regarding the constraint (\ref{eq_PernodeCap}) on the per node storage capacity, note that in practice the per node storage capacity is usually very large, e.g. a few Terabytes, hence this constraint is not adding any practical limitation.
\end{rmrk}

We first characterize the minimum total repair bandwidth function such that the storage capacity is positive, and hence determine the MBR mode for the described setting. The following theorem describes this result.

\begin{thm}\label{THM_MBR_OptimalRBW}
Consider a BAER regenerating code with parameters $n$, $k$, $b$, $\alpha$, and the set $D=\{d_{1}, \cdots , d_{\delta}\}$ satisfying conditions (\ref{eq_d_i_Cond}), and (\ref{eq_PernodeCap}). Then assuming zero per node redundancy, the minimum total repair bandwidth function is given by 
\begin{align}\label{eq_MBR_Gamma_BAER_code}
\gamma_{\text{MBR}}(d) = \frac{\alpha d}{d-2b},~~ \forall{d \in D}.
\end{align}
\end{thm}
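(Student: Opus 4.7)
The plan is to establish Theorem \ref{THM_MBR_OptimalRBW} in two directions: a converse argument showing that $\gamma(d) \geq \alpha d/(d-2b)$ is necessary for any BAER regenerating code with zero per-node redundancy, and an achievability claim that this bound is simultaneously attainable across all $d \in D$. The converse is the substantive part; the achievability will ultimately rely on the explicit PM-style constructions developed later in the paper, where the hypothesis $\alpha = \operatorname{lcm}(d_1-2b, \ldots, d_\delta-2b)\,a$ guarantees that $\beta(d) = \alpha/(d-2b)$ is an integer for every $d\in D$ and hence that the fractional repair bandwidth corresponds to a legitimate symbol count.

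For the converse, I would fix an arbitrary $d\in D$ and focus on a single repair event using $d$ helpers, each sending $\beta = \gamma(d)/d$ symbols, at most $b$ of which are under adversarial control. The core idea is a standard Hamming-type indistinguishability argument adapted to the regenerating setting: if the $\alpha$ genuine symbols of the replacement node are \emph{not} determined by every subset of $d-2b$ helper responses, then one can exhibit two genuine source configurations $\mathbf{x} \neq \mathbf{x}'$ and two sets $B, B'$ of $b$ helpers each such that modifying the helpers in $B$ under configuration $\mathbf{x}$ produces exactly the same $d$ received vectors as modifying the helpers in $B'$ under configuration $\mathbf{x}'$. The replacement node could then not tell which scenario is real, violating the genuine-repair requirement of Definition \ref{dfn_BasicERRCodes}. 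This forces the newcomer's $\alpha$ symbols to be a function of any chosen $d-2b$ helper messages.

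Combining this with zero per-node redundancy (Remark \ref{Rmrk_no_per_node_redundancy}), which guarantees that the $\alpha$ symbols of the replacement node carry $\alpha$ information-theoretically independent coordinates, the $(d-2b)\beta$ symbols delivered by any clean sub-collection of helpers must be sufficient to determine those $\alpha$ independent coordinates. Hence
\begin{equation*}
(d-2b)\,\beta \;\geq\; \alpha \qquad \Longrightarrow \qquad \gamma(d) \;=\; d\beta \;\geq\; \frac{\alpha d}{d-2b},
\end{equation*}
which matches \eqref{eq_MBR_Gamma_BAER_code}. Because the argument applies to each $d\in D$ independently, the bound is simultaneous, and under the hypothesis $k > 2b$ together with $d\geq k$ the denominator is positive and the bound is finite.

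The step I expect to be the most delicate is making the indistinguishability argument fully rigorous in a coding-theoretic (not merely algebraic) language, since the code is not assumed linear and the adversary is omniscient. The cleanest route is to phrase it as follows: treat each helper's response as a supersymbol of alphabet $|\mathbb{F}_q|^\beta$, interpret the map from genuine configurations (consistent with the $d$ helpers' stored content and the newcomer's target) to the $d$-tuple of received supersymbols as an encoder, and invoke the classical Singleton/minimum-distance condition that unique decoding under $b$ symbol errors requires minimum distance $\geq 2b+1$, so that any $d-2b$ supersymbol coordinates suffice to identify the encoded newcomer content. Once that reduction is in place, the dimension counting in the previous paragraph completes the converse, and achievability is imported from the constructions in the sequel.
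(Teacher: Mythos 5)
Your proposal matches the paper's proof: the indistinguishability argument you sketch (an adversary controlling $b$ helpers making two source configurations produce identical received tuples) is precisely the paper's Lemma~\ref{Lem_Sufficiency_of_a-2b_nodes}, which forces any $d-2b$ helpers' repair data to determine the newcomer's $\alpha$ symbols, and then zero per-node redundancy gives $(d-2b)\beta(d)\geq\alpha$ exactly as you argue. The paper likewise closes the gap by importing achievability from the explicit PM-based constructions (via Corollary~\ref{Cor_Achievable}), so your route is essentially the same.
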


The proof of this result is provided in Section \ref{Sec_Converse}.

We also determine the total storage capacity of a BAER regenerating code in the MBR mode, and provide an upper bound for the total storage capacity of the general mode. We summarize these results in the following theorem.

\begin{thm}\label{THM_MBR_Capacity}
Consider a BAER regenerating code with parameters $n$, $k$, $b$, $\alpha$, and the set $D=\{d_{1}, \cdots , d_{\delta}\}$ satisfying conditions (\ref{eq_d_i_Cond}), and (\ref{eq_PernodeCap}). Using the notation $d_{\min}= d_{1}$, for any MBR BAER code, as introduced in Theorem \ref{THM_MBR_OptimalRBW} with the total repair bandwidth function, 
\begin{align}
\gamma_{\text{MBR}}(d) = \frac{\alpha d}{d-2b},~\forall{d\in D}, \nonumber
\end{align}
the total storage capacity is given by
\begin{align}\label{eq_MBR_Capacity}
F_{\text{MBR}} &= \sum_{j=0}^{k - 2 b -1}{(d_{\min} - 2 b - j)\frac{\alpha}{(d_{\min} - 2 b)}} \nonumber \\
&=\frac{\alpha}{d_{\min}-2 b}(k-2 b)\left(d_{\min}-b-\frac{k-1}{2}\right).
\end{align}
Moreover, the following upper bound holds for the total storage capacity of any BAER regenerating code, associated with the arbitrary total repair bandwidth function $\gamma(\cdot)$.
\begin{align}\label{eq_F_UpperBound}
F \leq \sum_{j=0}^{k - 2 b -1}{\min\left(\alpha,\min_{d\in D}\left((d - 2 b - j)\frac{\gamma(d)}{d}\right)\right)}.
\end{align}
\end{thm}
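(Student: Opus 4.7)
The plan is to establish the general upper bound (\ref{eq_F_UpperBound}) first and then obtain the MBR capacity formula (\ref{eq_MBR_Capacity}) as a specialisation of it; the matching achievability for $F_{\mathrm{MBR}}$ is delivered by the explicit modified Product Matrix constructions presented later in the paper, so only the converse is treated here.

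For the upper bound I would adapt the information flow graph argument of \cite{Regenerating} to the bandwidth-adaptive and error-resilient setting. Every storage node $v$ is split into an input node $x^{\mathrm{in}}_v$ and an output node $x^{\mathrm{out}}_v$ joined by an edge of capacity $\alpha$, and a repair of $v$ that uses $d_v\in D$ helpers is modelled by edges of capacity $\gamma(d_v)/d_v$ from the output node of each helper into $x^{\mathrm{in}}_v$. A virtual source feeds the $n$ initial nodes and each data collector attaches to the $k$ output nodes it queries. Following the Dimakis-style sequential failure topology together with the error-resilience reduction of \cite{ErrRes_Salim_Journal}, I would consider a collector scenario in which the $k$ queried nodes are produced via a sequential failure/repair process, the first $2b$ of them are declared adversarial (providing no genuine information), and for every $j\in\{0,1,\dots,k-2b-1\}$ the $(2b+j+1)$-th node is repaired using a chosen $d_j\in D$. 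At level $j$ at most $2b+j$ of those $d_j$ helpers lie in the already-cut set, so at least $d_j-2b-j$ fresh repair edges of capacity $\gamma(d_j)/d_j$ must cross, capped by the $\alpha$-edge through the node itself. Summing the per-level cut contributions yields
\begin{align}
F \;\le\; \sum_{j=0}^{k-2b-1} \min\!\left(\alpha,\; (d_j-2b-j)\frac{\gamma(d_j)}{d_j}\right).\nonumber
\end{align}

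Because any BAER code must repair correctly for every admissible choice of $d\in D$ at every step, the inequality above is valid for every tuple $(d_0,\dots,d_{k-2b-1})\in D^{k-2b}$. Taking the term-wise infimum and using the identity $\min(\alpha,\min_{d\in D}f(d))=\min_{d\in D}\min(\alpha,f(d))$ (valid because $\alpha$ does not depend on $d$) produces exactly (\ref{eq_F_UpperBound}). For the MBR specialisation, substituting $\gamma_{\mathrm{MBR}}(d)=\alpha d/(d-2b)$ turns the $d$-dependent quantity into $\alpha\bigl(1-j/(d-2b)\bigr)$, which is monotone nondecreasing in $d$; hence the inner $\min_{d\in D}$ is attained at $d_{\min}$, and the hypothesis $d_{\min}\ge k>2b$ guarantees $j<d_{\min}-2b$, so this value never exceeds $\alpha$ and the outer $\min$ picks it as well. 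Collapsing the resulting arithmetic progression gives the closed-form expression in (\ref{eq_MBR_Capacity}).

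The main obstacle is the cut computation in the second paragraph: rigorously justifying that every tuple $(d_0,\dots,d_{k-2b-1})\in D^{k-2b}$ is realised by an admissible failure/repair sequence in the modified information flow graph, and confirming that the cut I single out is indeed the relevant min-cut from source to collector after the $2b$ corrupted nodes are subtracted in the sense of \cite{ErrRes_Salim_Journal}. Realisability reduces to $n-1\ge d_{\max}$ and $d_{\min}\ge k>2b$, both supplied by (\ref{eq_d_i_Cond}); min-cut optimality requires a careful layer-by-layer analysis analogous to \cite{Regenerating, ErrRes_PM_Journal}, ensuring no alternative cut has smaller value across any admissible topology. Once the correct cut structure is in place, the remaining manipulations are essentially routine.
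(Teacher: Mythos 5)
Your proposal matches the paper's argument: the paper proves Theorem~\ref{THM_MBR_Capacity} by combining Corollary~\ref{Cor_Achievable} (achievability via the Product Matrix constructions) with Lemma~\ref{Lem_Converse}, whose proof in Appendix~\ref{App_ConverseLemma} constructs exactly the genie-aided information-flow graph you describe, with a cascading failure/repair topology, a per-level cut comparing $\alpha$ against $(d-2b-j)\gamma(d)/d$, and a term-wise minimisation over independently chosen $d\in D$, followed by the same specialisation to $\gamma_{\text{MBR}}$ at $d=d_{\min}$. The only cosmetic difference is that the paper routes the error-resilience reduction through Lemma~\ref{Lem_Sufficiency_of_a-2b_nodes} (the genie restricts to a genuine $(d-2b)$-subset of helpers and a $(k-2b)$-subset for download) rather than declaring $2b$ collector nodes adversarial, and your worry about min-cut optimality is unnecessary since any cut already upper-bounds the flow; otherwise the count and the bound are identical.
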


Note that the results in the above theorems hold for both the exact repair as well as the functional repair.

The upper bound (\ref{eq_F_UpperBound}) is derived based on two modifications in the standard information flow graph which was originally introduced in \cite{Regenerating}; \emph{i)} Considering a \emph{genie} assisting the decoder in each repair or data reconstruction by determining a subset of accessed nodes of size $2b$ which contains all the compromised nodes among the selected ones, and \emph{ii)} Allowing the number of helper nodes $d \in D$ to change independently in every term of the right-hand-side expression to minimize the resulting upper bound. The detailed proof is provided in Section \ref{Sec_Converse} and Appendix \ref{App_ConverseLemma}. 

In this wrok we provide two achievablity schemes in the MBR mode based on an extension of the MBR PM regenerating codes. These schemes are described in Section \ref{Sec_CodeConstruction}, and Section \ref{Sec_Coding_Sch_II}. More specifically, we show that universally optimal exact repair BAER regenerating code exists for the MBR mode by providing explicit code constructions. To the best of our knowledge the only other bandwidth adaptive exact repair regenerating code constructions presented so far are MSR codes \cite{Exact_AI_Asymptotic, ExplicitMSR, BAMSR_ITW17, BAMSR}. In this work we present two different schemes for the bandwidth adaptive and error resilient repair procedure in the MBR mode. Both schemes are capable of performing exact repair. The first scheme has a larger code alphabet field size requirement while the second scheme reduces the field size requirement to that of the MBR PM codes, while has larger per node storage capacity requirement.

\section{First Coding Scheme}\label{Sec_CodeConstruction}

For an arbitrary flexibility degree $\delta > 1$, we introduce an exact repair MBR BAER code $\mathcal{C}(n$, $k$, $D$, $b$, $\alpha$, $\gamma(\cdot))$, for $D = \{d_{1}, \cdots , d_{\delta}\}$, where the parameters satisfy the conditions given in (\ref{eq_d_i_Cond}), and (\ref{eq_PernodeCap}). This coding scheme achieves the total storage capacity $F_{\text{MBR}}$ as given in (\ref{eq_MBR_Capacity}). We also use the notation $d_{\min}$ to denote
\begin{align}
d_{\min} = \min_{d\in D}\{d\}=d_{1}. \nonumber
\end{align}


The presented code construction could be considered as a generalization of the \emph{Product Matrix} (PM) MBR codes introduced by Rashmi \emph{et. al.} \cite{PM_Codes}, in which we use the PM codes as basic components. However, the repair scheme is properly redesigned to provide bandwidth adaptivity and error resiliency properties as defined in Section \ref{Section_Symmetric_Model}. We use examples throughout the discussions to better illustrate the ideas and proposed encoding/decoding algorithms. For the rest of this section let $\underline{s}=[s_1, \cdots, s_{F_{\text{MBR}}}]$ denote the source data symbols, where $F_{\text{MBR}}$ is given in (\ref{eq_MBR_Capacity}).

As mentioned before, we will consider all the symbols and operations to belong to a Galois filed $\mathbb{F}_{q}$, referred to as the code alphabet. In particular we consider $\mathbb{F}_{q}$ to be simple extension filed over a base field $\mathbb{F}_{p}$ for a large enough prime $p$. Let $g \in \mathbb{F}_{q}$ denote the primitive element of the code alphabet, and $\mathbb{F}_{p}[x]$ denote the ring of polynomials with coefficients from $\mathbb{F}_{p}$, and $\varrho(x)\in \mathbb{F}_{p}[x]$, denote the minimal polynomial of $g$, then we have,
\begin{align}
\mathbb{F}_{q} \simeq \mathbb{F}_{p}[x] / \langle \varrho(x) \rangle. \nonumber
\end{align} 

We provide a discussion on the field size requirement for the coding scheme at the end of this section, after introducing the coding scheme.

\subsection{Construction for $b=0$}\label{Subsec_concatenation}

In this subsection we assume there is no compromised node in the network, \emph{i. e.,} no node provides erroneous data. We show that in this case an MBR bandwidth adaptive code $\mathcal{C}(n$, $k$, $D$, $b=0$, $\alpha$, $\gamma(\cdot))$, for an arbitrary flexibility degree $\delta$, is achievable through a simple concatenation of $\alpha / d_{\min}$ MBR PM codes, with parameters $n'=n$, $k'=k$, $d'=d_{\min}$, $\alpha'=d_{\min}$, $\beta' = 1$, as in \cite{PM_Codes}. We will refer to these MBR PM codes as \emph{component codes}. More specifically, we choose each component code to be an MBR PM code capable of performing data reconstruction using $k$ nodes, and exact repair using $d_{\min}$ helpers only. This simple scheme then provides exact repair and bandwidth adaptivity, but it is not error resilient. In the following subsections we will show how we can achieve an MBR BAER exact-repair coding scheme by modifying this naive concatenation scheme. 

For the encoding of the concatenated scheme, we first partition the source data symbols $\underline{s}=[s_1, \cdots, s_{F_{\text{MBR}}}]$ into $\alpha / d_{\min}$ disjoints partitions, each of size 
\begin{align}
\frac{F_{\text{MBR}}}{\left(\frac{\alpha}{d_{\min}}\right)}=\frac{k(k+1)}{2}+k(d_{\min}-k), \nonumber
\end{align}
where $F_{\text{MBR}}$ is given in (\ref{eq_MBR_Capacity}). We encode each partition separately for the set of $n$ storage nodes using one of the component codes to produce $d_{\min}$ encoded symbols to be stored on each storage node. Then concatenating the encoded symbols provided by all the $\alpha / d_{\min}$ component codes for each node, we form a vector of size $\alpha$ encoded symbols to be stored in each node. The following figure illustrates this process.

\begin{figure}
\centering
\resizebox{3.5 in}{!}{\includegraphics{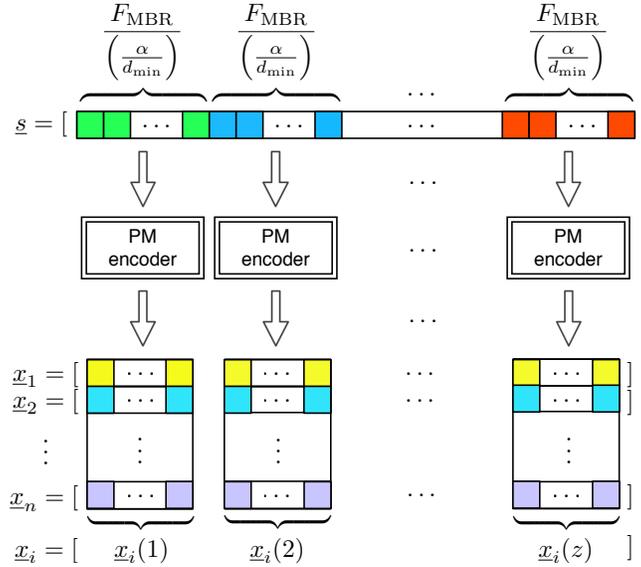}}
\caption{The encoding process for the simple concatenation scheme.}\label{Fig_Concatenation}
\end{figure}

The following example illustrates this procedure. 

\begin{example}\label{EX_Naive}
Let the number of nodes be $n = 5$, $\delta = 2$ and set $k = 2$, $d_{\min}=d_{1}=3$, $d_{2}=4$, and $b = 0$. To satisfy (\ref{eq_PernodeCap}) let $\alpha = 12$, and from (\ref{eq_MBR_Capacity}) we have $F_{\text{MBR}}=20$. We will use the Galois filed $\mathbb{F}_{7}$ as the code alphabet. Let the source symbols be $\underline{s}=[s_1, \cdots, s_{20}]$. We partition $\underline{s}$ into $\alpha/d_{\min}=4$ partitions each containing $5$ source symbols as $\underline{s}(1)=[s_{1},\cdots,s_{5}], \cdots, \underline{s}(4)=[s_{16},\cdots,s_{20}]$, and encode each partition separately. To this end, we then form data matrices $M_{1},\cdots,M_{4}$ based on the symbols in $\underline{s}(1),\cdots,\underline{s}(4)$ respectively,
\begin{align}
M_{1}=\left[\begin{array}{c c c}
s_{1}~&~s_{2}~&~s_{4} \\
s_{2}~&~s_{3}~&~s_{5} \\
s_{4}~&~s_{5}~&~0~
\end{array}\right], \cdots , M_{4}=\left[\begin{array}{c c c}
s_{16}~&~s_{17}~&~s_{19} \\
s_{17}~&~s_{18}~&~s_{20} \\
s_{19}~&~s_{20}~&~0~
\end{array}\right]. \nonumber
\end{align}
Next, we pick a Vandermonde matrix of size $n \times d_{\min} = 5 \times 3$,
\begin{align}
\Psi = \left[\begin{array}{c c c}
1~&~e_{1}~&~(e_{1})^2 \\
1~&~e_{2}~&~(e_{2})^2 \\
~~& ~\vdots ~&~~ \\
1~&~e_{5}~&~(e_{5})^2
\end{array}\right] = \left[\begin{array}{c}
\underline{\psi}_{1} \\
\underline{\psi}_{2} \\
\vdots \\
\underline{\psi}_{5}
\end{array}\right], \nonumber
\end{align}
where $e_{j}$'s are distinct non-zero elements of the code alphabet $\mathbb{F}_{7}$. Then following the encoding mechanism of PM MBR codes, the vector of encoded symbols to be stored on some storage node $\ell \in \{1,\cdots , 5\}$ is obtained as
\begin{align}
\underline{x}_{\ell} = [\underline{x}_{\ell}(1), \underline{x}_{\ell}(2), \underline{x}_{\ell}(3), \underline{x}_{\ell}(4)],  \nonumber
\end{align}
where,
\begin{align}
\underline{x}_{\ell}(i) = \underline{\psi}_{\ell} M_{i}. \nonumber
\end{align}
\end{example}

When a storage node fails, in order to perform a repair, we perform $\alpha/d_{\min}$ separate repairs, each corresponding to one component code. Repair of each component code requires $d_{\min}$ helpers and one repair symbol from each helper. The set of helpers for any two different component codes could have an intersection of arbitrary size between zero and $d_{\min}$. The flexibility required in the choice of the number of helpers comes from the choice of the size of intersections between the sets of helpers for different code components. 

The following lemma guarantees this concatenated coding scheme can perform a symmetric repair with $d$ helpers and per node repair bandwidth $\gamma(d)/d$, and hence is bandwidth adaptive for $d\in D$.

\begin{lem}
For any $d\in D$, the concatenated coding scheme is capable of performing symmetric repair with $d$ helpers, and per node repair bandwidth
\begin{align}
\beta(d)=\frac{\gamma(d)}{d}=\frac{\alpha}{d}. \nonumber
\end{align}
Recall that (\ref{eq_PernodeCap}) guarantees that $\beta(d)$ is integer for all $d\in D$.
\end{lem}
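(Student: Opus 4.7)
The plan is to reduce the symmetric repair of the concatenated code to a balanced scheduling of the $\alpha/d_{\min}$ independent component repairs over the $d$ available helpers. I will first recall that each component code is an MBR PM code with parameters $(n,k,d_{\min},\alpha'=d_{\min},\beta'=1)$, so its exact repair is guaranteed by \cite{PM_Codes} as soon as any $d_{\min}$ distinct helper nodes are engaged, each transmitting exactly one symbol. Consequently, repairing the full encoded vector $\underline{x}_{\ell}$ of the failed node reduces to performing $\alpha/d_{\min}$ such component repairs, and the only design freedom is the assignment of helper subsets to components.

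Next I would count bandwidth. The total number of repair-symbol transmissions needed across all components is $(\alpha/d_{\min})\cdot d_{\min}=\alpha$. If we can arrange the helper subsets so that every one of the $d$ chosen helpers appears in exactly the same number of component-helper sets, then each helper transmits $\alpha/d$ symbols, which is the claimed $\beta(d)$. Integrality of $\alpha/d$ for every $d\in D$ is exactly what (\ref{eq_PernodeCap}) provides (with $b=0$, $\alpha$ is a multiple of $\mathrm{lcm}(d_1,\dots,d_\delta)$), so the target is consistent.

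The remaining combinatorial step, which I would expect to be the only real content here, is to show that such a balanced assignment exists. I would model it as a bipartite multigraph with component-code vertices on one side (each of required degree $d_{\min}$ and, crucially, no repeated neighbours since a component needs $d_{\min}$ \emph{distinct} helpers) and helper vertices on the other side (each of required degree $\alpha/d$). Feasibility amounts to the existence of a $(d_{\min},\alpha/d)$-biregular bipartite simple graph between a set of size $\alpha/d_{\min}$ and a set of size $d$; the edge counts balance automatically since $(\alpha/d_{\min})\,d_{\min}=d\,(\alpha/d)=\alpha$, and the simplicity constraint is compatible because $d_{\min}\le d$ by definition of $d_{\min}$. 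An explicit round-robin construction suffices: index the components by $i\in\{0,1,\dots,\alpha/d_{\min}-1\}$ and the helpers by $j\in\{0,1,\dots,d-1\}$, and let the $i$-th component's helper set be $\{(i\,d_{\min}+t)\bmod d : t=0,\dots,d_{\min}-1\}$; a direct counting argument shows that every helper index appears exactly $\alpha/d$ times across all components.

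Putting the pieces together, I would conclude that for any $d\in D$ and any choice of $d$ surviving nodes, the scheduled assignment yields $\alpha/d_{\min}$ valid PM-code repair instances whose union exactly regenerates $\underline{x}_{\ell}$, while each helper transmits $\alpha/d$ symbols, giving $\beta(d)=\gamma(d)/d=\alpha/d$ as required. The main obstacle is not analytical but combinatorial: verifying that the simplicity constraint on component-side neighbourhoods is compatible with the biregular degree sequence for every $d\in D$, which is handled uniformly by the round-robin construction above.
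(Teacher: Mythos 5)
Your proof is correct and follows the same high‑level reduction as the paper --- reducing symmetric repair to the existence of a $(d_{\min},\alpha/d)$‑biregular simple bipartite graph between the $\alpha/d_{\min}$ component codes and the $d$ selected helpers --- but it differs at the constructive step. The paper uses a greedy, Havel--Hakimi‑style algorithm (its Algorithm~1), connecting each successive component vertex to the $d_{\min}$ helper vertices of currently smallest degree, and then proves correctness in an appendix by arguing the degree spread over helpers never exceeds one and never overshoots $\alpha/d$. You instead give a closed‑form round‑robin assignment, $\mathcal{N}(v_i)=\{(i\,d_{\min}+t)\bmod d : t=0,\dots,d_{\min}-1\}$, and the balance follows immediately from the fact that the pairs $(i,t)$ biject onto $\{0,\dots,\alpha-1\}$ via $i\,d_{\min}+t$, so reducing modulo $d$ (with $d\mid\alpha$ from~(\ref{eq_PernodeCap}) at $b=0$) hits every residue exactly $\alpha/d$ times, while the $d_{\min}$ values in each set are consecutive residues modulo $d$ and hence distinct because $d_{\min}\le d$. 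Your explicit construction avoids the separate invariant‑maintenance argument, is easier to verify, and gives a deterministic schedule; the paper's greedy algorithm is perhaps more naturally stated as an online procedure and is what it chose to formalize, but both establish the lemma.
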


\begin{proof}
Any node in the network has one coded segment for each MBR PM component code, and therefore could serve as a helper, providing one repair symbol to repair the lost coded segment in that component code. The repair will be performed separately for each of the $\alpha/d_{\min}$ MBR PM component codes, and each component code's repair requires $d_{\min}$ repair symbols from distinct helpers. Then we only need to show there exists an assignment of code components to the helpers such that each helper is assigned to exactly $\alpha/d$ code components, and each code component is assigned to exactly $d_{\min}$ distinct helpers. 

To guarantee the possibility of this assignment we simply introduce an assignment bipartite graph with $\alpha/d_{\min}$ vertices on the left denoted by $\mathcal{V}$ (representing the code components) and $d$ vertices on the right denoted by $\mathcal{U}$ (representing the selected helpers). It is enough to show there exists a regular bipartite graph with left degree $d_{\min}$ and right degree $\beta(d)=\alpha / d$ for each choice of $d\in D$. Algorithm \ref{Alg_AssignmentBipartite} creates such a bipartite graph using the ideas introduced in the well-known Havel-Hakimi algorithm \cite{Havel, Hakimi}. The proof of correctness of this algorithm is provided in Appendix \ref{App_Alg_Bipartite}.
\begin{algorithm}
\caption{Bipartite graph construction}\label{Alg_AssignmentBipartite}
\begin{algorithmic}[1]
\State Input: Two sets of vertices $\mathcal{V},~\mathcal{U}$, and parameter $d_{\min}$.
\State Set $d=|\mathcal{U}|$.
\State Initiate a bipartite graph with vertex sets $\mathcal{V},~\mathcal{U}$ and no edges.
\For{each vertex $v\in \mathcal{V}$}
\State Create $\mathcal{W}\subset \mathcal{U}$ of $d_{\min}$ vertices in $\mathcal{U}$ with least degrees.
\State Connect $v$ to all the vertices in $\mathcal{W}$
\EndFor
\end{algorithmic}
\end{algorithm}

\end{proof}

Note that the total repair bandwidth is always $\alpha$, which is the minimum possible total bandwidth required to repair a node of capacity $\alpha$ and hence the bandwidth adaptivity in this naive concatenation based scheme is achieved at no extra cost for every choice of $d\in D$. Since MBR PR codes are optimal then this bandwidth adaptive version is also MBR and universally optimal for all $d\in D$, though it is not yet error resilient.

\begin{example}\label{EX_Naive2}
Let us continue with the code in Example \ref{EX_Naive}, with $n = 5$, $\delta = 2$, $k = 2$, $d_{\min}=d_{1}=3$, $d_{2}=4$, and $b = 0$. Assume that node $5$ is failed and being replaced through a repair and there are four other nodes still available in the network, namely nodes $1,2,3,4$. In this example there are two options for $d$, the number of helpers. One option is to chose $d=d_{\min}=3$ and select an arbitrary subset of size three from the available storage nodes. Without loss of generality, assume the set of selected helpers is $\mathcal{H}_{1}=\{1,2,3\}$. Each helper node $h \in \mathcal{H}_{1}$ sends $\alpha/d_{\min}=4$ repair symbols and each component code needs $d_{\min}=3$ distinct helpers to provide one repair symbol each. The helper assignment graph in this case is the complete bipartite graph with $|\mathcal{V}|=4$, and $|\mathcal{U}|=3$, in which every node in $\mathcal{V}$ is connected to every node in $\mathcal{U}$, as depicted in Fig. \ref{Fig_BipartiteGraph} (a). Therefore, each helper will provide one repair symbol for every component code.

Another choice in this example is to select $d=d_{2}=4$. The selected set of helpers is then $\mathcal{H}_{2}=\{1,2,3,4\}$. Since we have $\alpha/d_{\min}=4$ code components, we denote the set of vertices in $\mathcal{V}$ by $\{v_1, \cdots , v_{4}\}$, where $v_i$ represents the $i^{\text{th}}$ code component. The vertices in $\mathcal{U}$ are denoted by $\{u_1,\cdots,u_4\}$, where $u_{h}$ represents the helper node $h$. Moreover, let's represent the set of neighbours of a vertex $v_i$ by $\mathcal{N}(v_i)$. Then one possible realization of the helper assignment bipartite graph created by Algorithm \ref{Alg_AssignmentBipartite} is represented by $\mathcal{N}(v_1)=\{u_{1},u_{2},u_{3}\}$, $\mathcal{N}(v_2)=\{u_{1},u_{2},u_{4}\}$, $\mathcal{N}(v_3)=\{u_{1},u_{3},u_{4}\}$, and finally $\mathcal{N}(v_4)=\{u_{2},u_{3},u_{4}\}$, as depicted in Fig. \ref{Fig_BipartiteGraph} (b). Then for instance the helper node $1$ will send three repair symbols including one for component code $1$, one for component code $2$, and one for component code $3$, but no repair symbol for component code $4$. Similarly, the encoded symbols on the failed node $5$, pertaining to the component code $2$, namely $\underline{x}_{5}(2)$, will be repaired based on the corresponding repair symbols received from helper nodes $1$, $2$, and $4$.

\begin{figure}
\psfragscanon
\psfrag{A}{$v_{1}$}
\psfrag{B}{$v_{2}$}
\psfrag{C}{$v_{3}$}
\psfrag{D}{$v_{4}$}
\psfrag{E}{$u_{1}$}
\psfrag{F}{$u_{2}$}
\psfrag{G}{$u_{3}$}
\psfrag{H}{$u_{4}$}
\psfrag{I}{$\mathcal{V}$}
\psfrag{J}{$\mathcal{U}$}
\psfrag{K}{$(\text{a})$}
\psfrag{L}{$(\text{b})$}
\centering
\resizebox{3 in}{!}{\includegraphics{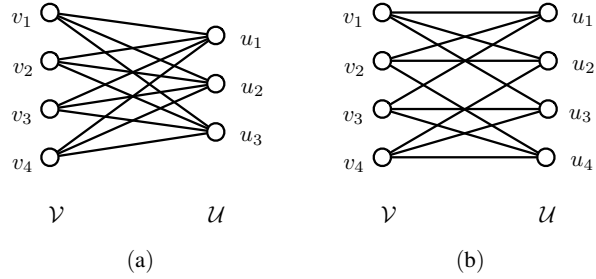}}
\caption{The bipartite assignment graphs generated by Algorithm \ref{Alg_AssignmentBipartite} for Example \ref{EX_Naive2}.}\label{Fig_BipartiteGraph}
\end{figure}

\end{example}

In order to perform data reconstruction, we can simply download the content of any $k$ nodes in the system and this will provide each component code with the required data to perform separate data reconstructions.


\subsection{Encoding for Storage with $b>0$}\label{Subsec_Enc_for_Storage}

In this subsections we describe an MBR bandwidth adaptive and error resilient (BAER) coding scheme with parameters $n$, $k$, $D=\{d_{1}, \cdots , d_{\delta}\}$, for an arbitrary flexibility degree $\delta$, and per node storage capacity $\alpha$ such that (\ref{eq_d_i_Cond}) and (\ref{eq_PernodeCap}) are satisfied. This scheme is capable of performing exact repair while there are up to $b$ erroneous nodes in the network. 

Let 
\begin{align}\label{eq_lambda_kappa_def}
\lambda=d_{\min}-2 b, ~~\kappa=k-2 b,
\end{align} 
and 
\begin{align}
z = \frac{\alpha}{\lambda}. \nonumber
\end{align}
Also let $O$ be a $\lambda \times \lambda$ zero matrix. The first step in the encoding process is to partition the source data symbols $\underline{s}=[s_1, \cdots, s_{F_{\text{MBR}}}]$ into $z$ disjoints partitions, and arrange the source data symbols of the $i^{\text{th}}$ partition in the form of the data submatrix $M_{i},~i \in \{1,\cdots , z \}$. Each of the data submatrices $M_{i},~i \in \{1,\cdots , z\}$ is a symmetric matrix satisfying the structural properties of data matrix in a Product Matrix MBR code, capable of performing exact repair using $\lambda$ helpers, and data reconstruction by accessing $\kappa$ nodes. In other words, 
\begin{align}
M_{i}=\left[ \begin{array}{l r}
N_{i} & L_{i} \\
L_{i}^\intercal & O'
\end{array}\right],~~i \in \{1,\cdots , z\}, \nonumber
\end{align}
where $N_{i}$ is a symmetric $\kappa \times \kappa$ matrix, and $L_{i}$ is a $\kappa \times (\lambda-\kappa)$ matrix, and finally $O'$ is a $(\lambda-\kappa) \times (\lambda-\kappa)$ zero matrix. Moreover, $\intercal$ denotes matrix transpose.

The overall \emph{data matrix} $M \in \mathbb{F}_{q}^{\alpha \times \alpha}$, is then formed as a block diagonal matrix consisting of $z$ submatrices $M_{1}, \cdots , M_{z}$ as the diagonal blocks in the following form.
\begin{align}\label{eq_Information_Matrix}
M = \left[ \begin{array}{c c c c}
M_{1} & O & \cdots & O\\
O & M_{2} & \cdots & O \\
\vdots & \vdots & \ddots & \vdots \\
O & \cdots & O & M_{z}
\end{array}\right].
\end{align}

Similar to the original  MBR Product Matrix codes the encoding for storage over each node is performed using a node-specific \emph{coefficient vector}. The coefficient vector in this construction is a row of an $n \times \alpha$ Vandermonde matrix $\Psi$,
\begin{align}\label{eq_Psi_def}
\Psi = \left[ \begin{array}{c c c c c}
1 & e_{1} & (e_{1})^{2} & \cdots & (e_{1})^{\alpha-1} \\
1 & e_{2} & (e_{2})^{2} & \cdots & (e_{2})^{\alpha-1} \\
\vdots & \vdots & \vdots & \ddots & \vdots\\
1 & e_{n} & (e_{n})^{2} & \cdots & (e_{n})^{\alpha-1}
\end{array}\right],
\end{align}
where, $e_{i},~i\in\{1,\cdots , n\}$ are distinct non-zero elements of $\mathbb{F}_{q}$. Without loss of generality we simply use
\begin{align}\label{eq_e_i_def}
e_{i} = g^{i},
\end{align}
where $g$ is the primitive element of $\mathbb{F}_{q}$.

Let's denote the $\ell^{\text{th}}$ row of $\Psi$ by $\underline{\psi}_{\ell}$, which is the coefficient vector for the $\ell^{\text{th}}$ storage node.The vector of encoded symbols to be stored on node $\ell \in \{1,\cdots,n\}$, denoted by $\underline{x}_{\ell}$ is obtained as
\begin{align}
\underline{x}_{\ell} = \underline{\psi}_{\ell}M. \nonumber
\end{align}

Note that $\underline{x}_{\ell}$ is an $\alpha$ dimensional vector and hence the per node storage capacity is satisfied. Also, the encoded vector for each node could be considered as concatenation of the encoded vectors of all the $z$ MBR PM code components with data matrices $M_{1}, \cdots , M_{z}$. To show this, we introduce the following partitioning for each node-specific coefficient vector $\underline{\psi}_{\ell}$ as 
\begin{align}\label{eq_psi_partition}
\underline{\psi}_{\ell} = [\underline{\psi}_{\ell}(1), \cdots , \underline{\psi}_{\ell}(z)],~~\ell \in \{1, \cdots , n\}, 
\end{align}
where, each segment $\underline{\psi}_{\ell}(i)$ is a $1 \times \lambda$ vector. Then for each node $\ell$ we have,
\begin{align}
\underline{x}_{\ell} = [\underline{\psi}_{\ell}(1)M_{1}, \cdots , \underline{\psi}_{\ell}(z)M_{z}]. \nonumber
\end{align}
 
\begin{example}\label{Ex_BAER_Symm_P1}
Consider the following set of parameters $\delta = 2$, $n = 6$, $k = 3$, $d_{\min} = d_{1} = 4$, $d_{2} = 5$, $b = 1$, and the code alphabet in use is $\mathbb{F}_{q}$, with primitive element $g$. Assume $\alpha=6$ to satisfy (\ref{eq_PernodeCap}). We have $\lambda=2,~\kappa=1$, and the number of component codes is $z=\alpha/\lambda=3$. Moreover, from (\ref{eq_MBR_Capacity}), we have $F_{\text{MBR}}=6$. Let's assume the source symbols are $\underline{s}=[s_{1}, \cdots , s_{6}]$. Finally, the data matrix $M$, and the corresponding Vandermonde matrix $\Psi$ will be
\begin{align}
M=\left[\begin{array}{c c c}
M_{1}&O&O \\
O&M_{2}&O \\
O&O&M_{3}
\end{array} \right]=\left[\begin{array}{c c | c c | c c}
s_{1}~&~s_{2}~&~0~&~0~&~0~&~0 \\
s_{2}~&~0~&~0~&~0~&~0~&~0 \\
\hline 
0~&~0~&~s_{3}~&~s_{4}~&~0~&~0 \\
0~&~0~&~s_{4}~&~0~&~0~&~0 \\
\hline 
0~&~0~&~0~&~0~&~s_{5}~&~s_{6} \\
0~&~0~&~0~&~0~&~s_{6}~&~0 
\end{array} \right], \nonumber
\end{align}
and 
\begin{align}
~~\Psi = \left[\begin{array}{c c c c c}
1~&~g~&~g^{2}~& \cdots &~g^{5} \\
1~&~g^{2}~&~\left(g^{2}\right)^{2}~& \cdots &~\left(g^{2}\right)^{5}\\
\vdots~&~\vdots~&~\vdots~&~\ddots~&~\vdots \\
1~&~g^{6}~&~\left(g^{6}\right)^{2}~& \cdots &~\left(g^{6}\right)^{5}
\end{array}\right]. \nonumber
\end{align}
Then, for instance, for node $1$ the coefficient vector is
\begin{align}
\underline{\psi}_{1} = [1, g, g^2, \cdots , g^{5}], \nonumber
\end{align}
and the encoded vectors is
\begin{align}
\underline{x}_{1} =\underline{\psi}_{1}M =[(s_{1}+g s_{2}), s_{2}, (g^{2} s_{3}+g^{3} s_{4}), g^{2} s_{4}, (g^{4} s_{5}+g^{5} s_{6}), g^{4} s_{6}]. \nonumber 
\end{align}
\end{example}

%
\subsection{Data Reconstruction with $b>0$}\label{Sec_Data_Recon}

The decoding procedure for both data reconstruction as well as the repair is performed using a scheme which will be referred to as the \emph{"test-group decoding"}. This decoding scheme enables the decoder to both recover the required data, and simultaneously authenticate the ingenuity of the recovered message. We describe the test-group decoding as an iterative procedure. In data reconstruction, the data collector accesses a set $\mathcal{S}$ of $k$ storage nodes. 

Each iteration uses one \emph{test-group} $\mathcal{T}$, which is a distinct subset of $\mathcal{S}$, consisting $k-b$ nodes, and the entire process continues for at most $\binom{k}{k-b}$ iterations. For a given iteration with a test-group $\mathcal{T}$, we examine all of its $\binom{k-b}{k-2b}$ subsets $\mathcal{H}$ of size $k-2b$, and perform data reconstruction using the data provided by nodes in $\mathcal{H}$. This gives an estimate for the data matrix $M$, which we denote by $\hat{M}_{\mathcal{H}}$. Recall that the code components are designed for parameters $\kappa=k-2b$ and $\lambda=d-2b$, and hence data reconstruction from nodes in $\mathcal{H}$ can be performed as for standard product matrix codes \cite{PM_Codes}.

In order to calculate an estimate for data reconstruction, based on $\mathcal{H}\subset \mathcal{T},~|\mathcal{H}|=k-2b$ note that each storage node contains $z$ coded segments, each pertaining to one component code. Therefore, the data collector could easily perform the data reconstruction for each of the $z$ component codes separately based on the corresponding segments provided by nodes in $\mathcal{H}$ using the same method introduced for the original Product Matrix codes \cite{PM_Codes}. As mentioned in Section \ref{Subsec_Enc_for_Storage}, each component code is designed to perform reconstruction based on $\kappa = k-2b$ storage nodes. Then assuming $\mathcal{H}$ is compromised-free, the corresponding segments of the codewords provided by the $k-2b$ nodes in $\mathcal{H}$ are sufficient for reconstruction in each component code.


Once all the estimates are calculated for a test-group, the decoder proceeds by checking the consistency of the estimates. The decoder stops whenever it finds a test-group such that all its estimates are  consistent and outputs the consistent estimate as the decoded data. The following lemma guarantees this procedure will always succeed.

\begin{lem}\label{Lem_Consistency}
Assuming the maximum number of compromised nodes is $b$, if all the estimates produced in the test-group decoding for a single test-group $\mathcal{T}$ are consistent, then all of them are correct. Moreover, the test-group decoding will always find a consistent test-group.
\end{lem}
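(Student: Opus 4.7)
The plan is to exploit the following combinatorial pigeonhole: since the adversary controls at most $b$ nodes in total, any test-group $\mathcal{T}\subseteq\mathcal{S}$ of size $k-b$ contains at most $b$ compromised nodes, and hence at least $k-2b$ genuine ones. In particular, among the $\binom{k-b}{k-2b}$ size-$(k-2b)$ subsets of $\mathcal{T}$ examined by the decoder, at least one, call it $\mathcal{H}_{0}$, consists entirely of genuine storage nodes.

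For the first claim, I would argue as follows. Each component code is a standard PM MBR code with parameters $\kappa=k-2b$ and $\lambda=d_{\min}-2b$, so by the reconstruction guarantee of \cite{PM_Codes}, the segments provided by the $k-2b$ genuine nodes in $\mathcal{H}_{0}$ decode to the true data submatrix $M_{i}$ for every $i\in\{1,\dots,z\}$; concatenating these submatrices yields the full data matrix, so $\hat{M}_{\mathcal{H}_{0}}=M$. If the $\binom{k-b}{k-2b}$ estimates obtained from $\mathcal{T}$ are mutually consistent (i.e.\ equal to a common matrix), then in particular they must all equal $\hat{M}_{\mathcal{H}_{0}}=M$, which is precisely the correctness statement.

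For the second claim, I would exhibit a specific test-group on which the algorithm is guaranteed to succeed. Since $|\mathcal{S}|=k$ and at most $b$ nodes in $\mathcal{S}$ are compromised, $\mathcal{S}$ contains at least $k-b$ genuine nodes; pick any subset $\mathcal{T}^{\star}\subseteq\mathcal{S}$ of size $k-b$ consisting solely of genuine nodes. Every size-$(k-2b)$ subset $\mathcal{H}\subseteq\mathcal{T}^{\star}$ is then genuine, and the argument of the previous paragraph gives $\hat{M}_{\mathcal{H}}=M$ for all such $\mathcal{H}$. Hence all estimates associated with $\mathcal{T}^{\star}$ coincide, and the decoder will terminate on $\mathcal{T}^{\star}$ (or, if it is encountered earlier, on some other consistent test-group, which by the first claim also yields the correct $M$).

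The main obstacle I anticipate is essentially bookkeeping rather than a deep algebraic difficulty: one must check that consistency of the decoded matrices $\hat{M}_{\mathcal{H}}$ really pins down $M$, i.e.\ that the encoding of each component code is injective so that a genuine subset $\mathcal{H}_0$ cannot decode to any matrix other than the true $M$. This is immediate from the PM MBR construction recalled in Section~\ref{Subsec_Enc_for_Storage}, since the relevant $\kappa$ rows of $\Psi$ form a Vandermonde submatrix and hence invertibly recover the underlying submatrix $M_i$, leaving no slack that the adversary could exploit to fabricate a distinct but still consistent estimate.
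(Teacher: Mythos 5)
Your proposal is correct and follows essentially the same argument as the paper: a pigeonhole count shows every test-group of size $k-b$ contains a compromised-free subset of size $k-2b$ whose estimate is genuine (forcing all consistent estimates to equal it), and the existence of an entirely genuine test-group guarantees termination. Your added remark on the injectivity of the PM reconstruction is a harmless elaboration of what the paper takes for granted from \cite{PM_Codes}.
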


\begin{proof}
First note that any test-group $\mathcal{T}$ consists of $k-b$ nodes and the decoder produces an estimate based on any subset of $\mathcal{H}$ of size $k-2b$ in $\mathcal{T}$. Since the maximum number of compromised nodes is $b$, then at least one of the subsets in any test-group $\mathcal{T}$ is guaranteed to be compromised-free. Therefore, at least one of the estimates in each test-group is genuine and if all the estimates in the test-group are consistent then all of them should match with the genuine estimate.

Now to prove that the test-group decoder always finds a consistent test-group, simply note that the maximum number of compromised nodes is $b$. Hence, there exists at least one subset of size $k-b$, in any set of $k$ selected nodes, $S$, which does not contain any compromised node. Since the test-group decoder uses all possible choices of test-groups before it terminates, it always processes a compromised-free test-group, which is consistent. 
\end{proof}

The test-group decoding for data reconstruction is summarized in Algorithm \ref{Alg_reconstriction}.

\begin{algorithm}
\caption{Test-group decoding for data reconstruction}\label{Alg_reconstriction}
\begin{algorithmic}[1]
\State Inpot: $k$, $b$, and $\underline{x}_{\ell}$ for all accessed node.
\State $\text{Consistency} \gets \text{False}$
\While{$\neg(\text{Consistency})$}
   \State $\mathcal{T}\gets \text{A new subset of helpers of size}~k-b$
   \For{each subset $\mathcal{H}\subset \mathcal{T}$}
   \State Calculate $\hat{M}_{\mathcal{H}}$
   \EndFor
   \If{$\hat{M}_\mathcal{H}$ matrices are the same $\forall{\mathcal{H}\subset \mathcal{T}}$}
   \State $\text{Consistency} \gets \text{True}$
   \State $\text{Output} \gets \hat{M}_\mathcal{H}~\text{for some}~\mathcal{H}\subset \text{consistent}~\mathcal{T}$   
   \EndIf
\EndWhile
\end{algorithmic}
\end{algorithm}

%

\subsection{Repair Scheme with $b>0$}\label{Subsec_Rep_Sch_I}

\subsubsection{Encoding for Repair with $b>0$}
In order to perform the encoding for a repair process, we use a Vandermonde matrix $\Omega \in \mathbb{F}_q^{z \times z}$, with $z = \alpha/(d_{\min}-2b)$, as,
\begin{align}\label{eq_Omega}
\Omega = \left[\begin{array}{c c c c}
\left(g^{i_{1}} \right)^{0}~ & ~\left(g^{i_{1}} \right)^{1}~ & ~\cdots~ & ~\left(g^{i_{1}} \right)^{z-1} \\
\left(g^{i_{2}} \right)^{0}~ & ~\left(g^{i_{2}} \right)^{1}~ & ~\cdots~ & ~\left(g^{i_{2}} \right)^{z-1} \\
\vdots & \vdots & \ddots & \vdots \\
\left(g^{i_{z}} \right)^{0}~ & ~\left(g^{i_{z}} \right)^{1}~ & ~\cdots~ & ~\left(g^{i_{z}} \right)^{z-1} \\
\end{array} \right],
\end{align}
where, 
\begin{align}
i_{1} < i_{2} < \cdots < i_{z}, \nonumber
\end{align}
and, for any $\ell_{1}, \ell_{2}$ such that $1 \leq \ell_{1}<\ell_{2}\leq z$ we have,
\begin{align}
i_{\ell_{2}}-i_{\ell_{1}} > \alpha n. \nonumber
\end{align}
This matrix $\Omega$ will be used to adjust the dimension of the vector of repair symbols provided by each helper based on the selected parameter $d$.

When a node $f$ in the network fails, we can choose any subset $\mathcal{H}$ of size $d$ of other nodes as helpers such that $2b < k \leq d_{\min} \leq d \leq d_{\delta}$. To describe the encoding process for repair at helper node $h \in \mathcal{H}$, we define the following notations. 
Let
\begin{align}
z_{d} = \frac{\alpha}{d-2b} \leq z. \nonumber
\end{align}
and, $\Omega_{z_{d}}$ denote the submatrix of $\Omega$, consisting of the first $z_{d}$ columns. Note that $\Omega_{z_{d}}$ is a Vandermonde matrix for any $d\in D$. Moreover, for any node $\ell \in \{1,\cdots,n\}$, let $\Phi_{\ell}$ denote an $\alpha \times z$ block-diagonal matrix with $\lambda \times 1$ diagonal blocks $\underline{\psi}_{\ell}^{\intercal}(1),\cdots , \underline{\psi}_{\ell}^{\intercal}(z)$ as
\begin{align}\label{eq_PhiMatrix}
\Phi_{\ell} = \left[ \begin{array}{c c c c}
\underline{\psi}_{\ell}^{\intercal}(1) ~ & ~ \underline{o}^{\intercal}_{\lambda} ~ & ~\cdots ~ & ~ \underline{o}^{\intercal}_{\lambda} \\
\underline{o}^{\intercal}_{\lambda} ~ & ~ \underline{\psi}_{\ell}^{\intercal}(2) ~ & ~\cdots ~ & ~ \underline{o}^{\intercal}_{\lambda} \\
\vdots ~ & ~ \vdots ~ & ~\ddots ~ & ~\vdots \\
\underline{o}^{\intercal}_{\lambda} ~ & ~ \cdots ~ & ~ \underline{o}^{\intercal}_{\lambda} ~ & ~ \underline{\psi}_{\ell}^{\intercal}(z)
\end{array}\right],
\end{align}
where $\underline{o}^{\intercal}_{\lambda}$ denotes a $\lambda \times 1$ zero vector.

Each helper node $h \in \mathcal{H}$ then produces a vector $\underline{r}_{h,f}$ of repair symbols for the failed node $f$ as 
\begin{align}\label{eq_RepairVector}
\underline{r}_{h,f}=\underline{x}_{h} \Phi_{f} \Omega_{z_{d}}.
\end{align}
Note that this will be a row vector of length $\alpha/(d-2b)$, then the per node repair bandwidth corresponding to the chosen parameter $d$ is
\begin{align}\label{eq_repairBW_Schm1}
\beta(d) = \frac{\gamma(d)}{d} = \frac{\alpha}{d - 2 b}.
\end{align}

\begin{example}\label{Ex_BAER_Symm_P2}
Let's continue the Example \ref{Ex_BAER_Symm_P1} with $\delta = 2$, $n = 6$, $k = 3$, $d_{\min}=d_{1} = 4$, $d_{2}=5$, $b=1$, $\alpha = 6$ and $z=3$. Assuming node $f=1$ is failed and choosing $d=d_{\min}=4$, we have 
\begin{align}
\Phi_{1}=\left[\begin{array}{c c c}
1~&~0~&~0 \\
g~&~0~&~0 \\
0~&~g^2~&~0 \\
0~&~g^3~&~0 \\
0~&~0~&~g^4 \\
0~&~0~&~g^5 
\end{array}\right], ~~\Omega_{z_{d_{\min}}}=\Omega = \left[\begin{array}{c c c}
1~&~g^{i_{1}}~&~\left(g^{i_{1}} \right)^{2} \\
1~&~g^{i_{2}}~&~\left(g^{i_{2}} \right)^{2} \\
1~&~g^{i_{3}}~&~\left(g^{i_{3}} \right)^{2} 
\end{array}\right]. \nonumber
\end{align}
Similarly, if we choose $d=d_{2}=5$, we have 
\begin{align}
\Omega_{z_{d_{2}}}=\left[\begin{array}{c c c c}
1~&~g^{i_{1}} \\
1~&~g^{i_{2}} \\
1~&~g^{i_{3}}
\end{array}\right], \nonumber
\end{align}

\end{example}



\subsubsection{Decoding for Repair with $b>0$}\label{Subsec_Dec_for_Rep_Rec}

On the decoder side, again the test-group decoder iteratively selects a test-group $\mathcal{T}$ of size $d-b$, which has not been used before. Then for any subset $\mathcal{H}\subset \mathcal{T}$ of size $d-2b$, the decoder calculates an estimate $\underline{\hat{x}}_{\mathcal{H},f}$ for the lost coded vector $\underline{x}_{f}$, and checks if all the calculated estimates match in the current test-group. If there is an inconsistency, the decoder terminates this iteration and starts the next iteration by selecting a new test-group until it finds a consistent one. An estimate in a consistent test-group is considered as the output.


Next we describe the process of calculating the estimate $\underline{\hat{x}}_{\mathcal{H},f}$, based on one arbitrary subset $\mathcal{H} = \{h_{1}, \cdots , h_{d-2b}\}$. First note that for each $h \in \mathcal{H}$, assuming it is not a compromised node, we have
\begin{align}\label{eq_middle_step}
\underline{r}_{h,f}&= \underline{x}_{h}\Phi_{f}\Omega_{z_{d}} \nonumber \\
&=\left[\underline{\psi}_{h}(1)M_{1}\underline{\psi}_{f}^{\intercal}(1),\cdots ,\underline{\psi}_{h}(z)M_{z}\underline{\psi}_{f}^{\intercal}(z)\right] \Omega_{z_{d}}, \nonumber \\
&=\left[\underline{\psi}_{f}(1)M_{1}\underline{\psi}_{h}^{\intercal}(1),\cdots ,\underline{\psi}_{f}(z)M_{z}\underline{\psi}_{h}^{\intercal}(z)\right] \Omega_{z_{d}},  \\
&= \underline{x}_{f}\Phi_{h}\Omega_{z_{d}}, \nonumber
\end{align}
where, (\ref{eq_middle_step}) is due to the fact that each $\underline{\psi}_{\ell}(i)M_{i}\underline{\psi}_{\ell'}^{\intercal}(i)$ term is a scalar, and $M_{i}$ is a symmetric matrix. The decoder then concatenates all the received repair vectors from helpers in $\mathcal{H}$ to create a $1 \times \alpha$ vector $\underline{\rho}_{\mathcal{H},f}$ as
\begin{align}\label{eq_rho_def}
\underline{\rho}_{\mathcal{H},f} &= [\underline{r}_{h_{1},f}, \cdots , \underline{r}_{h_{d-2b},f}]   \nonumber \\
&= \underline{x}_{f}\left[\Phi_{h_{1}}\Omega_{z_{d}},\cdots,\Phi_{h_{d-2b}}\Omega_{z_{d}}\right].
\end{align}
We define the $\alpha \times \alpha$ matrix $\Theta_{\mathcal{H}}$ as,
\begin{align}\label{eq_Theta}
\Theta_{\mathcal{H}} &= \left[\Phi_{h_{1}}\Omega_{z_{d}},\cdots , \Phi_{h_{d-2b}}\Omega_{z_{d}}\right] \nonumber \\
&= \left[\Phi_{h_{1}},\cdots , \Phi_{h_{d-2b}}\right] \left(I_{(d-2b)} \otimes \Omega_{z_{d}}\right),
\end{align}
where $I_{(d-2b)}$ is the identity matrix of size $(d-2b)$, and $\otimes$ represents the Kronecker product. Note that if matrix $\Theta_{\mathcal{H}}$ is invertible, the decoder will be able to produce an estimate $\underline{\hat{x}}_{\mathcal{H},f}$, for $\underline{x}_{f}$, for any subset $\mathcal{H},~ | \mathcal{H} | = d-2b$, of the test-group, $\mathcal{T}$ as
\begin{align}\label{eq_repair}
\underline{\hat{x}}_{\mathcal{H},f} = \underline{\rho}_{\mathcal{H},f} \Theta_{\mathcal{H}}^{(-1)}.
\end{align}

The following lemma guarantees that $\Theta_{\mathcal{H}}^{(-1)}$ always exists.

\begin{lem}\label{Lem_ThetaRank}
For any $f\in\{1, \cdots , n\}$, any parameter $d \in D$ and a subset of the helper nodes $\mathcal{H}$, with $| \mathcal{H} | = d-2b$, the matrix $\Theta_{\mathcal{H}}$, defined in (\ref{eq_Theta}), is invertible, provided that the code alphabet $\mathbb{F}_{q}$ is large enough.
\end{lem}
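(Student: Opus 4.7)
The plan is to regard $\det(\Theta_{\mathcal{H}})$ as a polynomial in the primitive element $g$ of $\mathbb{F}_q$ and to show that this polynomial is not identically zero. Its total degree in $g$ will be bounded by a polynomial in $n$, $\alpha$, and $i_z$, so for $q$ large enough the number of primitive elements of $\mathbb{F}_q$ exceeds the number of zeros of the polynomial, yielding a primitive $g$ at which $\det(\Theta_{\mathcal{H}})\neq 0$ and hence $\Theta_{\mathcal{H}}$ is invertible.

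Unfolding the definitions in (\ref{eq_PhiMatrix})--(\ref{eq_e_i_def}) and (\ref{eq_Omega})--(\ref{eq_Theta}) and reindexing rows of $\Theta_{\mathcal{H}}$ by $(i,j)\in[z]\times[\lambda]$ and columns by $(k,c')\in[d-2b]\times[z_d]$, a direct calculation shows that the entry at row $(i,j)$ and column $(k,c')$ is the single monomial $g^{h_k\left[(i-1)\lambda+j-1\right]+i_i(c'-1)}$. Introducing the auxiliary variables $\eta_k=g^{h_k}$ and $\mu_i=g^{i_i}$, each row of $\Theta_{\mathcal{H}}$ is the Kronecker product of the corresponding row of an $\alpha\times(d-2b)$ Vandermonde matrix (with nodes $\eta_k$) and the $i$-th row of the $z\times z_d$ Vandermonde $\Omega_{z_d}$ (with nodes $\mu_i$).

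Next I would expand $\det(\Theta_{\mathcal{H}})$ via the Leibniz formula and group the $\alpha!$ permutations $\sigma$ by their ``column pattern'' $\mathcal{P}:(i,j)\mapsto c'(\sigma(i,j))$. For each fixed $\mathcal{P}$, the remaining freedom is, within each $c'$-block, a bijection from the $d-2b$ rows assigned to that block to $\{1,\ldots,d-2b\}$. The corresponding signed inner sum collapses block-by-block to a generalized Vandermonde determinant $\det\bigl([\eta_k^{(i-1)\lambda+j-1}]_{(i,j)\in R_{c'},\,k}\bigr)$ with distinct row exponents in $\{0,\ldots,\alpha-1\}$ and distinct nodes $\eta_k$, which equals a standard Vandermonde times a Schur polynomial and is therefore a nonzero polynomial in the $\eta_k$'s. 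Factoring out the common $\mu$-monomial $\prod_i\mu_i^{N_i(\mathcal{P})}$ with $N_i(\mathcal{P})=\sum_j(c'(\sigma(i,j))-1)$ then expresses
\[
\det(\Theta_{\mathcal{H}})=\sum_{\mathcal{P}}\pm\,\Bigl(\prod_i\mu_i^{N_i(\mathcal{P})}\Bigr)\prod_{c'}\det\!\bigl([\eta_k^{(i-1)\lambda+j-1}]_{(i,j)\in R_{c'},\,k}\bigr).
\]

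The hard part will be ruling out cancellations across this sum. I would treat $\{\eta_k\}$ and $\{\mu_i\}$ momentarily as independent formal variables over $\mathbb{F}_p$ and isolate a distinguished column pattern $\mathcal{P}^\star$---for instance, the one lexicographically maximizing $(N_z,N_{z-1},\ldots,N_1)$ subject to the row/column marginal constraints---for which the associated coefficient reduces to a single nonzero generalized Vandermonde. Upon specializing $\mu_i=g^{i_i}$ and $\eta_k=g^{h_k}$, the spacing condition $i_{\ell_2}-i_{\ell_1}>\alpha n$ imposed in (\ref{eq_Omega}) is exactly what guarantees that the leading monomial $g^{E^\star}$ from $\mathcal{P}^\star$ sits in a strictly higher degree window than every other pattern's contribution, because each $N_i\leq\lambda(z_d-1)<\alpha$ while the $g$-degree contributed by any Vandermonde cofactor is at most $n(\alpha-1)$ per block and the spacing of the $i_\ell$'s is designed to dominate these. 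Consequently $\det(\Theta_{\mathcal{H}})$ is a nonzero polynomial in $g$ of polynomially bounded degree, and for $q$ sufficiently large a primitive $g\in\mathbb{F}_q$ avoiding its zero set makes $\Theta_{\mathcal{H}}$ invertible. The delicate bookkeeping in this last paragraph---carefully tracking the interplay between the $\mu$-monomial exponents and the spread of $\eta$-cofactor exponents and showing that the chosen spacing of the $i_\ell$'s really does dominate---is the main technical obstacle.
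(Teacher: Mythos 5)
Your overall strategy is the same as the paper's: write every entry of $\Theta_{\mathcal{H}}$ (equivalently, of the column-permuted matrix $\Xi_{\mathcal{H}}$) as a power of the primitive element $g$, expand the determinant by Leibniz so that it becomes a polynomial $f(g)$ with $f\in\mathbb{F}_p[x]$, use the spacing condition $i_{\ell_2}-i_{\ell_1}>\alpha n$ to isolate a unique term of maximal exponent, conclude $f\not\equiv 0$, and invoke a large field size. Your row/column indexing and the identification of each row as a Kronecker product of Vandermonde rows are correct, and your generalized Laplace expansion by column patterns is a legitimate reorganization of the paper's sum over all $\alpha!$ permutations.

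The gap is that the crux — ruling out cancellation — is exactly the part you defer, and it is not routine bookkeeping. Two specific concerns. First, your claim that each block cofactor ``is therefore a nonzero polynomial in the $\eta_k$'s'' is true only for formal variables; a generalized Vandermonde determinant with distinct nodes and distinct exponents can vanish over a finite field (e.g.\ $x_2^2-x_1^2=0$ when $x_2=-x_1$), so everything must ultimately be pushed back to the leading-monomial argument in $g$, which you have not executed. Second, your ``degree window'' framing compares the \emph{maximal} exponent of the distinguished pattern $\mathcal{P}^\star$ against \emph{all} contributions of every other pattern; the $\mu$-exponent gap between two patterns is $\sum_i i_i\bigl(N_i(\mathcal{P}^\star)-N_i(\mathcal{P})\bigr)$ with $\sum_i\bigl(N_i(\mathcal{P}^\star)-N_i(\mathcal{P})\bigr)=0$, and it is not evident that a spacing of only $\alpha n$ dominates the full range of $\eta$-exponents (which can vary by order $n\alpha^2$) under a lexicographic choice of $\mathcal{P}^\star$. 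The paper avoids this by a purely local argument: for any permutation not respecting the block-diagonal structure, a single transposition of two row-column assignments strictly increases the exponent, and the exponent change from one swap is bounded by $\alpha n$, which the spacing condition dominates (Lemma~\ref{Lem_permutation_family}); within the block-diagonal family the $\mu$-part is constant and the $\eta$-part is uniquely maximized by the rearrangement inequality (Lemma~\ref{Lem_Unique_Largest_exponent}). You would need to either reproduce this pairwise-swap argument or strengthen the spacing to make the global window argument go through. Finally, your concluding step (counting primitive elements against roots of $f$) works but requires choosing $g$ to avoid the zero sets of the finitely many determinant polynomials over all $(\mathcal{H},f,d)$; the paper instead notes that $f(g)=0$ with $g$ primitive would force the minimal polynomial of $g$ to divide $f$, which is impossible once $\log_p(q)>\textrm{deg}(f)$, so \emph{any} primitive element works.
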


For the proof of this lemma please see Appendix \ref{App_ThetaRank}.

A similar discussion as in Lemma \ref{Lem_Consistency} shows that the test-group decoding will always finds a consistent test-group in the data reconstruction and any estimate in a consistent test-group is correct.

To summarize, the test-group decoding for the repair is described in Algorithm \ref{Alg_repair}.
\begin{algorithm}
\caption{Test-group decoding for repair}\label{Alg_repair}
\begin{algorithmic}[1]
\State Input: $d$, $f$, $b$, $\underline{r}_{h,f}$, for all helper node $h$.
\State $\text{Consistency} \gets \text{False}$
\While{$\neg(\text{Consistency})$}
   \State $\mathcal{T}\gets$  A new test-group of size $d-b$
   \For{each subset $\mathcal{H}\subset \mathcal{T}$ with $|\mathcal{H}| = d-2b$}
   	\State Calculate $\underline{\rho}_{\mathcal{H},f}, \Theta_{\mathcal{H},f}$ as defined in (\ref{eq_rho_def}) and (\ref{eq_Theta})
   	\State Calculate $\underline{\hat{x}}_{\mathcal{H},f} \gets \underline{\rho}_{\mathcal{H},f} \Theta^{(-1)}_{\mathcal{H},f} $
   \EndFor
   \If{$\underline{\hat{x}}_{\mathcal{H},f} = \underline{\hat{x}}_{\mathcal{H}',f} ~ \forall{\mathcal{H}, \mathcal{H}' \subset \mathcal{T}}$}
   \State $\text{Consistency} \gets \text{True}$
   \State Output $\gets \underline{\hat{x}}_{\mathcal{H},f}$ for some $\mathcal{H}\subset \text{consistent}~\mathcal{T}$
   \EndIf
\EndWhile
\end{algorithmic}
\end{algorithm}

\begin{rmrk}
The repair procedure presented above requires a large field size. This large field size requirement is mainly due to the specific procedure used for calculating estimates $\underline{\hat{x}}_{\mathcal{H},f}$. We refer to Appendix \ref{App_ThetaRank} for details. In the following section we present an alternative coding scheme that reduces the field size requirement to $|\mathbb{F}_{q} |=n$.
\end{rmrk}

\begin{rmrk}
One may notice that the test-group decoding is a framework that could be used jointly with any coding scheme that provides a mechanism for deriving estimates for the content of the failed node based on the repair data provided by any subset of size $d-2b,~d\in D$. However, it worth mentioning that when $b>0$, the simple concatenation scheme, described in subsection\ref{Subsec_concatenation}, fails to achieve optimal total repair bandwidth when under the test-group decoding framework for the repair. The following example illustrates this fact.
\end{rmrk}

\begin{example}
Let $b = 1$, and consider $k = 3 > 2b$, $D = \{4,5\}$, and $\alpha = 6$ to satisfy (\ref{eq_PernodeCap}). For the simple concatenation scheme to be able to perform repair with $d_{\min}=4$ helpers using the test-group decoding, we need to be able to derive an estimate for the coded content of any failed node based on the data provided by $d_{\min}-2b= 2$ helpers. Therefore, each code component needs to be capable of performing repair with $2$ helpers. Hence, it is clear from the properties of the Product Matrix MBR codes that per node storage capacity of each code component is also $2$. We then conclude that since the overall per node capacity is considered to be $\alpha = 6$ we have 
\begin{align}
z = \frac{\alpha}{d_{\min}-2b} = \frac{6}{2} = 3,  \nonumber
\end{align}
code components.
Now consider a repair based on $d_{\min} = 4$ helpers. As described in Algorithm \ref{Alg_repair}, we need to be able to derive an estimate for the content of the failed node based on every subset of size $d-2b$ helpers, for any $d \in D$. In this case, then we need every subset of $2$ helpers to provide enough repair data to perform a repair in all the $z=3$ code components. We then conclude that every helper should provide a repair symbol for every code component, which results in $\beta(d=4) = 3$, and $\gamma(d=4) = 12$. This matches the per node repair bandwidth of the coding scheme presented in this section as given by (\ref{eq_repairBW_Schm1}), 
\begin{align}
\beta(d) = \frac{\alpha}{d-2b} = \frac{6}{4-2} = 3. \nonumber
\end{align}
However, when we consider the case of $d = 5$, then from (\ref{eq_repairBW_Schm1}) we have $\beta(5) = 2$, and $\gamma(d=5) = 5\beta(d=5) = 10$, in the presented coding scheme. On the other hand, $\gamma(d=5) = 10$ in the simple concatenation scheme, forces at least one of the code components to have less than $4$ helpers. Let the selected set of helpers be denoted by $\{h_{1}, \cdots , h_{5}\}$, and without loss of generality, assume that only $h_{1}, h_{2}$ and $h_{3}$ are providing repair symbols for the code component $1$. Then there exists at least one subset of helpers of size $d-2b = 5-2 = 3$, namely $\mathcal{H} = \{h_{3}, h_{4}, h_{5}\}$, in which only one helper provides repair symbol for code component $1$. Therefore, the test-group decoding is impossible for simple concatenation scheme with $\gamma(d=5) = 10$.
\end{example}

\section{An Alternative Coding Scheme with Small Field Size}\label{Sec_Coding_Sch_II}

The coding scheme presented in this section shares many aspects with the scheme presented in Section \ref{Sec_CodeConstruction}. We use the similar extension of PM MBR codes for encoding the data to be stored in the network. We also use the test-group decoding scheme for data reconstruction and repair, while the procedure for calculating the estimates in the test-group decoding for repair is totally different here. The aim of this alternative solution is to avoid the large field size requirement of the previous scheme. The field size requirements in the previous is imposed by the mechanism of calculating estimates for test-group decoding in the repair, which is based on the non-singularity of matrix $\Theta_{\mathcal{H}}$. In the scheme presented in this section, we provide a different repair procedure which still uses the test-group decoding but works with field size $\mathbb{F}_{q}$ of size $n$.


\subsection{Encoding for Storage and Data Reconstruction Procedure}

Consider the set of parameters $n$, $k$, $D=\{d_{1}, \cdots , d_{\delta}\}$, for an arbitrary flexibility degree $\delta$, $b$, and $\alpha$, such that (\ref{eq_d_i_Cond}) and (\ref{eq_PernodeCap}) are satisfied. In the coding scheme presented in this section the code alphabet $\mathbb{F}_{q}$ only needs to contain $n$ distinct non-zero elements. However, in order to achieve such a small field size, we require the per node storage capacity to satisfy some more constraints as will be discussed later in subsection \ref{Subsec_discission_alpha}. Through this section we continue to use the notation $d_{\min}=\min\{d\in D\}=d_{1}$. 

In order to perform the encoding for storage, we first partition the source data symbols $\underline{s} = [s_{1}, \cdots , s_{F_{\text{MBR}}}]$ into $z=\alpha/(d_{\min}-2b)$ disjoint subsets and arrange them in the form of the overall data matrix $M$, as introduced in (\ref{eq_Information_Matrix}). Next we use the rows of the same $\Psi$ Vandermonde matrix as introduced in (\ref{eq_Psi_def}) as the node-specific coefficient vectors, to encode the data to be stored on each storage node $\ell \in \{1, \cdots , n\}$ as
\begin{align}
\underline{x}_{\ell} = \underline{\psi}_{\ell} M, \nonumber
\end{align}
where, as before, $\underline{\psi}_{\ell}$ denotes the $\ell^{\text{th}}$ row of the matrix $\Psi$. Moreover, for some primitive element $g$ of the code alphabet $\mathbb{F}_{q}$, without loss of generality we consider (\ref{eq_e_i_def}) as the choice for the Vandermone matrix $\Psi$. In other words, we choose
\begin{align}
\Psi = \left[\begin{array}{c c c c c}
1~ & ~g~ & ~g^{2}~ & \cdots & ~g^{\alpha-1} \\
1~ & ~(g^2)~ & ~(g^2)^{2}~ & \cdots & ~(g^2)^{\alpha-1} \\
\vdots & \vdots & \vdots & \ddots & \vdots \\
1~ & ~(g^n)~ & ~(g^n)^{2}~ & \cdots & ~(g^n)^{\alpha-1} \\
\end{array} \right]. \nonumber
\end{align}

Since the encoded consent of the storage nodes is exactly similar to the coding scheme introduced in Section \ref{Subsec_Enc_for_Storage}, we can use the same data reconstruction procedure as presented in Algorithm \ref{Alg_reconstriction}. As a result, the data reconstruction procedure is guaranteed to reconstruct genuine data in the presence of up to $b$ erroneous nodes.

In order to describe the repair procedure, we first need to present some notations and definitions in the following subsection.

\subsection{Notations and Preliminaries for Repair Procedure}

We begin this section by first introducing some new notations and definitions, and a Lemma which will be used to describe the repair scheme. Then we describe the repair based on a single group of helpers of size $d \in D$. Note that, all through the procedure we always require all the helpers to perform similar procedures on their content and hence the provided repair symbols are always symmetric. As a result, the choice of helpers do not change the procedure and the same scheme could be performed based on any other subset of helpers of the same size. The only determinant parameter is the size of the group of helpers to be used for generating a single estimate.

Let
\begin{align}\label{eq_xi_def}
\xi = \left\lfloor \frac{(d-2b)}{(d_{\min}-2b)}\right\rfloor (d_{\min}-2b),
\end{align}
which results in $\xi \leq (d-2b) < 2 \xi$. Also let
\begin{align}
\zeta = \frac{\alpha}{\xi}. \nonumber
\end{align}
We will assume $\zeta$ is an integer for any choice of $d \in D$.

One can consider the data matrix $M$, as follows,
\begin{align}
M = \left[\begin{array}{c c c c c}
M'_{1} & O & O & \cdots & O \\
O & M'_{2} & O & \cdots & O \\
\vdots & ~~ & \ddots & ~~ & \vdots \\
O & O & \cdots & O & M'_{\zeta}
\end{array}\right], \nonumber
\end{align}
where, $M'_{i},~i\in\{1,\cdots , \zeta\}$ is an $\xi \times \xi$ block diagonal matrix with $c=\frac{\xi}{d_{\min}-2b}$ diagonal blocks, $M_{(i-1)c+1}$, $M_{(i-1)c+2}$, $\cdots$ , $M_{ic}$. This structure is depicted in Fig \ref{Fig_M_primes}. 

\begin{figure}
\centering
\resizebox{3.25 in}{!}{\includegraphics{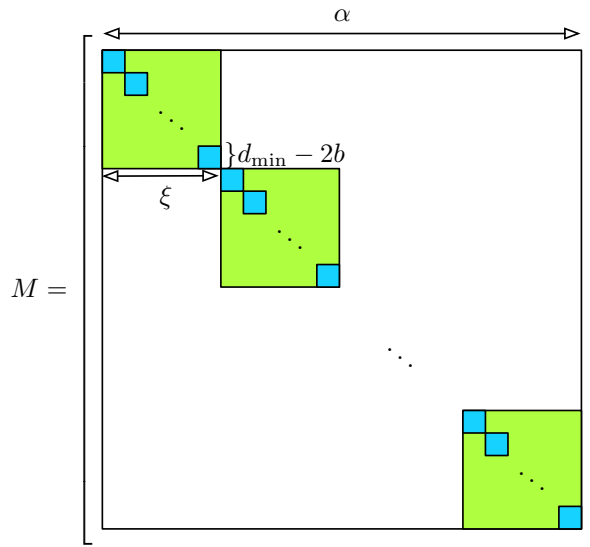}}
\caption{The structure of the data matrix $M$, with its submatrices $M'_{i},~i\in\{1,\cdots , \zeta\}$, depicted as diagonal green squares. Each $\xi \times \xi$ submatrix $M'_{i}$ is itself a block diagonal matrix with $(d_{\min}-2b)\times (d_{\min}-2b)$ diagonal blocks, $M_{j},~j\in\{(i-1)c+1 , \cdots , ic\}$ depicted as small blue squares.}\label{Fig_M_primes}
\end{figure}

Accordingly, for each node, $\ell$, we consider a partitioning on its node-specific coefficient vector, $\underline{\psi}_{\ell}$, as well as its coded content, $\underline{x}_{\ell}$, to disjoint consequent segments of size $\xi$ as follows
\begin{align}
\underline{\psi}_{\ell} = [\underline{\varphi}_{\ell}(1), \underline{\varphi}_{\ell}(2), \cdots , \underline{\varphi}_{\ell}(\zeta)], \nonumber \\
\underline{x}_{\ell} = [\underline{\chi}_{\ell}(1), \underline{\chi}_{\ell}(2), \cdots , \underline{\chi}_{\ell}(\zeta)]. \nonumber
\end{align}
Therefore, each segment $\underline{\chi}_{\ell}(i)$ is associated with the submatrix $M'_{i}$ of the data matrix as
\begin{align}\label{eq_x_ell_segment}
\underline{\chi}_{\ell}(i) = \underline{\varphi}_{\ell}(i)M'_{i},~~i \in\{1, \cdots , \zeta \}.
\end{align}

We now present the following definition and its following lemma, which would be used through the repair scheme.

\begin{defn}[Merge Operator]
Consider an element $e\in \mathbb{F}_{q}$, and vectors $\underline{v},\underline{u}\in \mathbb{F}_{q}^{\xi}$, and denote the $i^{\text{th}}$ element of $\underline{v}$ and $\underline{u}$ by $v_{i}$ and $u_{i}$ respectively. For integers $m, \epsilon$, such that  $2\xi > m \geq \xi$, and $\epsilon > 1$, the merge operator $\Phi_{m,\xi}: \mathbb{F}_{q} \times \mathbb{F}_{q}^{\xi} \times \mathbb{F}_{q}^{\xi}\rightarrow\mathbb{F}_{q}^{m}$, is defined as
\begin{align}
\Phi_{m,\epsilon}(e,\underline{v},\underline{u}) = \underline{\phi} = [\phi_{1}, \phi_{2}, \cdots , \phi_{m}], \nonumber
\end{align}
where,
\begin{align}\label{eq_merge_def}
\phi_{i} = \left\lbrace \begin{array}{c c}
v_{i} ~ & ~ \text{for}~i \in\{1, \cdots , m-\xi \}, \\
v_{i}+e^{m-\epsilon\xi}u_{i+\xi-m} ~ & ~ \text{for}~i \in\{m-\xi+1, \cdots , \xi \}, \\
e^{m-\epsilon\xi}u_{i+\xi-m} ~ & ~ \text{for}~i \in\{\xi+1, \cdots , m \}.
\end{array}  \right.
\end{align}
In other words,
\begin{align}\label{eq_merge}
\Phi_{m,\epsilon}(e,\underline{v},\underline{u}) = [\underline{v}, \underbrace{0, \cdots, 0}_{m-\xi}] + e^{m-\epsilon\xi}[\underbrace{0, \cdots, 0}_{m-\xi}, \underline{u}].
\end{align}
\end{defn} 

Figure \ref{Fig_merge_def} depicts the definition of the merge operator as defined above.

\begin{figure}
\centering
\resizebox{3 in}{!}{\includegraphics{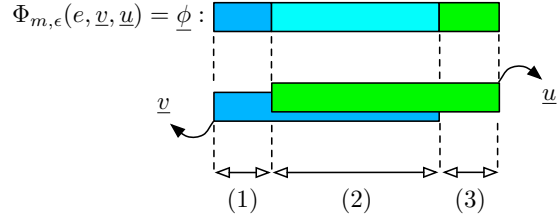}}
\caption{The result of performing the merge operator $\Phi_{m,\epsilon}$ on vectors $\underline{v}$, and $\underline{u}$. Segment $(1)$ is of length $m-\xi$ and consists only of entries of vector $\underline{v}$. Segment $(2)$ is of length $2\xi -m$ and its entries are combinations of entries of $\underline{v}$ and $\underline{u}$. Segment $(3)$ is also of length $m-\xi$ and its entries come from $\underline{u}$, scaled by a constant.}\label{Fig_merge_def}
\end{figure}

The following lemma presents an important observation, connecting the outcome of the merge operator performed on two segments of the coded content in a helper node and its counterpart in the failed node. In order to present the lemma, we need the following notation. For each node $\ell$ in the network, let $e_{\ell}$ be the element in the code alphabet $\mathbb{F}_{q}$ associated with the coefficient vector assigned to node $\ell$, and consider an integer $m$, such that $\xi \leq m < 2\xi$. We will use the notation $\underline{\psi}_{\ell,m}$ for the following $1\times m$ vector,
\begin{align}
\underline{\psi}_{\ell,m} = [e_{\ell}^{0}, e_{\ell}^{1}, \cdots , e_{\ell}^{m-1}]. \nonumber
\end{align}
Note that $\underline{\psi}_{\ell,m}$ indeed, denotes the first $m$ entries of the node-specific coefficient vector $\underline{\psi}_{\ell}$

\begin{lem}\label{Lem_merge}
Consider the elements $e_{h}$, and $e_{f}$, in the code alphabet $\mathbb{F}_{q}$, associated with the coefficient vectors of a helper node $h$ and the failed node $f$ respectively. For an integers $m$ such that $\xi \leq m  < 2\xi$, and two integers $i,j\in\{1, \cdots , \zeta\}$  with $i<j$, let $\epsilon = j-i+1$. Then we have,
\begin{align}
e_{h}^{(i-1)\xi}\left(\Phi_{m,\epsilon}(e_{f},\underline{\chi}_{f}(i), \underline{\chi}_{f}(j)) \underline{\psi}_{h,m}^{\intercal}\right) = e_{f}^{(i-1)\xi}\left(\Phi_{m,\epsilon}(e_{h},\underline{\chi}_{h}(i), \underline{\chi}_{h}(j)) \underline{\psi}_{f,m}^{\intercal}\right). \nonumber
\end{align}
\end{lem}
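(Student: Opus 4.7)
The plan is to reduce the identity to a symmetric scalar expression in $e_h$ and $e_f$ by exploiting three ingredients: (i) the Vandermonde form of the coefficient vectors, which gives $\underline{\varphi}_\ell(p)=e_\ell^{(p-1)\xi}\underline{\psi}_{\ell,\xi}$ and hence, by (\ref{eq_x_ell_segment}), $\underline{\chi}_\ell(p)=e_\ell^{(p-1)\xi}\underline{\psi}_{\ell,\xi}M'_p$ for every node $\ell$ and every segment $p$; (ii) the symmetry of each $M'_p$, which follows because it is block diagonal with the symmetric PM-style blocks $M_s$ from Subsection \ref{Subsec_Enc_for_Storage}; and (iii) the representation (\ref{eq_merge}) of $\Phi_{m,\epsilon}$ as a sum of two zero-padded vectors, with the second scaled by $e_f^{m-\epsilon\xi}$.

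First I would expand $\Phi_{m,\epsilon}(e_f,\underline{\chi}_f(i),\underline{\chi}_f(j))\,\underline{\psi}_{h,m}^{\intercal}$ via (\ref{eq_merge}). Because the first $\xi$ entries of $\underline{\psi}_{h,m}$ agree with $\underline{\psi}_{h,\xi}$, the first piece reduces to $\underline{\chi}_f(i)\,\underline{\psi}_{h,\xi}^{\intercal}$. The trailing $\xi$ entries of $\underline{\psi}_{h,m}$ equal $e_h^{m-\xi}\underline{\psi}_{h,\xi}$, so the second piece reduces to $e_f^{m-\epsilon\xi}\,e_h^{m-\xi}\,\underline{\chi}_f(j)\,\underline{\psi}_{h,\xi}^{\intercal}$. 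After multiplying by the outer $e_h^{(i-1)\xi}$ and substituting $\underline{\chi}_f(p)=e_f^{(p-1)\xi}\underline{\psi}_{f,\xi}M'_p$ for $p\in\{i,j\}$, the LHS becomes a sum of two terms each of the form (monomial in $e_h,e_f$)$\,\cdot\,\underline{\psi}_{f,\xi}M'_p\underline{\psi}_{h,\xi}^{\intercal}$.

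The symmetry of $M'_p$ then lets me rewrite each scalar $\underline{\psi}_{f,\xi}M'_p\underline{\psi}_{h,\xi}^{\intercal}$ as $\underline{\psi}_{h,\xi}M'_p\underline{\psi}_{f,\xi}^{\intercal}$, which is precisely the matrix factor that appears on the RHS after performing the same expansion with the roles of $h$ and $f$ swapped. It then suffices to check that the scalar monomials match. For the piece involving $M'_i$ both sides give the common factor $e_h^{(i-1)\xi}e_f^{(i-1)\xi}$, so the match is immediate. For the piece involving $M'_j$, the LHS carries $e_h^{(i-1)\xi+(m-\xi)}\,e_f^{(m-\epsilon\xi)+(j-1)\xi}$ while the RHS carries $e_f^{(i-1)\xi+(m-\xi)}\,e_h^{(m-\epsilon\xi)+(j-1)\xi}$; substituting $\epsilon=j-i+1$ collapses both exponents to $m+(i-2)\xi$, so the two monomials are the same symmetric expression in $e_h$ and $e_f$ and the two sides agree.

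The only non-routine step is this last exponent calculation, and in particular the role played by the scaling factor $e_f^{m-\epsilon\xi}$ that is built into the merge operator: it is tuned so that the combined exponent of $e_f$ on the LHS matches the combined exponent of $e_h$ on the RHS, which is exactly where the hypothesis $\epsilon=j-i+1$ earns its keep. The remaining work is linear-algebraic bookkeeping; the assumption $\xi\le m<2\xi$ guarantees $m-\xi\in[0,\xi)$, so the zero-padding in (\ref{eq_merge}) and the corresponding decomposition of $\underline{\psi}_{h,m}$ are well defined, and no further ingredients are needed.
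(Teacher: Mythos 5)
Your proof is correct. You and the paper both reduce the identity to the symmetry of the data blocks and the Vandermonde structure of the coefficient vectors, but the bookkeeping is organized differently. The paper rewrites the entire merge vector as a single matrix--vector product, $\Phi_{m,\epsilon}(e_f,\underline{\chi}_f(i),\underline{\chi}_f(j)) = e_f^{(i-1)\xi}\underline{\psi}_{f,m}\Lambda_{i,j}$, where $\Lambda_{i,j}$ is a symmetric $m\times m$ matrix assembled from $M'_i$ and $M'_j$ with zero padding; it then finishes with one transpose of the resulting scalar $\underline{\psi}_{f,m}\Lambda_{i,j}\underline{\psi}_{h,m}^{\intercal}$. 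You instead keep the two pieces of the merge operator separate, decompose $\underline{\psi}_{\ell,m}$ into its leading and trailing $\xi$-blocks, and land on two $\xi$-dimensional bilinear forms $\underline{\psi}_{f,\xi}M'_p\underline{\psi}_{h,\xi}^{\intercal}$ whose scalar prefactors you check to be symmetric in $e_h, e_f$ by an explicit exponent count. The paper's route is more compact and hides the role of $\epsilon$ inside the definition of $\Lambda_{i,j}$; your route is more explicit and makes transparent exactly why the scaling $e_\ell^{m-\epsilon\xi}$ is calibrated so that the $M'_j$ term's combined exponent $m+(i-2)\xi$ is the same on both sides once $\epsilon=j-i+1$ is inserted. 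Both verifications are complete, and in particular your exponent computation --- $e_h^{(i-1)\xi+m-\xi}$ and $e_f^{(m-\epsilon\xi)+(j-1)\xi}$ both equal $e^{m+(i-2)\xi}$ --- is exactly the identity that makes the paper's $\Lambda_{i,j}$ representation valid in the first place, so the two arguments are equivalent in substance even though the intermediate objects differ.
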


\begin{proof}
From (\ref{eq_merge}) it is easy to see that 
\begin{align}\label{eq_Phi_f}
\Phi_{m,\epsilon}(e_{f},\underline{\chi}_{f}(i), \underline{\chi}_{f}(j)) &= [\underline{\chi}_{f}(i), \underbrace{0, \cdots, 0}_{m-\xi}] + e_{f}^{m-\epsilon\xi}[\underbrace{0, \cdots, 0}_{m-\xi}, \underline{\chi}_{f}(j)].
\end{align}
Now, using (\ref{eq_x_ell_segment}) we can rewrite the two terms on the right hand side above as,
\begin{align}\label{eq_first_term}
[\underline{\chi}_{f}(i), \underbrace{0, \cdots, 0}_{m-\xi}] &= e_{f}^{(i-1)\xi}\underline{\psi}_{f,m}\left[ \begin{array}{ c c }
M'_{i} ~ & ~ O_{1} \\
O_{1}^{\intercal} ~ & ~ O_{2}
\end{array} \right], 
\end{align}
and
\begin{align}\label{eq_second_term}
e^{m-\epsilon\xi}[\underbrace{0, \cdots, 0}_{m-\xi}, \underline{\chi}_{f}(j)] &= e_{f}^{m-(j-i+1)\xi}\left( e_{f}^{(j-m)\xi}\underline{\psi}_{f,m}\left[ \begin{array}{ c c }
O_{2} ~ & ~ O_{1}^{\intercal} \\
O_{1} ~ & ~ M'_{j}
\end{array} \right]\right) \nonumber  \\
&= e_{f}^{(i-1)\xi}\underline{\psi}_{f,m}\left[ \begin{array}{ c c }
O_{2} ~ & ~ O_{1}^{\intercal} \\
O_{1} ~ & ~ M'_{j}
\end{array} \right].
\end{align}
where, $O_{1}$ and $O_{2}$ are all zero matrices of size $\xi \times (m-\xi)$ and $(m-\xi) \times (m-\xi)$ respectively. Let $\Lambda_{i,j}$ denote the $m \times m$ symmetric matrix defined as,
\begin{align}
\Lambda_{i,j} = \left[ \begin{array}{ c c }
M'_{i} ~ & ~ O_{1} \\
O_{1}^{\intercal} ~ & ~ O_{2}
\end{array} \right] + \left[ \begin{array}{ c c }
O_{2} ~ & ~ O_{1}^{\intercal} \\
O_{1} ~ & ~ M'_{j}
\end{array} \right]. \nonumber
\end{align}
Therefore, from (\ref{eq_Phi_f}), (\ref{eq_first_term}), and (\ref{eq_second_term}) we have,
\begin{align}\label{eq_Phi_f_rewrite}
\Phi_{m,\epsilon}(e_{f},\underline{\chi}_{f}(i), \underline{\chi}_{f}(j)) = e_{f}^{(i-1)\xi}\underline{\psi}_{f,m} \Lambda_{i,j}.
\end{align}
Similarly, one can show that 
\begin{align}\label{eq_Phi_h_rewrite}
\Phi_{m,\epsilon}(e_{h},\underline{\chi}_{h}(i), \underline{\chi}_{h}(j)) = e_{h}^{(i-1)\xi}\underline{\psi}_{h,m} \Lambda_{i,j}.
\end{align}
Then we have,
\begin{align}\label{eq_first_side}
e_{h}^{(i-1)\xi}\left(\Phi_{m,\epsilon}(e_{f},\underline{\chi}_{f}(i), \underline{\chi}_{f}(j)) \underline{\psi}_{h,m}^{\intercal}\right) &= e_{h}^{(i-1)\xi} \left(\left( e_{f}^{(i-1)\xi} \underline{\psi}_{f,m} \Lambda_{i,j} \right) \underline{\psi}_{h,m}^{\intercal} \right) \nonumber \\
&= e_{h}^{(i-1)\xi} e_{f}^{(i-1)\xi} \left(\underline{\psi}_{f,m} \Lambda_{i,j} \underline{\psi}_{h,m}^{\intercal} \right)^{\intercal} \nonumber \\
&= e_{f}^{(i-1)\xi} e_{h}^{(i-1)\xi}  \left(\underline{\psi}_{h,m} \Lambda_{i,j} \underline{\psi}_{f,m}^{\intercal} \right). 
\end{align}
In the above, (\ref{eq_first_side}) follows from the fact that $\underline{\psi}_{f,m} \Lambda_{i,j} \underline{\psi}_{h,m}^{\intercal}$ is a scalar. Hence, using (\ref{eq_second_term}) we have,
\begin{align}
e_{f}^{(i-1)\xi} e_{h}^{(i-1)\xi}  \left(\underline{\psi}_{h,m} \Lambda_{i,j} \underline{\psi}_{f,m}^{\intercal} \right) = e_{f}^{(i-1)\xi}\left(\Phi_{m,\epsilon}(e_{h},\underline{\chi}_{h}(i), \underline{\chi}_{h}(j))  \underline{\psi}_{f,m}^{\intercal}\right), \nonumber
\end{align}
which completes the proof.
\end{proof}

Based on the Lemma \ref{Lem_merge}, we now introduce \emph{"$m$-merged repair symbol from segments $i$ and $j$"} in the following definition. 

\begin{defn}\label{Def_m_merged_repair_symb}[$m$-merged Repair Symbol]
For the fixed $\xi$, and any integer $m,~\xi \leq m < 2\xi$, each helper $h$ can merge two segments $\underline{\chi}_{h}(i)$, and $\underline{\chi}_{h}(j)$ of its coded content by setting $\epsilon = j-i+1$ and create a repair symbol as
\begin{align}\label{eq_merge_repair_symb}
\rho_{m,h,f}(i,j) = e_{f}^{(i-1)\xi}\left(\Phi_{m,\epsilon}(e_{h},\underline{\chi}_{h}(i), \underline{\chi}_{h}(j))  \underline{\psi}_{f,m}^{\intercal}\right).
\end{align}
We will refer to this repair symbol as the  \emph{"$m$-merged repair symbol from segments $i$ and $j$"}.
\end{defn}

Lemma \ref{Lem_merge} then guarantees that such a repair symbols provides a linear equation in terms of the symbols resulting from merging segments $\underline{\chi}_{f}(i)$, and $\underline{\chi}_{f}(j)$. The coefficient of this linear equation is given by
\begin{align}
e_{h}^{(i-1)\xi}\underline{\psi}_{h,m}. \nonumber
\end{align}

\subsection{Repair Scheme}

Now we have everything ready to describe the repair scheme. In order to provide the required error resiliency, we use the test-group decoder scheme as described in the previous Section. However, here we use different encoding and estimate calculation procedures. To this end, for a set of $d$ helpers, $\binom{d}{d-2b}$ parallel decoding procedures calculate all possible estimates for $\underline{\hat{x}}_{f,\mathcal{H}}$, for any subset $\mathcal{H}$ of size $d-2b$ of the selected helpers. The decoder then decides the correct decoding result by checking the consistency among all estimates derived from subsets of any test-group $\mathcal{T}$ with $|\mathcal{T}| = d-b$, as described in Algorithm \ref{Alg_repair}.

In the rest of this subsection we will describe the repair scheme, based on a set of $d$ helpers, in the form of an iterative process. In each iteration we have an encoding step which is performed similarly by all the participating helpers, and produces some repair symbols. We also have a decoding step in each iteration which is performed at the repair decoder. The goal of the decoding procedure is to calculate an estimate for the coded content of the failed node based on every subset of size $d-2b$ of the selected helpers. Every encoding step is the same for all the helpers and in every decoding step the procedure to be performed for calculating the estimate based on all subsets is similar. Therefore in the rest of this discussion, we only focus of a single arbitrary subset of size $d-2b$. 

Let $f$ denote the index of the failed node, and denote the coded content of the failed node as
\begin{align}
\underline{x}_{f} = [\underline{\chi}_{f}(1), \cdots , \underline{\chi}_{f}(\zeta)], \nonumber
\end{align}
where each $\underline{\chi}_{f}(i)$ is referred to as a segment, and contains $\xi$ entries. In the process of calculating an estimate $\hat{\underline{x}}_{f}$, the decoder calculates an estimate for every single entry of every single segment. The number of iterations for a repair procedure depends on the selected parameter $d$, and its corresponding $\xi$ as defined in (\ref{eq_xi_def}). Recall that from the definition of $\xi$ for a given $d \in D$ we have
\begin{align}
\xi \leq d-2b < 2\xi \nonumber
\end{align}
If we have $\xi = d-2b$, then the number of entries to be estimated in each segment matches the number of the helpers from which the test-group decoder needs to calculate an estimate. In this case the repair scheme consists of only one iteration as follows. Each helper $h$ produces one repair symbol from each segment of its coded content, $\underline{x}_{h}$, as 
\begin{align}
r_{h,i} = \underline{\chi}_{h}(i)\left(\underline{\varphi}_{f}(i)\right)^{\intercal},~~i \in \{1, \cdots, \zeta\} \nonumber
\end{align}
and sends the repair symbols to the repair decoder. Note that for $i \in \{1, \cdots, \zeta\}$ then we have,
\begin{align}\label{eq_repair_symbol_first_iteration}
r_{h,i} &= \underline{\chi}_{h}(i)\left(\underline{\varphi}_{f}(i)\right)^{\intercal} \nonumber \\
&= \underline{\varphi}_{h}(i)M'_{i}\left(\underline{\varphi}_{f}(i)\right)^{\intercal} \nonumber \\
&= \underline{\varphi}_{f}(i)M'_{i}\left(\underline{\varphi}_{h}(i)\right)^{\intercal} \nonumber \\
&= \underline{\chi}_{f}(i)\left(\underline{\varphi}_{h}(i)\right)^{\intercal}
\end{align}

At the repair decoder, fro each segment $i$, stacking all the repair symbols $r_{h_{\ell},i}$ received from each subset of helpers $\mathcal{H} = \{h_{1}, \cdots , h_{d-2b}\}$, from (\ref{eq_repair_symbol_first_iteration}) we have,
\begin{align}\label{eq_single_iteration_repair_system}
[r_{h_{1},i}, \cdots , r_{h_{d-2b},i}] = \underline{\chi}_{f}(i)\left[\left(\underline{\varphi}_{h_{1}}(i)\right)^{\intercal}, \cdots , \left(\underline{\varphi}_{h_{d-2b}}(i)\right)^{\intercal}\right],~~i \in \{1, \cdots, \zeta\}
\end{align}
It is easy to check that the coefficient matrix on the right hand side of the above equation is $(d-2b)\times (d-2b)$, and it is invertible, since it could be decomposed to a diagonal full ranks matrix multiplied into a Vandermonde matrix. Therefore, the decoder is able to calculate every segment $\underline{\chi}_{f}(i)$ based on the repair symbols provided by the helpers in $\mathcal{H} = \{h_{1}, \cdots , h_{d-2b}\}$, and the repair scheme terminates at the end of this single iteration.

On the other hand, when we have $\xi < d-2b < 2\xi$, the number of entries in each segment $\underline{\chi}_{f}(i)$ is larger than $d-2b$, and hence the linear equations system introduced in (\ref{eq_single_iteration_repair_system}) is not uniquely solvable. As a result, the single iteration introduced above is not applicable. In this case the repair scheme has more than one iteration. In the rest of this subsection we describe these iterations. We use a numerical example, evolving through the description of procedures, to better illustrate the procedures. 

In the first iteration every helper forms disjoint groups consisting of two consecutive segments of its coded content, and uses the merge operator $\Phi_{(d-2b),1}$ introduced in the previous subsection, to merge the segments in each group. Let $I_{g}=\{2g-1,2g\}$, denote the set of indices of the segment in group $g$, then we the result of the merge operator in group $g$ at helper $h$ in iteration one is,
\begin{align}\label{eq_phi_h_1_g}
\underline{\phi}_{h,1,g} = \Phi_{(d-2b), 1}(e_{h}, \underline{\chi}_{h}(2g-1), \underline{\chi}_{h}(2g)).
\end{align}
Finally each helper $h$ creates one $(d-2b)$-merged repair symbol from each group of two segments, as defined in Definition \ref{Def_m_merged_repair_symb}. For instance, the repair symbol from helper $h$ based on the group $g$ in iteration one, is created as
\begin{align}
r_{h,1,g} = e_{f}^{(2g-2)\xi}\underline{\phi}_{h,j,g}\underline{\psi}_{f,(d-2b)}^{\intercal} = \rho_{(d-2b),h,f}(2g-1,2g). \nonumber
\end{align}
Similar to (\ref{eq_phi_h_1_g}), let us denote 
\begin{align}
\underline{\phi}_{f,1,g} = \Phi_{(d-2b), 1}(e_{f}, \underline{\chi}_{f}(2g-1), \underline{\chi}_{f}(2g)). \nonumber
\end{align}
Hence, using Lemma \ref{Lem_merge}, we have
\begin{align}
r_{h,1,g} = \rho_{(d-2b),h,f}(2g-1,2g) = \underline{\phi}_{f,1,g} \left(e_{h}^{(2g-2)\xi}\underline{\psi}_{h,(d-2b)}^{\intercal}\right). \nonumber
\end{align}

Then, the repair symbols provided by any subset, $\mathcal{H}=\{h_{1}, \cdots , h_{(d-2b)}\}$, of size $(d-2b)$ of the helpers provides a linear equation system in terms of the entries in segments $\underline{\chi}_{f}(2g-1)$ and $\underline{\chi}_{f}(2g)$, for any group $g$, as follows,
\begin{align}
\left[r_{h_{1},1,g}, \cdots , r_{h_{(d-2b)},1,g} \right] = \underline{\phi}_{f,1,g} \left[\underline{\psi}_{h_{1},(d-2b)}^{\intercal}, \cdots , \underline{\psi}_{h_{(d-2b)},(d-2b)}^{\intercal}\right]. \nonumber
\end{align}
In the above, the columns of the coefficient matrix on the right hand side are linearly independent, and the matrix is indeed a $(d-2b)\times (d-2b)$ Vandermonde matrix, which is invertible. Therefore the repair decoder is able to calculate all the entries in $\underline{\phi}_{f,1,g}$, for every group $g$, based on the repair symbols provided by each subset of helpers of size $d-2b$ in the first iteration. However, notice that the entries in $\underline{\phi}_{f,1,g}$ can be categorized into two categories as depicted in Fig. \ref{Fig_Merged_segment_iteration_one}, where category (1) contains $2\xi - (d-2b)$ entries and category (2) consists of $2(d-2b-\xi)$ entries in total. 

\begin{figure}
\centering
\resizebox{2.5 in}{!}{\includegraphics{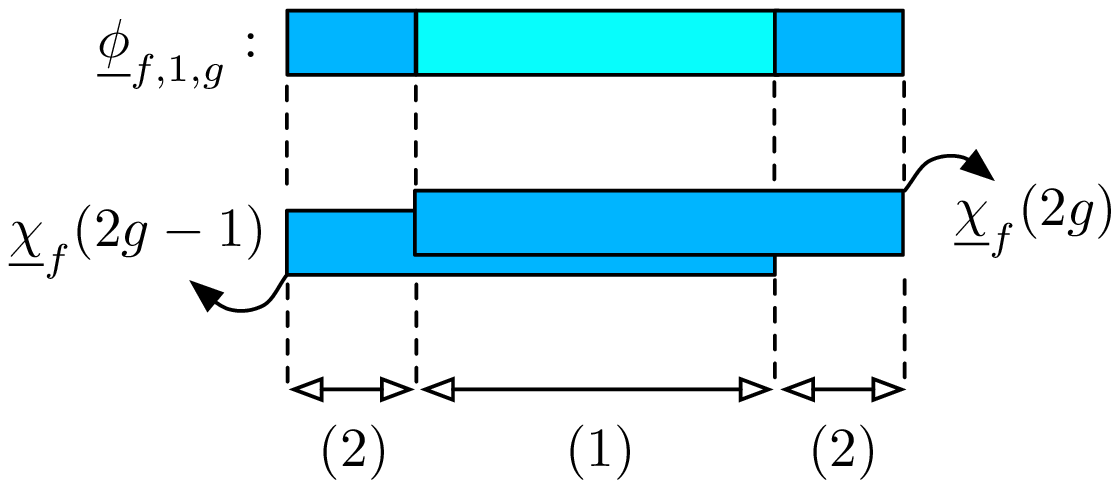}}
\caption{An active segment $\underline{\chi}(i)$; all active entries (depicted in blue) have indices less than known entries (depicted in green).}\label{Fig_Merged_segment_iteration_one}
\end{figure}

The entries in category (2) are either the same as a single entry from $\underline{\chi}_{f}(2g-1)$, or a scaled version of a single entry in $\underline{\chi}_{f}(2g)$, where the scaling factor is $e_{f}^{d-2b-\xi}$. Then calculating each of the entries in category (2) of $\underline{\phi}_{f,1,g}$ recovers the value of one entry from the lost coded vector $\underline{x}_{f}$. On the other hand, each of the entries in category (1) of $\underline{\phi}_{f,1,g}$ is formed by a linear combination of one entry from $\underline{\chi}_{f}(2g-1)$ and one entry from $\underline{\chi}_{f}(2g)$. Therefore, by recovering the value of each entry in category (1) of $\underline{\phi}_{f,1,g}$, the decoder is only able to calculate one entry from $\underline{\chi}_{f}(2g-1)$ in terms of an entry from $\underline{\chi}_{f}(2g)$.

Note that in the process of calculating an estimate of $\underline{x}_{f}$, the decoder needs to calculates an estimate for every single entry of every single segment. At the end of the iteration one, as explained above, the decoder calculates some estimates for some of the entries, some other entries are estimated in terms of some other entries, and for the rest of the entries we have not calculated any estimate yet. In order to keep track of this process, we assign a label to each entry of $\underline{x}_{f}$. We set the label for all entries which are already estimated as \emph{"known"}. At the end of iteration one, known entries include all entries of $\underline{x}_{f}$ corresponding to an entry in category (2) of $\underline{\phi}_{f,1,g}$, for all $g$.

The entries of $\underline{x}_{f}$ which are not known yet at the end of iteration one, are the entries which have been combined to form an entry of category (1) in $\underline{\phi}_{f,1,g}$, for some $g$. As described above, among these entries, we can estimate each entry from $\underline{\chi}_{f}(2g-1)$ in terms of another entry in $\underline{\chi}_{f}(2g)$ at the end of iteration one. We label all such entries in $\underline{\chi}_{f}(2g-1)$ as \emph{"inactive"}. The decoder does not need to work on calculating the estimate of the inactive entries any more, since their explicit estimate will be evaluated once all the other entries are estimated.

All the other entries then need to be estimated yet in the following iterations and hence we label them as "active". We refer to an entry with "active" label as an active entry. If the decoder calculates an explicit estimate for an active entry in some later iteration, the label for that entry changes to "known". Similarly, the decoder may calculate an estimate for it in terms of another active entry and hence change its label to "inactive".

At each iteration we also refer to a segment $\underline{\chi}_{f}(i)$ as active if it contains at least one active entry. The repair procedure terminates when there is no active entry left. We will show that both the number of active segments as well as the number of active entries in each active segment reduces as we move through the steps of the decoding.

It is easy to check all the following properties are satisfied at the end of iteration one.
\begin{itemize}
\item The number of active entries in each active segment will always be the same for all active segments in each iteration.
\item Any active segment will only contain either active or known entries. In other words, an active segment will never contain an inactive entry.
\item In each active segment the indices of the active entries are always less than the indices of the known entries (see Fig. \ref{Fig_active_segment}).
\end{itemize}

The decoder will then proceed through the next iterations by keeping all these properties as invariants, as will be described in the following.

\begin{figure}
\centering
\resizebox{2.5 in}{!}{\includegraphics{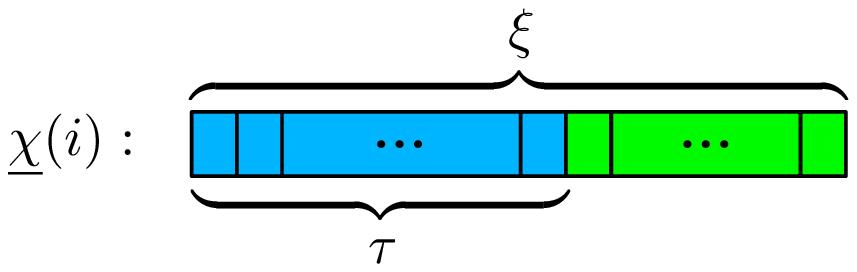}}
\caption{An active segment $\underline{\chi}(i)$; all active entries (depicted in blue) have indices less than known entries (depicted in green).}\label{Fig_active_segment}
\end{figure}

Let $\tau_{j}$ denote the number of active entries in each active segment at the beginning of iteration $j$. Since all entries in all segments are active at the beginning of the repair procedure, then we have
\begin{align}
\tau_{1} = \xi \nonumber
\end{align}
Also let non-negative integers $\mu_{j}$ and $\sigma_{j}$ be such that
\begin{align}\label{eq_mu_sigma_def}
(d-2b) = \mu_{j}\tau_{j} + \sigma_{j},~~0\leq \sigma_{j} < \mu_{j}.
\end{align}

\begin{example}\label{Ex_2nd_repair}
Consider the parameters $\delta = 2$, $n=6$, $k=3$, $d_{\min} = d_{1} = 4$, $d_{\delta} = d_{2} = 5$, $b = 1$, and assume $\alpha = 12$. The content of each node $i$ consists of $\underline{x}_{i} = [x_{i,1} , \cdots , \underline{x}_{i,12}]$, and at the beginning all 12 entries are active. We will consider the case $d =5$, and assume node $f=6$ is failed, helpers are $\{1, \cdots , 5\}$, and we focus on the procedure of calculating an estimate based on the subset $\mathcal{H}=\{1, 2, 3\}$. From (\ref{eq_xi_def}) we have $\xi = 2$. Moreover, we have,
\begin{align}
\underline{\chi}_{\ell}(1) &= [x_{\ell,1}, x_{\ell,2}], ~ \underline{\chi}_{\ell}(2) = [x_{\ell,3}, x_{\ell,4}],~~\underline{\chi}_{\ell}(3) = [x_{\ell,5}, x_{\ell,6}], \nonumber \\
\underline{\chi}_{\ell}(4) &= [x_{\ell,7}, x_{\ell,8}], ~ \underline{\chi}_{\ell}(5) = [x_{\ell,9}, x_{\ell,10}],~\underline{\chi}_{\ell}(6) = [x_{\ell,11}, x_{\ell,12}]. \nonumber
\end{align}

In step $j = 1$ then $\tau_{1} = \xi = 2$ and from (\ref{eq_mu_sigma_def}) we have, $\mu_{1} = 1$, and $\sigma_{1} = 1$.
\end{example}

The rest of the repair procedure then depend on whether $\sigma_{j} > 0$ or $\sigma_{j} = 0$. In the following we will first assume $\sigma_{j} > 0$, and then at the end of this subsection we describe the case of $\sigma_{j}=0$.

\subsubsection{Case of $\sigma_{j}>0$}

When $\sigma_{j} > 0$, we group each $\mu_{j}+1$ consequent active segments. Each helper then modifies each group by merging the last two active segments in each group.  Let 
\begin{align}
I_{g} = \{i_{1}, i_{2}, \cdots , i_{\mu_{j}}, i_{\mu_{j}+1}\},~\text{and},~i_{1}<i_{2}<\cdots <i_{\mu_{j}}<i_{\mu_{j}+1}. 
\end{align}
denote the set of indices of the active segments in the $g^{\text{th}}$ group. Each helper $h$ then use the merge operator $\Phi_{m_{j},\epsilon_{j,g}}$ for 
\begin{align}\label{eq_m_ell_sigma_ell}
m_{j} = \xi + \sigma_{j}, 
\end{align}
and 
\begin{align}\label{eq_epsilon_ell_j}
\epsilon_{j,g} = i_{\mu_{j}+1} - i_{\mu_{j}} + 1, 
\end{align}
to merge the last two segments in the group as follows
\begin{align}\label{eq_phi_h_ell_j}
\underline{\phi}_{h,j,g} = \Phi_{m_{j}, \epsilon_{j,g}}(e_{h}, \underline{\chi}_{h}(i_{\mu_{j}}), \underline{\chi}_{h}(i_{\mu_{j}+1})).
\end{align}

\begin{example}
Let's continue Example \ref{Ex_2nd_repair}. In step 1, there are 3 groups,
\begin{align}
I_{1} = \{1,2\},~I_{2} = \{3,4\},~I_{3} = \{5,6\}. \nonumber
\end{align}
Moreover from (\ref{eq_m_ell_sigma_ell}) we have,
\begin{align}
m_{1} = \xi+\sigma_{1} = 3,~~\epsilon_{1,1}= \epsilon_{1,2}=\epsilon_{1,3}=1. \nonumber
\end{align}
Then each helper $h$ merges,
\begin{align}
\underline{\phi}_{h,1,1} = \Phi_{3,1}(e_{h},[x_{h,1},x_{h,2}],[x_{h,3},x_{h,4}]) = \left[x_{h,1}, (x_{h,2}+e_{h}^{-1}x_{h,3}), (e_{h}^{-1}x_{h,4}) \right], \nonumber
\end{align}
and similarly,
\begin{align}
\underline{\phi}_{h,1,2} &= \left[x_{h,5}, (x_{h,6}+e_{h}^{-1}x_{h,7}), (e_{h}^{-1}x_{h,8}) \right], \nonumber \\
\underline{\phi}_{h,1,3} &= \left[x_{h,9}, (x_{h,10}+e_{h}^{-1}x_{h,11}), (e_{h}^{-1}x_{h,12}) \right]. \nonumber
\end{align}
\end{example}

Finally each helper $h$ creates one repair symbol from each modified group of active segments. For instance in iteration $j$, the repair symbol from helper $h$ based on the active segments in group $g$ is created as
\begin{align}\label{eq_repair_symbol_def}
r_{h,j,g} &= e_{f}^{(i_{\mu_{j}}-1)\xi}\underline{\phi}_{h,j,g}\underline{\psi}_{f,m_{j}}^{\intercal} + \sum_{i \in I_{g}\setminus \{i_{\mu_{j}},i_{\mu_{j}+1}\}}{\underline{\chi}_{h}(i)\underline{\varphi}_{f}^{\intercal}(i)}. \nonumber \\
&= \rho_{m_{j},h,f}(i_{\mu_{j}},i_{\mu_{j}+1}) + \sum_{i \in I_{g}\setminus \{i_{\mu_{j}},i_{\mu_{j}+1}\}}{\underline{\chi}_{h}(i)\underline{\varphi}_{f}^{\intercal}(i)}. 
\end{align}

Note that for each helper $h$, and each active segment indexed $i$ we have
\begin{align}\label{eq_segment_i_repair}
\underline{\chi}_{h}(i)\underline{\varphi}_{f}^{\intercal}(i) &= \underline{\varphi}_{h}(i)M'_{i}\underline{\varphi}_{f}^{\intercal}(i) \nonumber \\
&= \underline{\varphi}_{f}(i)M'_{i}\underline{\varphi}_{h}^{\intercal}(i) \nonumber \\
&= \underline{\chi}_{f}(i)\underline{\varphi}_{h}^{\intercal}(i). 
\end{align}
Moreover, using Lemma \ref{Lem_merge}, we have
\begin{align}\label{eq_segment_i_merge_repair}
\rho_{m_{j},h,f}(i_{\mu_{j}},i_{\mu_{j}+1}) = \underline{\phi}_{f,j,g}\left(e_{h}^{(i_{\mu_{j}}-1)\xi}\underline{\psi}_{h,m_{j}}^{\intercal}\right).
\end{align}
Therefore, from (\ref{eq_segment_i_repair}) and (\ref{eq_segment_i_merge_repair}) it is clear that 
\begin{align}\label{eq_r_h_def}
r_{h,j,g} = [\underline{\chi}_{f}(i_{1}), \cdots , \underline{\chi}_{f}(i_{\mu_{j}}-1),\underline{\phi}_{f,j,g}] \underline{\vartheta}_{h,g}^{\intercal},
\end{align}
where,
\begin{align}\label{eq_theta_def}
\underline{\vartheta}_{h,g} = [\underline{\varphi}_{h}(i_{1}), \cdots , \underline{\varphi}_{h}(i_{\mu_{j}}-1), e_{h}^{(i_{\mu_{j}}-1)\xi}\underline{\psi}_{h,m_{j}}].
\end{align}

In (\ref{eq_r_h_def}), each active segment $\underline{\chi}_{f}(i),~i\in I_{g}\setminus \{i_{\mu_{j}}, i_{\mu_{j}+1}\}$ consists of $\tau_{j}$ active entries and $\xi - \tau_{j}$ known entries. Moreover, the entries of $\underline{\phi}_{f,j,g}$ could be divided into three categories, as depicted in Fig. \ref{Fig_merge_segment}; (1) Entries formed by combining two active entries in $\underline{\chi}_{f}(i_{\mu_{j}})$, and $\underline{\chi}_{f}(i_{\mu_{j}+1})$, depicted in cyan color, (2) Entries formed either from one active entry or from combining an active entry by a known entry, depicted in blue color, and (3) Known entries depicted in green color.

\begin{figure}
\centering
\resizebox{3 in}{!}{\includegraphics{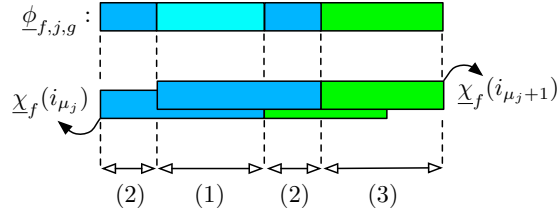}}
\caption{The three categories of entries in $\underline{\phi}_{f,j,g}$.}\label{Fig_merge_segment}
\end{figure}

Hence, the repair symbols provided by any subset, $\mathcal{H}=\{h_{1}, \cdots , h_{(d-2b)}\}$, of size $(d-2b)$ of the helpers provides a linear equation system in terms of the entries in segments $\underline{\chi}_{f}(i),~i \in I_{g}$, for any group $g$, as follows,
\begin{align}\label{eq_lin_eq_sys}
\left[r_{h_{1},j,g}, \cdots , r_{h_{(d-2b)},j,g} \right] = [\underline{\chi}_{f}(i_{1}), \cdots , \underline{\chi}_{f}(i_{\mu_{j}}-1),\underline{\phi}_{f,j,g}] \Theta_{g}.
\end{align}
In the above, the columns of the coefficient matrix for group $g$, namely $\Theta_{g}$, are,
\begin{align}
\Theta_{g} = [\underline{\vartheta}_{h_{1},g}^{\intercal}, \cdots , \underline{\vartheta}_{h_{(d-2b)},g}^{\intercal}]. \nonumber
\end{align}

\begin{example}\label{Ex_2nd_repair_step_1}
Following the setting considered in Example \ref{Ex_2nd_repair}, in iteration one of the repair scheme, we have three groups; $I_{1}=\{1,2\}$, $I_{2}=\{3,4\}$, $I_{3}=\{5,6\}$. Moreover, as described in last examples we have $m_{1} = 3$, $\xi = 2$, and $\mu_{1}=\epsilon_{1,1}=\epsilon_{1,2}=\epsilon_{1,3}=1$. From the first group, using (\ref{eq_repair_symbol_def}), each helper $h$ then provides the repair symbol
\begin{align}
r_{h,1,1} = e_{f}^{0}\underline{\phi}_{h,1,1}\underline{\psi}_{f,m_{1}}^{\intercal}=\left[x_{h,1}, (x_{h,2}+e_{h}^{-1}x_{h,3}, e_{h}^{-1}x_{h,4}) \right]\left[1, e_{f}, e_{f}^{2} \right]^{\intercal}. \nonumber
\end{align}
According to Lemma \ref{Lem_merge}, we have
\begin{align}
r_{h,1,1} =\left[x_{f,1}, (x_{f,2}+e_{f}^{-1}x_{f,3}), e_{f}^{-1}x_{f,4} \right]\left[1, e_{h}, e_{h}^{2} \right]^{\intercal}. \nonumber
\end{align}
Therefore, using $r_{h_{1},1,1}$, $r_{h_{2},1,1}$, and $r_{h_{3},1,1}$ from any subset of $d-2b = 3$ helpers $\mathcal{H}=\{h_{1}, h_{2}, h_{3}\}$, the decoder recovers $x_{f,1}$, $(x_{f,2}+e_{f}^{-1}x_{f,3})$, and $x_{f,4}$.

Similarly, in the second and third groups, the decoder uses the received repair symbols from helpers in $\mathcal{H}$ to recover $x_{f,5}$, $(x_{f,6}+e_{f}^{-1}x_{f,7})$, $x_{f,8}$, and $x_{f,9}$, $(x_{f,10}+e_{f}^{-1}x_{f,11})$, $x_{f,12}$, respectively. As a result, at the end of iteration one, entries $x_{f,1}$, $x_{f,4}$, $x_{f,5}$, $x_{f,8}$, $x_{f,9}$, and $x_{f,12}$ become "known". Moreover, for entries $x_{f,2}$, $x_{f,6}$, and  $x_{f,10}$, we label them as "inactive", since they could be recovered based on the remaining "active" entries $x_{f,3}$, $x_{f,7}$, and  $x_{f,11}$. Note  that based on this relabelling at the end of iteration one, then the remaining "active" segments are $\underline{\chi}_{f}(2)$, $\underline{\chi}_{f}(4)$, and $\underline{\chi}_{f}(6)$, which only have one "active" entry and one "known" entry each.
\end{example}

In general, as mentioned in the beginning of this subsection, in iteration $j$ of the repair procedure, any entry in the active segments is either active or known. Therefore, if $\tau_{j} < \xi$ the repair decoder has $\xi-\tau_{j}$ known entries in each of the active segments $\underline{\chi}_{f}(i),~i \in I_{g}$. Removing the known entries from the equation system in (\ref{eq_lin_eq_sys}) the repair decoder updates the system as
\begin{align}\label{eq_reduced_eq_sys}
\left[ r'_{h_{1},j,g}, \cdots , r'_{h_{(d-2b)},j,g} \right] = \left[ \underline{\chi}'_{f}(i_{1}), \cdots , \underline{\chi}'_{f}(i_{\mu_{j}}-1), \underline{\phi}'_{f,j,g} \right] \Theta'_{g},
\end{align}
where the whole vector 
\begin{align}
\left[ \underline{\chi}'_{f}(i_{1}), \cdots , \underline{\chi}'_{f}(i_{\mu_{j}}-1), \underline{\phi}'_{f,j,g} \right],  \nonumber
\end{align}
is of size $1\times (d-2b)$, and results from $[\underline{\chi}_{f}(i_{1}), \cdots , \underline{\chi}_{f}(i_{\mu_{j}}-1),\underline{\phi}_{f,j,g}]$ by removing the known entries. Also the updated coefficient matrix $\Theta'_{g}$ is derived from $\Theta_{g}$ by removing the rows corresponding to the known entries. It is easy to check that $\Theta'_{g}$ is a $(d-2b) \times (d-2b)$ invertible matrix.\footnote{The rows are linearly independent as they are rows of a Vandermonde matrix.} The repair decoder then recovers the value of all the entries in 
\begin{align}
\left[ \underline{\chi}'_{f}(i_{1}), \cdots , \underline{\chi}'_{f}(i_{\mu_{j}}-1), \underline{\phi}'_{f,j,g} \right] \nonumber
\end{align}
for each group $g$. Note that the entries in the above vector consists of all the active entries in segments $\underline{\chi}_{f}(i),~i \in I_{g} \setminus \{i_{\mu_{j}}, i_{\mu_{j}+1}\}$, along with unknown entries in $\underline{\phi}_{f,j,g}$ (category (1), and (2) as depicted in Fig. \ref{Fig_merge_segment}). As a result, all the active entries in segments $\underline{\chi}_{f}(i),~i \in I_{g} \setminus \{i_{\mu_{j}}, i_{\mu_{j}+1}\}$ will be recovered, and their labels become known. 

Let us now focus on the remaining entries in $\underline{\phi}_{f,j,g}$. Recovering each entry in category (2) reveals the value of one active entry either in $\underline{\chi}_{f}(i_{\mu_{j}})$, or in $\underline{\chi}_{f}(i_{\mu_{j}+1})$, for which then the label changes from active to known. However, entries in category (1), are formed by combining two active entries; one from $\underline{\chi}_{f}(i_{\mu_{j}})$, and the other from $\underline{\chi}_{f}(i_{\mu_{j}+1})$. Therefore, recovering the value of entries in this category, the decoder changes the labels of corresponding active entries pertaining to $\underline{\chi}_{f}(i_{\mu_{j}})$ from active to inactive, and leaves the corresponding active entries from $\underline{\chi}_{f}(i_{\mu_{j}+1})$ to remain active. It is easy to check that the number of entries in category (1) is $\tau_{j}-\sigma_{j}$. Moreover, one can easily check that the remaining active entries in segment $\underline{\chi}_{f}(i_{\mu_{j}+1})$, which are the entries participating in the formation of category (1) entries in $\underline{\phi}_{f,j,g}$, are all located at the leftmost part of $\underline{\chi}_{f}(i_{\mu_{j}+1})$. This guarantees that the invariants described in the beginning of this subsection will be preserved through the steps of the decoding.

In summary, at the end of iteration $j$ we have, 
\begin{itemize}
\item All entries in segments $\underline{\chi}_{f}(i),~i \in I_{g} \setminus \{i_{\mu_{j}}, i_{\mu_{j}+1}\}$ are recovered, for each group $g$.
\item All entries in segment $\underline{\chi}_{f}(i_{\mu_{j}})$ are either recovered or calculated in terms of a remaining active entry in $\underline{\chi}_{f}(i_{\mu_{j}+1})$.
\item The number of active entries in $\underline{\chi}_{f}(i_{\mu_{j}+1})$ is reduced from $\tau_{j}$ to $\tau_{j}-\sigma_{j}$.
\end{itemize} 

In order to start the next iteration then we simply update the value of $\tau_{j+1}$, namely the number of active entries remaining in each active segment, as
\begin{align}\label{eq_tau_update}
\tau_{j+1} = \tau_{j} - \sigma_{j}.
\end{align}

It worth mentioning that, while $\sigma_{j} > 0$, both the number of active segments as well as the number of active entries in each active segment, $\tau_{j}$, decrease in each step.

\subsubsection{Case of $\sigma_{j}=0$}

When $\sigma_{j} = 0$, then we start by taking groups of size $\mu_{j}$ active segments and do everything similar to the case of $\sigma_{j} > 0$, excepting that we do not have any merging modification on the last two segments. Therefore, (\ref{eq_repair_symbol_def}) changes to
\begin{align}\label{eq_repair_symbol_def_sigma_0}
r_{h,j,g} &= \sum_{i \in I_{j}}{\underline{\chi}_{h}(i)\underline{\varphi}_{f}^{\intercal}(i)}. 
\end{align}

Moreover, at the end of an iteration with $\sigma_{j} = 0$, all active entries in each group will be recovered, which will also result in the recovery of all the inactive entries, and the decoding ends.

\begin{example}
Let's consider the second iteration of repair for the setting described in the Example \ref{Ex_2nd_repair}. As explained in the previous examples, at the end of iteration one, the only remaining active entries are $x_{f,3}$, $x_{f,7}$, and $x_{f,11}$, and the only remaining active segments are $\underline{\chi}_{f,2}$, $\underline{\chi}_{f,4}$, and $\underline{\chi}_{f,6}$, where each of them has only one remaining active entry. This is consistent with (\ref{eq_tau_update}) as $\tau_{2} = \tau_{1}-\sigma_{1} = 2-1 = 1$. Then from (\ref{eq_mu_sigma_def}) we have $\mu_{2} = 3$, and $\sigma_{2} = 0$, and hence, we will have only one group of active segments in this iteration, with the index set $I_{1} = \{2,4,6\}$.

In this iteration, as $\sigma_{2} = 0$, we do not need any merging and, using (\ref{eq_repair_symbol_def_sigma_0}), each helper $h$ simply creates the repair symbol
\begin{align}
r_{h,2,1} &= [x_{h,3}, x_{h,4}] [e_{f}^{2}, e_{f}^{3}]^{\intercal} + [x_{h,7}, x_{h,8}] [e_{f}^{6}, e_{f}^{7}]^{\intercal} + [x_{h,11}, x_{h,12}] [e_{f}^{10}, e_{f}^{11}]^{\intercal}. \nonumber \\
&= [x_{f,3}, x_{f,4}] [e_{h}^{2}, e_{h}^{3}]^{\intercal} + [x_{f,7}, x_{f,8}] [e_{h}^{6}, e_{h}^{7}]^{\intercal} + [x_{f,11}, x_{f,12}] [e_{h}^{10}, e_{h}^{11}]^{\intercal}. \nonumber
\end{align}
However, note that $x_{f,4}$, $x_{f,8}$, and $x_{f,12}$ are known from the previous iteration and can be removed from the above equation. Then from $r_{h_{1},2,1}$, $r_{h_{2},2,1}$, and $r_{h_{3},2,1}$, provided by any subset $\mathcal{H}= \{h_{1}, h_{2}, h_{3}\}$ of helpers, the decoder forms the reduced linear equation system as described by (\ref{eq_reduced_eq_sys}) as,
\begin{align}
\left[ \begin{array}{c}
r'_{h_{1},2,1} \\
r'_{h_{3},2 ,1} \\
r'_{h_{3},2 ,1}
\end{array}\right] = \left[x_{f,3}, x_{f,7}, x_{f,11} \right] \left[ \begin{array}{c c c}
e_{h_{1}}^{2}~ & ~e_{h_{2}}^{2}~ & ~e_{h_{3}}^{2}\\
e_{h_{1}}^{6}~ & ~e_{h_{2}}^{6}~ & ~e_{h_{3}}^{6}\\
e_{h_{1}}^{10}~ & ~e_{h_{2}}^{10}~ & ~e_{h_{3}}^{10} 
\end{array}\right], \nonumber
\end{align}
and recovers the remaining active entries $x_{f,3}$, $x_{f,7}$, and $x_{f,11}$. Finally, using the equations corresponding to the inactive entries derived in the previous iteration, as explained in Example \ref{Ex_2nd_repair_step_1}, the decoder recovers the inactive entries as well and finishes the decoding.
\end{example}

The repair procedure for the presented coding scheme is summarized in the following algorithm.

\begin{algorithm}
\caption{The repair procedure}\label{Alg_repair_alternative}
\begin{algorithmic}[1]
\State Input: $d$, $f$, $b$.
\State Calculate $\xi$ using (\ref{eq_xi_def}).
\State Form all segments and initiate the label for all segments and all of their entries as "active".
\State Initiate $\tau_{1} = \xi$.
\State Set $j = 1$, and calculate $\mu_{j}$ and $\sigma_{j}$, using (\ref{eq_mu_sigma_def})
\While{$\sigma_{j}$>0}
   \State Form groups of segments each of size $\mu_{j}+1$.
   \State Calculate $m_{j}$ and $\epsilon_{j,g}$, for each group $g$, using (\ref{eq_m_ell_sigma_ell}) and (\ref{eq_epsilon_ell_j}).
   \State At each helper $h$, merge the last two segments in each group $g$, to derive $\underline{\phi}_{h,j,g}$, using (\ref{eq_phi_h_ell_j}).
   \State At each helper $h$, calculate $r_{h,j,g}$ for every group $g$ using (\ref{eq_repair_symbol_def}).
   \State At the decoder for each subset $\mathcal{H} = \{h_{1}, \cdots , h_{d-2b}\}$, form the system (\ref{eq_lin_eq_sys}), and solve.
   \State Update the labels for entries and segments.
   \State Update the value of $\tau_{j}$ using (\ref{eq_tau_update}).
   \State Update $j = j + 1$.
\EndWhile
\State Form groups of segments each of size $\mu_{j}$.
\State At each helper $h$, calculate $r_{h,j,g}$ for every group $g$ using (\ref{eq_repair_symbol_def_sigma_0}).
\State At the decoder for each subset of helpers $\mathcal{H} = \{h_{1}, \cdots , h_{d-2b}\}$, form the system (\ref{eq_lin_eq_sys}), and solve.
\State Form the estimate $\underline{\hat{x}}_{f,\mathcal{H}}$ for each subset of helpers $\mathcal{H} = \{h_{1}, \cdots , h_{d-2b}\}$.
\State Initiate all $\text{Consistency} \gets \text{False}$
\While{$\neg(\text{Consistency})$}
   \State $\mathcal{T}\gets$  A new test-group of size $d-b$
   \If{$\underline{\hat{x}}_{\mathcal{H},f} = \underline{\hat{x}}_{\mathcal{H}',f} ~ \forall{\mathcal{H}, \mathcal{H}' \subset \mathcal{T}}$}
   \State $\text{Consistency} \gets \text{True}$
   \State Output $\gets \underline{\hat{x}}_{\mathcal{H},f}$ for some $\mathcal{H}\subset \text{consistent}~\mathcal{T}$
   \EndIf
\EndWhile
\end{algorithmic}
\end{algorithm}

\subsection{Discussions}

\subsubsection{Repair bandwidth}

Now let's calculate the required repair bandwidth for the above scheme. As described in the previous subsections, in each iteration, each helper provides only one repair symbol for each group of active segments. Let $g_{j}$ denote the number of active segments groups in iteration $j$, and denote the total number of iterations by $J$. Then the total number of repair symbols provided by each helper through the repair is 
\begin{align}\label{eq_beta_sum}
\beta(d) = \sum_{j = 1}^{J}{g_{j}}.
\end{align}

Moreover, the decoder starts by setting the label "active" for all $\alpha$ entries. Then in each iteration, the decoder recovers exactly $(d-2b)$ active entries in each group of active segments, either directly or in terms of another active entry. As a result, the number of entries for which the label changes form "active" to either "known" or "inactive" in iteration $j$ is $g_{j} (d-2b)$, and we have
\begin{align}\label{eq_alpha_sum}
\alpha = \sum_{j=1}^{J}{g_{j}(d-2b)}.
\end{align}

Then from (\ref{eq_beta_sum}), and (\ref{eq_alpha_sum}) we have,
\begin{align}
\beta(d) = \frac{\alpha}{d-2b}. \nonumber
\end{align}

\subsubsection{Discussion on $\alpha$}\label{Subsec_discission_alpha}

In the provided repair scheme we require that for any $d\in D$, the number of active segments is always divisible by the number of segments in each group. In other words, denoting the total number of iterations in decoding by $J$, we require $\alpha$ to be divisible by $\xi$, $\mu_{J}$, and $(\mu_{j}+1)$ for all $j\in\{1,\cdots , J-1\}$. 

\section{Total Storage Capacity of BAER Distributed Storage Systems}
\subsection{Lower Bound}
We can now derive a lower bound on the total storage capacity based on the coding schemes presented in the previous section.

\begin{cor}\label{Cor_Achievable}
For the set of parameters $\delta$, $n$, $k$, $b$, $\alpha$, and the set $D=\{d_{1}, \cdots , d_{\delta}\}$, such that condition (\ref{eq_d_i_Cond}), and (\ref{eq_PernodeCap}) are satisfied, and the total repair bandwidth function
\begin{align}\label{eq_MBR_Achievable_RBW}
\gamma(d) = \frac{\alpha d}{d-2b},
\end{align}
the storage capacity for a BAER distributed storage system is lower bounded as
\begin{align}\label{eq_MBR_Achievable_Capasity}
F \geq \frac{\alpha (k-2 b)}{d_{\min}-2 b}\left(d_{\min}-b-\frac{(k-1)}{2}\right).
\end{align}
\end{cor}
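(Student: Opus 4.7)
The plan is to establish the lower bound by construction: both coding schemes presented in Section \ref{Sec_CodeConstruction} and Section \ref{Sec_Coding_Sch_II} are BAER regenerating codes that simultaneously meet every requirement in the hypothesis, so once we verify the count of encoded source symbols the corollary follows directly. No converse-style argument or new machinery is needed here; the corollary simply packages the operational properties already established by Lemmas \ref{Lem_Consistency} and \ref{Lem_ThetaRank} together with the encoding map described in Section \ref{Subsec_Enc_for_Storage}.

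First I would recall from Section \ref{Subsec_Enc_for_Storage} that the data matrix $M$ is block-diagonal with $z=\alpha/(d_{\min}-2b)$ diagonal blocks $M_1,\ldots,M_z$, each a symmetric Product Matrix MBR data submatrix of size $(d_{\min}-2b)\times(d_{\min}-2b)$ containing a symmetric $(k-2b)\times(k-2b)$ block $N_i$ together with a $(k-2b)\times(d_{\min}-k)$ block $L_i$. Counting the free entries in a single block gives $\binom{k-2b+1}{2}+(k-2b)(d_{\min}-k)$ independent source symbols, and summing over the $z$ blocks yields
\begin{align}
F = \frac{\alpha}{d_{\min}-2b}\left[\frac{(k-2b)(k-2b+1)}{2}+(k-2b)(d_{\min}-k)\right]. \nonumber
\end{align}
Pulling out the common factor $(k-2b)$ and combining the remaining terms via $\tfrac{k-2b+1}{2}+(d_{\min}-k)=d_{\min}-b-\tfrac{k-1}{2}$ gives exactly the right-hand side of (\ref{eq_MBR_Achievable_Capasity}).

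Next I would verify that the operational parameters of the construction match the hypothesis. Each stored vector $\underline{x}_\ell=\underline{\psi}_\ell M$ is a row of length $\alpha$, satisfying the per node storage budget; the per helper repair load is $\alpha/(d-2b)$ for every $d\in D$, as established in (\ref{eq_repairBW_Schm1}) for the first scheme and in the summation $\beta(d)=\alpha/(d-2b)$ at the end of Section \ref{Sec_Coding_Sch_II} for the second, which yields the total repair bandwidth required by (\ref{eq_MBR_Achievable_RBW}). Exact repair in the presence of up to $b$ adversarial helpers is handled by Algorithm \ref{Alg_repair} or Algorithm \ref{Alg_repair_alternative}, whose correctness rests on Lemma \ref{Lem_Consistency} together with Lemma \ref{Lem_ThetaRank} for the first scheme, or the iteration-by-iteration Vandermonde invertibility of $\Theta'_g$ in (\ref{eq_reduced_eq_sys}) for the second. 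Data reconstruction from any $k$ nodes under the same adversarial model is handled by Algorithm \ref{Alg_reconstriction} and Lemma \ref{Lem_Consistency}.

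I do not anticipate any genuine obstacle: the only non-trivial work is the algebraic simplification of the symbol count, and that reduces to the one identity displayed above. Every operational guarantee needed to declare the constructed code a valid BAER regenerating code with the stipulated $\alpha$ and $\gamma(\cdot)$ has already been discharged in the preceding sections, so the corollary is merely a bookkeeping consequence.
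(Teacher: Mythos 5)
Your proposal is correct and follows essentially the same route as the paper: the bound is obtained by counting the independent entries of the block-diagonal data matrix $M$, namely $z=\alpha/(d_{\min}-2b)$ blocks each contributing $\frac{(k-2b)(k-2b+1)}{2}+(k-2b)(d_{\min}-k)$ source symbols, and simplifying. The additional verification you include of the operational parameters ($\alpha$, $\beta(d)=\alpha/(d-2b)$, and the correctness of repair/reconstruction via the cited lemmas) is implicit in the paper's proof, which relies on the preceding sections for those facts.
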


\begin{proof}
The achievable storage capacity, $F$, of the proposed coding schemes is equal to the number of independent elements of matrix $M$. According to the structure of the matrix $M$ in (\ref{eq_Information_Matrix}), this quantity is $z$ times the total storage capacity of one MBR Product Matrix component code. Hence we have
\begin{align}
F = z\left(\frac{\kappa(\kappa +1)}{2}+\kappa(\lambda -\kappa)\right) = z \kappa \left(\lambda - \frac{(\kappa -1)}{2} \right). \nonumber
\end{align}

Replacing $\lambda = d_{\min} - 2b$, $\kappa = k - 2b$, and $z=\alpha / \delta$ we get the lower bound in (\ref{eq_MBR_Achievable_Capasity}).
\end{proof}

\subsection{Upper Bound}\label{Sec_Converse}

In the proofs of this section we will use a lemma proved in \cite{ErrRes_Salim_Journal} for the conventional regenerating codes. We restate the lemma in the setting considered in this work below while the proof follows similarly as provided in \cite{ErrRes_Salim_Journal}. 

\begin{lem}\label{Lem_Sufficiency_of_a-2b_nodes}
In a BAER regenerating code $\mathcal{C}(n$, $k$, $D$, $b$, $\alpha$, $\gamma(\cdot))$, in any data reconstruction, the data provided by any subset of size $k-2b$ of the $k$ selected nodes should be sufficient for uniquely decoding the source data stored in the network. Moreover, in any repair, the repair data provided by any subset of size $d-2b$ of the $d$ selected helpers should be sufficient for uniquely decoding the lost data.
\end{lem}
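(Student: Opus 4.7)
My plan is to prove both statements by the same adversarial indistinguishability argument, which is the standard technique for Singleton-type bounds in coded systems with a limited-power omniscient adversary. I would argue by contradiction: assume the data on some set of $k-2b$ nodes (respectively, some set of $d-2b$ helpers' repair symbols) is not sufficient for unique decoding, and then explicitly construct two adversarial scenarios that the decoder cannot distinguish.

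For the data reconstruction part, I would proceed as follows. Let $\mathcal{S}$ be the set of $k$ accessed nodes, and suppose for contradiction that some subset $\mathcal{A}\subset \mathcal{S}$ with $|\mathcal{A}|=k-2b$ does not determine the source uniquely. Then there exist two distinct source messages $\underline{s}\ne \underline{s}'$ whose genuine encodings coincide on every node in $\mathcal{A}$. Partition $\mathcal{S}\setminus \mathcal{A}$, which has size $2b$, arbitrarily into two sets $\mathcal{B}_{1},\mathcal{B}_{2}$ of size $b$ each. In Scenario~1 the true message is $\underline{s}$ and the adversary controls the nodes in $\mathcal{B}_{1}$, replacing their stored content with the encoding of $\underline{s}'$ on $\mathcal{B}_{1}$. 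In Scenario~2 the true message is $\underline{s}'$ and the adversary controls $\mathcal{B}_{2}$, replacing its content with the encoding of $\underline{s}$ on $\mathcal{B}_{2}$. Both adversaries are within the budget $b$, and by construction the data collector observes identical symbols on $\mathcal{A}$ (by the assumed ambiguity), identical symbols on $\mathcal{B}_{1}$ (both equal to $\underline{s}'$'s encoding there), and identical symbols on $\mathcal{B}_{2}$ (both equal to $\underline{s}$'s encoding there). Since the two scenarios correspond to different underlying messages but yield the same view, genuine data reconstruction is impossible, contradicting Definition~\ref{dfn_BasicERRCodes}.

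For the repair part I would mirror this construction at the helper level. Fix a failed node $f$ and a set $\mathcal{H}$ of $d$ helpers. Suppose some subset $\mathcal{A}\subset \mathcal{H}$ with $|\mathcal{A}|=d-2b$ has repair symbols that do not uniquely determine $\underline{x}_{f}$. Then there exist two global network states whose genuine repair symbols for $f$ agree on $\mathcal{A}$ but correspond to different genuine contents $\underline{x}_{f}$ and $\underline{x}'_{f}$ on the failed node. Partition $\mathcal{H}\setminus\mathcal{A}$ into $\mathcal{B}_{1},\mathcal{B}_{2}$ of size $b$ each, and construct two adversarial scenarios analogous to the above in which the helpers in $\mathcal{B}_{1}$ (respectively $\mathcal{B}_{2}$) send the symbols that would be genuine in the other scenario. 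The replacement node sees identical repair symbols from all $d$ helpers in both scenarios, yet the ``genuine'' content it must store differs; this contradicts the requirement that repair be genuine (no error propagation) per the remark following the BAER definition.

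The only real subtlety, and therefore the main thing I would be careful about, is ensuring that the two adversarial scenarios I construct are both feasible for the omniscient, limited-power adversary. In particular I need that each adversary compromises at most $b$ nodes (clear from $|\mathcal{B}_i|=b$), and that the omniscience assumption lets the adversary compute the encoding/repair symbols corresponding to the \emph{other} genuine message and inject them. Both of these are built into the model of Section~\ref{Sec_ErrRes}, so the argument goes through. Note also that in the repair case I am implicitly using Remark following Definition of BAER code (repair must produce genuine content), without which one could only argue about functional repair; the proof as stated applies to the exact-repair setting we are interested in, and in fact applies identically to functional repair where ``genuine'' is interpreted with respect to the functional equivalence class.
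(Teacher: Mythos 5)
Your proof is correct and follows essentially the same adversarial-indistinguishability argument the paper uses: the paper merges the reconstruction and repair cases with a single parameter $a$ and describes one compromised scenario (leaving its indistinguishable twin implicit), whereas you spell out both scenarios with the partition into $\mathcal{B}_1,\mathcal{B}_2$, which is a more explicit version of the same reasoning.
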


\begin{proof}
Since the proof is similar for both data reconstruction and repair processes, we will use the notation $a$ to refer to either $d$ or $k$ for the repair and data reconstruction respectively, and provide a single proof based on $a$ which works for both cases. Consider a scenario (either a repair or reconstruction) in which a set of $a$ nodes are selected to provide data. Also, assume the message (either the source data stored in the network or the content of a failed node) is $m_1\in \mathcal{M}$, where, $\mathcal{M}=\mathbb{F}_{q}^{\alpha \times \alpha}$ for data reconstruction and $\mathcal{M} = \mathbb{F}_{q}^{\alpha}$ for repair. Moreover, for any subset $\mathcal{L}$ of the $a$ selected nodes, let $\underline{y}_{\mathcal{L}}(m)$ denote the collective data provided by nodes in $L$ if the message to be recovered is $m \in \mathcal{M}$.

We provide a proof by contradiction. Assume there exists a subset of selected nodes $\mathcal{L}^{*},~|\mathcal{L}^{*}|=a-2b$ such that
\begin{align}
\underline{y}_{\mathcal{L}^{*}}(m_1)=\underline{y}_{\mathcal{L}^{*}}(m_2),~m_{1}\neq m_{2}. \nonumber
\end{align}
Note that the BAER setup allows the intruder to control any subset of nodes of size less than $b+1$. Then we can assume an intruder compromises a subset $\mathcal{L}'$ of size $b$ among the $a$ selected nodes not in $\mathcal{L}^{*}$ to provide $\underline{y}_{\mathcal{L}'}(m_2)$. Therefore the receiver will have no guarantee to recover the genuine message $m_1$, which contradicts the fact that BAER regenerating code should be capable of performing genuine repair and data reconstructions.
\end{proof}

Having this lemma along with Corollary \ref{Cor_Achievable}, we are now ready to prove Theorem \ref{THM_MBR_OptimalRBW}. 

\begin{proof}[Proof of Theorem \ref{THM_MBR_OptimalRBW}]
First note that the data stored in a single node does not have any redundancy. In other words if some part of the data stored in a single node is a function of the rest of the data we can improve the storage-bandwidth trade-off in the whole system by simply removing the redundant part from each node. Hence, $\alpha$ is an information theoretic lower bound on the required repair bandwidth. In particular, using Lemma \ref{Lem_Sufficiency_of_a-2b_nodes} we conclude that in any BAER regenerating code, the collective repair bandwidth provided by any subset of helpers of size $d-2b$ should be at least $\alpha$. As a result for any BAER regenerating code we have
\begin{align}
\beta(d)(d-2b)=\frac{\gamma(d)}{d}(d-2b) \geq \alpha,~~\forall{d \in D}. \nonumber
\end{align}

However, since Corollary \ref{Cor_Achievable} assures this is achievable by a single code for all $d \in D$, we will then have (\ref{eq_MBR_Gamma_BAER_code}) of Theorem \ref{THM_MBR_OptimalRBW}.
\end{proof}

\begin{rmrk}
Note that Theorem \ref{THM_MBR_OptimalRBW}, introduces limits for $d_{\min}$, and $k$ in an MBR BAER code. The lower limit $2b$ for $k,~d_{\min}$ could be justified using Lemma \ref{Lem_Sufficiency_of_a-2b_nodes}. Since for the choice of $d < 2b$, or $k < 2b$ there exists no subset of size $d-2b$, or $k-2b$ of nodes, and hence the storage capacity of the BAER regenerating code supporting such a $d$ or $k$ is zero (trivial code).
\end{rmrk}

\begin{lem}\label{Lem_Converse}
For any BAER regenerating code $\mathcal{C}(n$, $k$, $D$, $b$, $\alpha$, $\gamma(\cdot))$, the total storage capacity $F$ is upper bounded as follows
\begin{align}
F \leq \sum_{j=0}^{k-2 b-1}{\min\left(\alpha,\min_{d\in D}\left((d-2b-j)\frac{\gamma(d)}{d}\right)\right)}.
\end{align}
In specific, for the MBR case we have,
\begin{align}\label{eq_MBR_Capacity_UpperBound}
F_{\text{MBR}} \leq \frac{\alpha (k-2 b)}{d_{\min}-2 b}\left(d_{\min}-b-\frac{k-1}{2}\right).
\end{align}
\end{lem}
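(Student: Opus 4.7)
The plan is to adapt the standard information-flow-graph argument of Dimakis \emph{et al.} \cite{Regenerating} to the BAER setting, exploiting Lemma \ref{Lem_Sufficiency_of_a-2b_nodes} to fold the $b$ adversarial nodes into a genie and allowing the helper count $d$ to vary across repairs. I will first set up the usual flow graph: a source $S$ of rate $F$, each storage node represented by an in-node and out-node joined by an edge of capacity $\alpha$, a sequence of node failures in which each replacement node is connected to its chosen helpers by edges of capacity $\gamma(d)/d$, and a data collector $\mathrm{DC}$ that accesses $k$ nodes via edges of infinite capacity. The cut-set bound gives $F \le \min\text{-cut}(S,\mathrm{DC})$ over every legal evolution of this graph.

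Next I invoke Lemma \ref{Lem_Sufficiency_of_a-2b_nodes} to feed the decoder a genie: in each repair the genie names $2b$ of the $d$ helpers (containing all compromised ones), so the replacement node is effectively regenerated from the remaining $d-2b$ genuine helpers; likewise in reconstruction the genie identifies $2b$ of the $k$ accessed nodes, leaving the data collector to reconstruct from the remaining $k-2b$ genuine nodes only. Since genuine repair/reconstruction must succeed against the worst-case intruder, these reduced graphs still carry flow $F$. The result is an equivalent flow graph in which ``effective'' helpers per repair number $d-2b$ and ``effective'' accessed nodes for reconstruction number $k-2b$.

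I then pick a specific evolution that minimises the cut, following the classical choice: label the $k-2b$ effective reconstruction nodes $v_0,v_1,\ldots,v_{k-2b-1}$, and arrange the sequence of failures so that when $v_i$ is (re)generated, its $d-2b$ effective helpers are the $i$ already-processed nodes $v_0,\ldots,v_{i-1}$ together with $d-2b-i$ fresh out-of-sample nodes; the corresponding repair uses whichever $d_i\in D$ minimises that stage's contribution. On the $S$-side cut I place the $\alpha$-edges of the fresh helpers, and on the $\mathrm{DC}$-side cut I place the $\alpha$-edges of the $v_i$'s. Exactly as in the Dimakis derivation, the cut through $v_i$ contributes $\min\bigl(\alpha,\,(d_i-2b-i)\gamma(d_i)/d_i\bigr)$, and independently choosing $d_i\in D$ for each index $i$ yields
\begin{align}
F \;\le\; \sum_{i=0}^{k-2b-1} \min\!\left(\alpha,\;\min_{d\in D}\Bigl((d-2b-i)\tfrac{\gamma(d)}{d}\Bigr)\right),\nonumber
\end{align}
which is the claimed general bound.

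For the MBR specialisation, substitute $\gamma_{\text{MBR}}(d)=\alpha d/(d-2b)$, so that $(d-2b-i)\gamma(d)/d=\alpha(d-2b-i)/(d-2b)=\alpha\bigl(1-i/(d-2b)\bigr)$. Since the index $i$ ranges over $0,\ldots,k-2b-1$ and condition (\ref{eq_d_i_Cond}) forces $i<d_{\min}-2b\le d-2b$, this quantity is non-negative and strictly less than $\alpha$, and it is monotonically decreasing in $d-2b$; hence the outer $\min$ is attained at $d=d_{\min}$ and the inner $\min$ with $\alpha$ is inactive. Summing the arithmetic progression $\sum_{i=0}^{k-2b-1}(d_{\min}-2b-i)=(k-2b)(d_{\min}-2b)-\binom{k-2b}{2}$ and factoring yields (\ref{eq_MBR_Capacity_UpperBound}). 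The main subtlety is the genie-reduction step: one must argue rigorously that restricting to $d-2b$ helpers (resp.\ $k-2b$ accessed nodes) in the flow graph is \emph{without loss of generality} for an error-resilient code, which is exactly what Lemma \ref{Lem_Sufficiency_of_a-2b_nodes} delivers; after that, the cut computation and the optimisation over $d_i$ are routine.
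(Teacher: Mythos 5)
Your proposal matches the paper's proof essentially step for step: same genie reduction via Lemma \ref{Lem_Sufficiency_of_a-2b_nodes}, same information-flow-graph evolution in which the $i$-th reconstruction node is regenerated from the $i$ previously processed nodes plus fresh helpers, the same cut (either the $\alpha$-storage edge of $v_i$ or the $d_i-2b-i$ fresh repair edges), independent optimization of $d_i$ at each stage, and the same arithmetic in the MBR specialization. One small slip: for $\gamma_{\text{MBR}}(d)=\alpha d/(d-2b)$ the per-stage quantity $(d-2b-i)\,\gamma(d)/d=\alpha\bigl(1-i/(d-2b)\bigr)$ is monotonically \emph{increasing} in $d-2b$ (not decreasing as you wrote), and it is precisely this monotonicity that makes the inner minimum over $D$ land at $d=d_{\min}$; as stated, your reason contradicts your conclusion, though the conclusion itself and the remaining algebra are correct.
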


The proof is provided in Appendix \ref{App_ConverseLemma}.

Finally the proof of Theorem \ref{THM_MBR_Capacity} simply follows from (\ref{eq_MBR_Achievable_Capasity}) in Corollary \ref{Cor_Achievable}, and Lemma \ref{Lem_Converse}.

%
%



\section{Conclusion}
We considered a modified setup for the regenerating codes in which error resiliency and bandwidth adaptivity (BAER) are required to be satisfied simultaneously, and studied the storage-bandwidth trade-off in the modified BAER setup for regenerating codes. Focusing on the minimum repair bandwidth point of the storage-bandwidth tradeoff, we derived the total repair bandwidth function in the bandwidth adaptive scheme along with the corresponding storage capacity through proposing exact repair coding schemes, and providing the converse proofs. We showed that for the MBR case, optimality is achievable in strongest form (i.e., point-wise rather than Pareto optimality). We also presented an upper bound on the storage capacity of the BAER setup for the general case. 

\appendices
\section{Proof of the Correctness of Algorithm \ref{Alg_AssignmentBipartite}}\label{App_Alg_Bipartite}

\begin{lem}\label{Lem_AlgoBipartite}
The result of Algorithm \ref{Alg_AssignmentBipartite} is a bipartite graph with all the vertices in $\mathcal{V}$ having degree $d_{\min}$, and the all vertices in $\mathcal{U}$ having degree $\beta(d) = \alpha / d$, for any $d \in D$.
\end{lem}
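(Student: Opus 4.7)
The plan is to prove the lemma in two stages. First, the $\mathcal{V}$-side claim is immediate from the construction: each $v \in \mathcal{V}$ is connected to exactly $d_{\min}$ vertices of $\mathcal{U}$ during its single visit in the for-loop, so $\deg(v) = d_{\min}$. This fixes the total edge count at $|\mathcal{V}| \cdot d_{\min} = (\alpha/d_{\min}) \cdot d_{\min} = \alpha$, so the target $\mathcal{U}$-side degree is $\alpha/d = \beta(d)$. This target is an integer because in the $b=0$ setting of Subsection~\ref{Subsec_concatenation} the constraint (\ref{eq_PernodeCap}) forces $d \mid \alpha$ for every $d \in D$.

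The substantive part is showing that the greedy rule (attach $v$ to the $d_{\min}$ currently least-loaded vertices of $\mathcal{U}$) actually produces the uniform degree $\alpha/d$ on the $\mathcal{U}$-side. My plan is to prove by induction on the loop index $t \in \{0,1,\ldots,|\mathcal{V}|\}$ the following balance invariant: after $t$ iterations, the multiset of degrees on the $\mathcal{U}$-side consists of $m_t$ copies of some integer $a_t = \lfloor t\, d_{\min}/d \rfloor$ and $d - m_t$ copies of $a_t + 1$, for some $0 \le m_t \le d$ (the case $m_t = d$ meaning all degrees are equal). The base case $t = 0$ is trivial since all degrees are $0$.

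For the inductive step at iteration $t+1$, the algorithm picks a $d_{\min}$-subset $\mathcal{W}$ of $\mathcal{U}$ of minimum degree. I would split into two cases. If $d_{\min} \le m_t$, then $\mathcal{W}$ can be chosen entirely among the low-degree vertices, so after the update we have $m_t - d_{\min}$ vertices of degree $a_t$ and $d - (m_t - d_{\min})$ of degree $a_t + 1$, still a two-level sequence. If $d_{\min} > m_t$, then $\mathcal{W}$ must consist of all $m_t$ low-degree vertices plus $d_{\min} - m_t$ of the high-degree ones; after the update, the selected high vertices have become the unique ``top'' class of degree $a_t + 2$, while the old lows and the unselected highs share degree $a_t + 1$, again a two-level sequence with spread one. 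A short arithmetic check confirms that in both cases the new ``low'' value equals $\lfloor (t+1)d_{\min}/d \rfloor$, closing the induction.

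Finally, applying the invariant at $t = |\mathcal{V}|$ gives a degree sequence with spread at most one whose sum equals $\alpha$; since $d \mid \alpha$, the two levels must coincide, and every $u \in \mathcal{U}$ has degree exactly $\alpha/d = \beta(d)$. The main (and essentially only) obstacle is nailing down the two-case analysis of the inductive step cleanly; the non-obvious observation is that even when the greedy choice spills into the high-degree class, the spread still does not exceed one, because the freshly elevated vertices form the new upper tier while the newly promoted low vertices merge with the unselected highs into a single middle tier.
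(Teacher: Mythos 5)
Your proof is correct and follows essentially the same route as the paper: the crux in both is that the greedy least-degree selection keeps the $\mathcal{U}$-side degrees within one of each other, after which a degree-sum count ($|\mathcal{V}|\,d_{\min}=\alpha$ together with $d\mid\alpha$) forces uniformity. The only cosmetic differences are that you establish the spread-one invariant by a direct two-case induction rather than the paper's contradiction argument, and you close via divisibility rather than the paper's intermediate bound $\deg(u)\leq \alpha/d$; both are sound.
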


\begin{proof}
Algorithm \ref{Alg_AssignmentBipartite} starts with all the vertices having degree zero and connects each vertex in $\mathcal{V}$ to $d_{\min}$ vertices in $\mathcal{U}$, one at a time, in iterations of the for loop. Hence, we only need to prove that all vertices in $\mathcal{U}$ will have the required degrees. Every time a vertex in $\mathcal{U}$ is connects to a new vertex in $\mathcal{V}$ its degree increases by one. Then we only need to show that the degree of all the vertices in $\mathcal{U}$ will be $\beta(d) = \alpha / d$ when the algorithm terminates. To this end, we will first show that in every iteration of the for loop, the maximum difference between the degrees of the vertices in $\mathcal{U}$ is one, using proof by contradiction.

Assume that at the end of iteration $i^{*}$ for the first time we have at least two vertices $u_{1}, u_{2} \in \mathcal{U}$, such that $\textrm{deg}(u_{2})-\textrm{deg}(u_{1})>1$. The algorithm never changes the degree of a node in $\mathcal{U}$ by more than one. Then at the end of iteration $i^{*}-1$ we should have 
\begin{align}\label{eq_deg_difference}
\textrm{deg}(u_{2})-\textrm{deg}(u_{1})=1.
\end{align}

Moreover, we conclude that in iteration $i^{*}$ the algorithm should have connected $u_{2}$ to some vertex in $\mathcal{V}$, while the degree of $u_{1}$ is remained unchanged in this iteration. However, this is contradictory since the algorithm is choosing the subset $\mathcal{W}\subset \mathcal{U}$ from the nodes with the least degree, and if $u_{1}\notin \mathcal{W}$ in iteration $i^{*}$, then $u_{1}$ could not be in $\mathcal{W}$ as from (\ref{eq_deg_difference}) we know $\textrm{deg}(u_{2})>\textrm{deg}(u_{1})$.

Next we show that in all iterations the degree of any vertex in $\mathcal{U}$ remain less than or equal to $\beta(d) = \alpha / d$. Again we use proof by contradiction. Assume in some iteration the degree of one vertex $u^{*} \in \mathcal{U}$ increases to $\beta(d)+1$. Since we have already shown that
\begin{align}
\textrm{deg}(u^{*}) - \textrm{deg}(u') \leq 1,~~\forall{u' \in \mathcal{U}} \nonumber
\end{align}
then at the end of this iteration the degree of any vertex in $\mathcal{U}$ should be at least $\beta(d)$. Therefore at the end of this iteration we have,
\begin{align}\label{eq_U_sum_degree}
\sum_{u \in \mathcal{U}}{\textrm{deg}(u)} = \sum_{u \in \mathcal{U} \setminus \{ u^{*} \}}{\textrm{deg}(u)} + \textrm{deg}(u^{*}) \geq (|\mathcal{U}|-1)\beta(d) + (\beta(d)+1) = (d-1)\frac{\alpha}{d} + \frac{\alpha}{d}+1 = \alpha + 1. 
\end{align}

However, the sum of the degrees of all the vertices in $\mathcal{U}$ should be equal to the sum of the degrees of all the nodes in $\mathcal{V}$ at the end of any iteration, and we know at the end of any iteration we have,
\begin{align}\label{eq_V_sum_degree}
\sum_{v\in\mathcal{V}}{\textrm{deg}(v)} \leq |\mathcal{V}|d_{\min} = \alpha.
\end{align}

Hence, (\ref{eq_U_sum_degree}) and (\ref{eq_V_sum_degree}) are contradictory.

Finally, since the algorithm adds $d_{\min}$ to the sum of the degrees of nodes in $\mathcal{U}$ at any iteration for $|\mathcal{V}| = \alpha / d_{\min}$ iterations, then at the end of the for loop, we have,
\begin{align}\label{eq_U_total_degree}
\sum_{u \in \mathcal{U}}{\textrm{deg}(u)} = d_{\min}|\mathcal{V}| = \alpha.
\end{align}
We have already shown that at the end of the last iteration of the loop we have,
\begin{align}\label{eq_u_total_degree}
\textrm{deg}(u)\leq \beta(d) = \frac{\alpha}{d} = \frac{\alpha}{|\mathcal{U}|},
\end{align}
Then from (\ref{eq_U_total_degree}) and (\ref{eq_u_total_degree}) we conclude that for any $u\in \mathcal{U}$, at the end of the last iteration of the loop $\textrm{deg}(u)=\beta(d)$.
\end{proof}

\section{Proof of Lemma \ref{Lem_ThetaRank}}\label{App_ThetaRank}

Recall that the code alphabet $\mathbb{F}_{q}$, $q=p^{m}$ is a simple extension field over a finite field $\mathbb{F}_{p}$, such that $p$ is a large enough prime number, and $g \in \mathbb{F}_{q}$ denotes the primitive element of $\mathbb{F}_{q}$. Also, $\mathbb{F}_{p}[x]$ denotes the ring of polynomials with coefficients from $\mathbb{F}_{p}$, and let $\varrho(x)\in \mathbb{F}_{p}[x]$ denote the minimal polynomial of $g$.
%

Consider the subset $\mathcal{H}=\{h_{1}, \cdots , h_{d-2b}\}$ of helpers, and recall that each helper node $h_{i}$ has a node specific coefficient vector $\underline{\psi}_{h_{i}}$, namely,
\begin{align}
\underline{\psi}_{h_{i}} = \left[\left(g^{h_{i}}\right)^{0}, \left(g^{h_{i}}\right)^{1}, \cdots , \left(g^{h_{i}}\right)^{\alpha -1}\right]. \nonumber
\end{align}
Without loss of generality, assume that
\begin{align}\label{eq_h_i_order}
h_{1} < h_{2} < \cdots < h_{d-2b}.
\end{align}

Also, from (\ref{eq_Omega}) the structure of the matrix $\Omega_{z_{d}}$ is given as,
\begin{align}
\Omega_{z_{d}} = \left[\begin{array}{c c c c}
\left(g^{i_{1}} \right)^{0}~ & ~\left(g^{i_{1}} \right)^{1}~ & ~\cdots~ & ~\left(g^{i_{1}} \right)^{z_{d}-1} \\
\left(g^{i_{2}} \right)^{0}~ & ~\left(g^{i_{2}} \right)^{1}~ & ~\cdots~ & ~\left(g^{i_{2}} \right)^{z_{d}-1} \\
\vdots & \vdots & \ddots & \vdots \\
\left(g^{i_{z}} \right)^{0}~ & ~\left(g^{i_{z}} \right)^{1}~ & ~\cdots~ & ~\left(g^{i_{z}} \right)^{z_{d}-1} \\
\end{array} \right], \nonumber
\end{align}
where, 
\begin{align}\label{eq_i_ell_condition_1}
i_{1} < i_{2} < \cdots < i_{z},
\end{align}
and, for any $\ell_{1}, \ell_{2}$ such that $1 \leq \ell_{1}<\ell_{2}\leq z$ we have,
\begin{align}\label{eq_i_ell_condition_2}
i_{\ell_{2}}-i_{\ell_{1}} > \alpha n.
\end{align}
We denote,
\begin{align}
\Omega_{z_{d}} = \left[\begin{array}{c}
\underline{\omega}_{1} \\
\underline{\omega}_{2} \\
\vdots \\
\underline{\omega}_{z}
\end{array}\right], \nonumber
\end{align}
where, for $j \in \{1, \cdots , z\}$, 
\begin{align}
\underline{\omega}_{j} = \left[\left(g^{i_j}\right)^{0}, \left(g^{i_j}\right)^{1}, \cdots , \left(g^{i_j}\right)^{z_{d}-1}\right].
\end{align}
Then from the definitions of the matrices $\Phi_{h},~h \in \mathcal{H}$, given by (\ref{eq_PhiMatrix}), we have,
\begin{align}
\Phi_{h}\Omega_{z_{d}} &= \left[\begin{array}{c}
\underline{\omega}_{1}\otimes \underline{\psi}_{h}^{\intercal}(1) \\
\underline{\omega}_{2}\otimes \underline{\psi}_{h}^{\intercal}(2) \\
\vdots \\
\underline{\omega}_{z}\otimes \underline{\psi}_{h}^{\intercal}(z) 
\end{array}\right] 
=\left[\begin{array}{c c c}
\left(g^{i_1}\right)^{0} \underline{\psi}_{h}^{\intercal}(1) & ~\cdots ~ & \left(g^{i_1}\right)^{z_{d}-1} \underline{\psi}_{h}^{\intercal}(1) \\
\left(g^{i_2}\right)^{0} \underline{\psi}_{h}^{\intercal}(2) & ~\cdots ~ & \left(g^{i_2}\right)^{z_{d}-1} \underline{\psi}_{h}^{\intercal}(2) \\
\vdots & \ddots & \vdots \\
\left(g^{i_z}\right)^{0} \underline{\psi}_{h}^{\intercal}(z) & ~\cdots ~ & \left(g^{i_z}\right)^{z_{d}-1} \underline{\psi}_{h}^{\intercal}(z)
\end{array}\right]. \nonumber
\end{align}
Therefore, by stacking matrices $\Phi_{h_{i}}\Omega_{z_{d}}$ for $h_{i}\in\mathcal{H}$, we obtain,
\begin{align}\label{eq_Theta_in_Xi}
\Theta_{\mathcal{H}} &= \left[\Phi_{h_{1}} \Omega_{z_{d}}^{\intercal},\cdots,\Phi_{h_{d-2b}} \Omega_{z_{d}}^{\intercal} \right] 
= \Pi \left[\begin{array}{c}
\underline{\omega}_{1} \otimes V_{1} \\
\underline{\omega}_{2} \otimes V_{2} \\
\vdots \\
\underline{\omega}_{z} \otimes V_{z} 
\end{array}\right]  
= \Pi \left[\begin{array}{c c c}
\left(g^{i_1}\right)^{0} V_{1} & ~ \cdots ~ & \left(g^{i_1}\right)^{z_{d}-1} V_{1} \\
\left(g^{i_2}\right)^{0} V_{2} & ~ \cdots ~ & \left(g^{i_2}\right)^{z_{d}-1} V_{2} \\
\vdots & \ddots & \vdots \\
\left(g^{i_z}\right)^{0} V_{z} & ~ \cdots ~ & \left(g^{i_z}\right)^{z_{d}-1} V_{z}
\end{array}\right], 
\end{align}
where $\Pi$ is an appropriate column permutation matrix and the matrices $V_{\ell},~\ell \in\{1, \cdots , z \}$ are transposed Vandermonde matrices of size $(d_{\min}-2b) \times (d-2b)$ as,
\begin{align}
V_{\ell} = \left[\begin{array}{c c c c}
\underline{\psi}_{h_{1}}^{\intercal}(\ell) ~ & ~ \underline{\psi}_{h_{2}}^{\intercal}(\ell) ~ & \cdots & ~ \underline{\psi}_{h_{d-2b}}^{\intercal}(\ell)
\end{array}\right]. \nonumber
\end{align}

Note that multiplying $\Pi$ from right does not change the rank. Hence, in order to show that $\Theta_{\mathcal{H}}$ is invertible it suffices to prove the determinant of the following matrix is non-zero,
\begin{align}\label{eq_Xi_def}
\Xi_{\mathcal{H}} = \hspace{-1mm} \left[\begin{array}{c}
\underline{\omega}_{1} \otimes V_{1} \\
\underline{\omega}_{2} \otimes V_{2} \\
\vdots \\
\underline{\omega}_{z} \otimes V_{z} 
\end{array}\right]\hspace{-1mm} = \hspace{-1mm}\left[\begin{array}{c c c}
\left(g^{i_1}\right)^{0} V_{1} & ~ \cdots ~ & \left(g^{i_1}\right)^{z_{d}-1} V_{1} \\
\left(g^{i_2}\right)^{0} V_{2} & ~ \cdots ~ & \left(g^{i_2}\right)^{z_{d}-1} V_{2} \\
\vdots & \ddots & \vdots \\
\left(g^{i_z}\right)^{0} V_{z} & ~ \cdots ~ & \left(g^{i_z}\right)^{z_{d}-1} V_{z}
\end{array}\right]\hspace{-1mm}. 
\end{align}
The sketch of the proof is as follows:
\begin{itemize}
\item We show that every element in $\Xi_{\mathcal{H}}$ can be represented in the form of an exponent of $g$.
\item We show the determinant of $\Xi_{\mathcal{H}}$ can be represented as a polynomial $f(x)\in \mathbb{F}_{p}[x]$, evaluated at $g$.
\item We show $f(x)$ is non-trivial.
\item We show that (for large enough field size) $g$ can not be a root of $f(x)$ and hence the determinant is non-zero.
\end{itemize}

Some of the ideas used in this proof are similar to \cite{MDP_Almeida}. 

Observe that each block $\left(g^{i_\ell}\right)^{j} V_{\ell}$ in $\Xi_{\mathcal{H}}$ is a $(d_{\min}-2b) \times (d-2b)$ submatrix. Then every entry of the matrix $\Xi_{\mathcal{H}}$, is a product of two powers of the primitive element $g$. In particular for the element $\Xi_{\mathcal{H}}(r,c)$, located in row $r$ and column $c$, we have
\begin{align}\label{eq_Xi_ij}
\Xi_{\mathcal{H}}(r,c) &= \left( g^{i_{\ell}} \right)^{\ell '} V_{\ell}(r-(d_{\min}-2b)\ell , \ell' +1) \nonumber \\
&= \left( g^{i_{\ell}} \right)^{\ell '} \left( g^{h_{\ell ''}} \right)^{(r-1)} \nonumber \\
&= g^{\ell ' i_{\ell} + (r-1)h_{\ell ''}},
\end{align}
where,
\begin{align}
\ell = \lceil \frac{r}{d_{\min}-2b}\rceil , ~~ \ell ' = \lceil \frac{c}{d-2b} \rceil -1 , ~~ \ell '' = c \mod (d-2b). \nonumber
\end{align}


Denoting the set of all permutations on $\{1, \cdots , \alpha \}$ by $\mathcal{S}_{\alpha}$, the Leibniz extension for the determinant is given by,
\begin{align}\label{eq_Leibniz_first}
|\Xi_{\mathcal{H}}| = \sum_{\sigma \in \mathcal{S}_{\alpha}}\left((-1)^{\textrm{sgn}(\sigma)} \prod_{i=1}^{\alpha}{\Xi_{\mathcal{H}}(i,\sigma(i))}\right).
\end{align}
From (\ref{eq_Xi_ij}) we know every element $\Xi_{\mathcal{H}}(i,\sigma(i))$ can be represented as an exponent of $g$. Then for any $\sigma \in \mathcal{S}_{\alpha}$, we use $\textrm{pow}(\sigma)$ to denote the exponent of $g$ associated with the permutation $\sigma$ as follows
\begin{align}
g^{\textrm{pow}(\sigma)} = \prod_{i=1}^{\alpha}{\Xi_{\mathcal{H}}(i,\sigma(i))}, \nonumber
\end{align}
and we can rewrite (\ref{eq_Leibniz_first}) as
\begin{align}\label{eq_Leibniz}
|\Xi_{\mathcal{H}}| = \sum_{\sigma \in \mathcal{S}_{\alpha}} \left((-1)^{\textrm{sgn}(\sigma)} g^{\textrm{pow}(\sigma)}\right). 
\end{align}

Therefore, from (\ref{eq_Theta_in_Xi}) and (\ref{eq_Xi_ij}), for large enough prime\footnote{p should be large enough to make sure any coefficient in the polynomial is smaller than $p$, and hence is an element of $\mathbb{F}_{p}$. However, the number of terms in the determinant expansion is finite for any finite $\alpha$, and hence is the largest possible coefficient.}  $p$, it is clear that there exists a polynomial $f(x) \in \mathbb{F}_{p}[x]$, such that
\begin{align}\label{eq_f_x}
|\Xi_{\mathcal{H}} | = f(g). 
\end{align}
We refer to the polynomial $f(g)$ as the \emph{determinant polynomial} in the rest of this appendix.

Note that in (\ref{eq_Leibniz}), each term is associated with one of the permutations on the set $\{1,\cdots , \alpha\}$. In the rest of this appendix we consider each permutation on the set $\{1,\cdots , \alpha\}$ as a bijective mapping from the set of rows of the matrix $\Xi_{\mathcal{H}}$ to the set of its columns. Moreover, we refer to the $i^{\text{th}}$ group of $d_{\min}-2b$ consequent rows in matrix $\Xi_{\mathcal{H}}$ as the $i^{\text{th}}$ \emph{block row}. Similarly, the $i^{\text{th}}$ group of $d_{\min}-2b$ consequent columns in matrix $\Xi_{\mathcal{H}}$ is referred to as the $i^{\text{th}}$ \emph{block column}, as depicted in Fig. \ref{Fig_block_rows_cols}.

\begin{figure}
\centering
\resizebox{4 in}{!}{\includegraphics{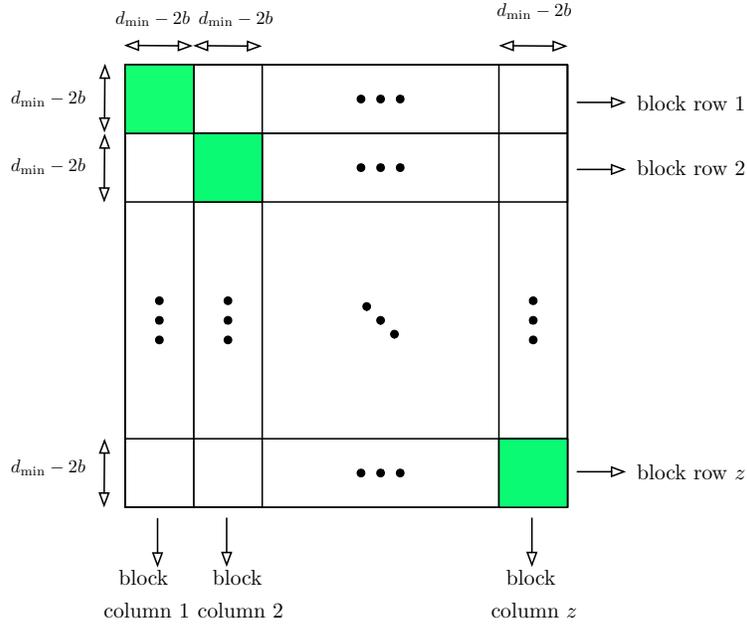}}
\caption{The partitioning of rows and columns of the $\Xi_{\mathcal{H}}$ matrix into block rows and block columns, along with the green diagonal submatrices containing all the entries $\Xi_{\mathcal{H}}(j,\sigma(j))$, for any $\sigma \in \mathcal{F}$.}\label{Fig_block_rows_cols}
\end{figure}

Finally, let $\mathcal{F}$ denote the family of permutations on the set $\{1,\cdots , \alpha\}$, which are mapping the rows in each block row $i$ to the columns in the block column $i$, for each $i\in \{1,\cdots , z_{d}\}$. In other words for any permutation $\sigma \in \mathcal{F}$, all entries $\Xi_{\mathcal{H}}(j,\sigma(j))$ are located in the green squares in Fig. \ref{Fig_block_rows_cols}.

\begin{lem}\label{Lem_permutation_family}
For any permutation $\sigma$ on the set $\{1, \cdots , \alpha\}$, if $\sigma \notin \mathcal{F}$, then there exists another permutation $\sigma'$, such that,
\begin{align}\label{eq_pov_comp}
\textrm{pow}(\sigma) < \textrm{pow}(\sigma'). 
\end{align}
\end{lem}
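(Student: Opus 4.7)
The plan is to split $\textrm{pow}(\sigma)=L(\sigma)+S(\sigma)$, where $L(\sigma)=\sum_{r=1}^{\alpha} i_{\ell(r)}\,\ell'(\sigma(r))$ collects the contributions from the rapidly growing $\Omega$-exponents $i_\ell$ and $S(\sigma)=\sum_{r=1}^{\alpha}(r-1)\,h_{\ell''(\sigma(r))}$ collects the bounded helper contributions, and to argue by a swap/rearrangement strategy. The key quantitative tool is that for any transposition $\sigma'=\sigma\circ(r_1\ r_2)$ we have $\Delta S=(r_1-r_2)(h_{\ell''(\sigma(r_2))}-h_{\ell''(\sigma(r_1))})$ with $|\Delta S|\leq(\alpha-1)(n-1)<\alpha n$, whereas $\Delta L=(i_{\ell(r_1)}-i_{\ell(r_2)})(\ell'(\sigma(r_2))-\ell'(\sigma(r_1)))$ and, by the hypothesis $i_{\ell_2}-i_{\ell_1}>\alpha n$ from \mref{eq_i_ell_condition_2}, one has $\Delta L\ge\alpha n+1$ whenever $\Delta L>0$. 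Therefore any single transposition that strictly increases $L$ strictly increases $\textrm{pow}$.

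The proof then splits into two cases. If $\sigma$ is not an $L$-maximiser, a standard rearrangement-inequality argument produces an inversion $(r_1,r_2)$ with $i_{\ell(r_1)}>i_{\ell(r_2)}$ but $\ell'(\sigma(r_1))<\ell'(\sigma(r_2))$; the transposition $(r_1\ r_2)$ yields $\Delta L>0$ and hence $\textrm{pow}(\sigma\circ(r_1\ r_2))>\textrm{pow}(\sigma)$ by the dominance above.

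The substantive case is when $\sigma$ is $L$-maximal but $\sigma\notin\mathcal{F}$; here I need to produce an $L$-preserving transposition that strictly raises $S$. An $L$-preserving transposition requires either $\ell(r_1)=\ell(r_2)$ (a swap inside one block row) or $\ell'(\sigma(r_1))=\ell'(\sigma(r_2))$ (a swap inside one column group). The rearrangement inequality already pins down, for every $\sigma$ in the $L$-maximal class $\mathcal{M}$, the multiset of $\ell'$-values received by each block row, so these are the only residual freedoms. The set $\mathcal{F}$ is characterised by the additional combinatorial constraint that each block row $\ell_0$ is sent setwise into block column $\ell_0$. For $\sigma\in\mathcal{M}\setminus\mathcal{F}$, a pigeon-hole count produces two rows $r_1,r_2$ in distinct block rows $\ell_0\neq\ell_1$ such that $\sigma(r_1)$ lies in block column $\ell_1$, $\sigma(r_2)$ lies in block column $\ell_0$, and $\ell'(\sigma(r_1))=\ell'(\sigma(r_2))$; the swap then preserves $L$, and the strict monotonicity $h_1<\cdots<h_{d-2b}$ together with the monotone structure of the row-weights $(r-1)$ across consecutive block rows forces $\Delta S>0$. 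Iterating yields the desired $\sigma'$ with $\textrm{pow}(\sigma')>\textrm{pow}(\sigma)$.

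The main obstacle is the combinatorial step in the last case. The delicate point is that $\lambda=d_{\min}-2b$ need not divide $d-2b$, so a block column can straddle two consecutive column groups and the multiset of $h_{\ell''(c)}$-values appearing inside a given block column is not a clean sorted initial segment of $h_1,\ldots,h_{d-2b}$. I plan to address this by induction on the number of rows of block row $\ell_0$ that $\sigma$ misroutes outside block column $\ell_0$: at each step the pigeon-hole transposition above either strictly reduces this misrouting count while preserving $\textrm{pow}$, or strictly increases $\textrm{pow}$, and the process must terminate at a permutation in $\mathcal{F}$ whose $\textrm{pow}$ strictly exceeds that of the original $\sigma$.
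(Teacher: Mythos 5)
Your decomposition $\textrm{pow}=L+S$ and the dominance estimate ($\Delta L>\alpha n$ whenever $\Delta L>0$, versus $\abs{\Delta S}\leq(\alpha-1)(n-1)<\alpha n$) are sound, and your first case (when $\sigma$ is not an $L$-maximiser) goes through. The gap is in the second case, which you yourself flag as the obstacle, and it is real on two counts. First, the ``crossing pair'' you claim from a pigeon-hole count --- rows $r_1,r_2$ in distinct block rows $\ell_0\neq\ell_1$ with $\sigma(r_1)$ in block column $\ell_1$, $\sigma(r_2)$ in block column $\ell_0$, and equal $\ell'$-values --- need not exist: the misrouting among block columns can form cycles of length greater than two. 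For instance, with $d_{\min}-2b=1$ and $d-2b=3$ all three singleton block columns lie in one column group, the $3$-cycle $1\mapsto 2,\ 2\mapsto 3,\ 3\mapsto 1$ is $L$-maximal and outside $\mathcal{F}$, yet no pair of rows crosses between exactly two block columns. Second, the fallback induction is not closed: if each step ``strictly reduces the misrouting count while preserving $\textrm{pow}$,'' the chain terminates at a permutation of $\mathcal{F}$ with the \emph{same} $\textrm{pow}$ as $\sigma$, which is not the strict inequality the lemma asserts; you never argue that at least one step strictly increases $\textrm{pow}$. What you actually need --- true, but requiring proof --- is that every $L$-maximal $\sigma\notin\mathcal{F}$ admits an inversion \emph{within a single column group}, i.e.\ $r_1<r_2$ with $\sigma(r_1)>\sigma(r_2)$ and $\ell'(\sigma(r_1))=\ell'(\sigma(r_2))$; that single swap preserves $L$ and yields $\Delta S=(r_2-r_1)\bigl(h_{\ell''(\sigma(r_1))}-h_{\ell''(\sigma(r_2))}\bigr)>0$. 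Establishing its existence is exactly the straddling analysis you defer: each block column meets at most two consecutive column groups, and the absence of within-group inversions then pins block row $i$ onto block column $i$, forcing $\sigma\in\mathcal{F}$.

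For comparison, the paper's proof sidesteps the characterisation of the $L$-maximal class entirely. It takes the \emph{first} block row $i$ that is misrouted, picks $r_1$ in block row $i$ sent to a column $c_1$ in a block column $j>i$, and $r_2$ in a block row $i'>i$ sent to a column $c_2$ in block column $i$ (so $r_2>r_1$ and $c_1>c_2$ automatically), and shows that the single transposition exchanging these two images increases $\textrm{pow}$ in both situations: if the two columns lie in different column groups the $i_\ell$-gap of condition $i_{\ell_2}-i_{\ell_1}>\alpha n$ dominates, and if they lie in the same group then $c_1>c_2$ forces the larger helper index onto the larger row. You could repair your case 2 either by proving the within-group-inversion claim sketched above or by adopting that choice of swap.
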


\begin{proof}
We prove this lemma by constructing the permutation $\sigma'$ based on the permutation $\sigma \notin \mathcal{F}$, such that $\sigma'$ is different from $\sigma$ in exactly two pairs of rows and columns, and (\ref{eq_pov_comp}) is satisfied.

Let $i$ denote the first block row such that $\sigma$ maps some row $r_{1}$ in block row $i$ to some column $c_{1}$ in a block column $j \neq i$. Then from the definition of the family $\mathcal{F}$ we conclude $j>i$. Moreover, since the mapping induced by any permutation is bijective and the size of block rows and block columns are the same, then there should exist a column $c_{2}$ in the block column $i$ which is mapped by the permutation $\sigma$ to some row $r_{2}$ in a row block $i' \neq i$. Again, we conclude that $i' > i$, and hence we conclude
\begin{align}\label{eq_r_1_r_2}
r_{2} > r_{1}.
\end{align}

To summarize, we have
\begin{align}
\sigma(r_{1}) = c_{1},~\sigma(r_{2}) = c_{2}. \nonumber
\end{align}

We now claim that the permutation $\sigma'$ defined as follows satisfies (\ref{eq_pov_comp}),
\begin{align}\label{eq_sigma_ell}
\sigma'(r) = \left\lbrace \begin{array}{c c}
c_{2}~ & ~r = r_{1}, \\
c_{1}~ & ~r = r_{2}, \\
\sigma(r)~ & ~\text{otherwise}.
\end{array} \right.
\end{align}

Figure \ref{Fig_switch} depicts the process of deriving $\sigma'$ from $\sigma$.

\begin{figure}
\centering
\resizebox{4 in}{!}{\includegraphics{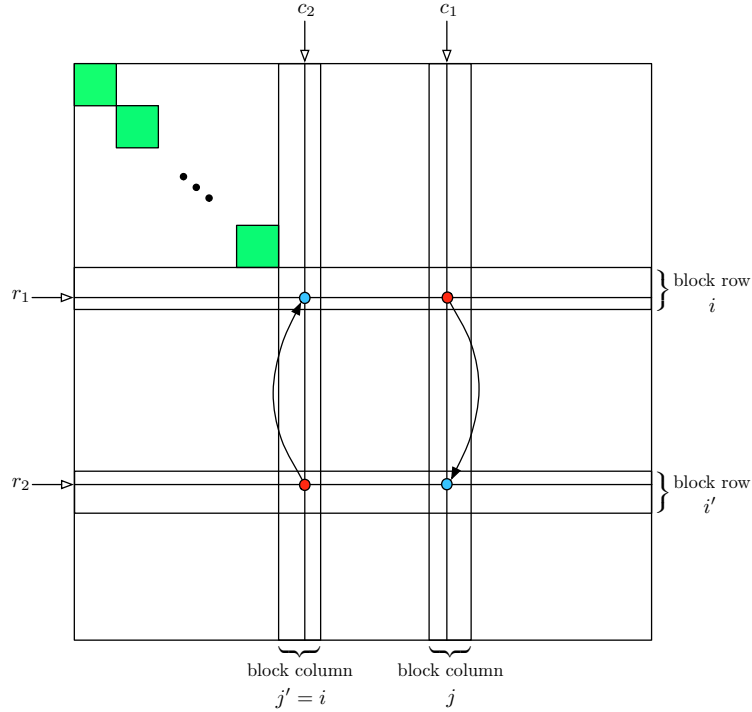}}
\caption{The process of deriving the permutation $\sigma'$ from $\sigma$, by switching between the the pair of entries denoted by red circles and the pair of entries denoted by blue circles.}\label{Fig_switch}
\end{figure}

Note that the exponent of the term associated with a permutation $\sigma$ in $f(x)$  is the sum of all the exponents of $g$ in the entries $\Xi_{\mathcal{H}}(r,\sigma(r))$. From (\ref{eq_sigma_ell}) it is clear that entries $\Xi_{\mathcal{H}}(r,\sigma(r))=\Xi_{\mathcal{H}}(r,\sigma'(r))$, for all $r$ except $r_{1}$, and $r_{2}$. Moreover, from (\ref{eq_Xi_ij}), for some integers $\ell_{1}, \ell_{2} \in \{1,\cdots , d-2b\}$, we have
\begin{align}
\Xi_{\mathcal{H}}(r_{1},\sigma(r_{1})) = \left(g^{i_{i}} \right)^{\lfloor \frac{c_{1}}{d-2b}\rfloor} \left(g^{h_{\ell_{1}}} \right)^{(r_{1}-1)}, \nonumber \\
\Xi_{\mathcal{H}}(r_{2},\sigma(r_{2})) = \left(g^{i_{i'}} \right)^{\lfloor \frac{c_{2}}{d-2b}\rfloor} \left(g^{h_{\ell_{2}}} \right)^{(r_{2}-1)}. \nonumber \\
\end{align}
Similarly from (\ref{eq_sigma_ell}) we have
\begin{align}
\Xi_{\mathcal{H}}(r_{1},\sigma'(r_{1})) = \left(g^{i_{i}} \right)^{\lfloor \frac{c_{2}}{d-2b}\rfloor} \left(g^{h_{\ell_{2}}} \right)^{(r_{1}-1)}, \nonumber \\
\Xi_{\mathcal{H}}(r_{2},\sigma'(r_{2})) = \left(g^{i_{i'}} \right)^{\lfloor \frac{c_{1}}{d-2b}\rfloor} \left(g^{h_{\ell_{1}}} \right)^{(r_{2}-1)}. \nonumber \\
\end{align}
Note that since $j > i$, then we conclude that $c_{1} > c_{2}$, and as a result,
\begin{align}
\lfloor \frac{c_{1}}{d-2b}\rfloor \geq \lfloor \frac{c_{2}}{d-2b}\rfloor. \nonumber
\end{align}

Then we have
\begin{align}
\textrm{pow}(\sigma')-\textrm{pow}(\sigma) = (i_{i'}-i_{i})\left(\lfloor \frac{c_{1}}{d-2b}\rfloor - \lfloor \frac{c_{2}}{d-2b}\rfloor \right) + (h_{\ell_{1}}-h_{\ell_{2}})(r_{2}-r_{1}).
\end{align}
Now, we have two cases as follows. 

Case 1:
\begin{align}
\lfloor \frac{c_{1}}{d-2b}\rfloor - \lfloor \frac{c_{2}}{d-2b}\rfloor > 0. \nonumber
\end{align}
In this case, note that $h_{\ell_{1}}, h_{\ell_{2}} \in \{1, \cdots , n\}$, and $r_{1}, r_{2} \in \{1, \cdots , \alpha\}$, then we have,
\begin{align}
(h_{\ell_{1}}-h_{\ell_{2}})(r_{2}-r_{1}) \geq - \alpha n, \nonumber
\end{align}
and hence using (\ref{eq_i_ell_condition_1}), and (\ref{eq_i_ell_condition_2}) we have $\textrm{pow}(\sigma')-\textrm{pow}(\sigma) > 0$.

Case 2:
\begin{align}
\lfloor \frac{c_{1}}{d-2b}\rfloor - \lfloor \frac{c_{2}}{d-2b}\rfloor = 0. \nonumber
\end{align}
In this case, considering the structure of the matrix $\Xi_{\mathcal{H}}$, as presented in (\ref{eq_Xi_def}), from (\ref{eq_h_i_order}) we conclude that $h_{\ell_{2}} > h_{\ell_{1}}$. Therefore, from (\ref{eq_r_1_r_2}) we have $(h_{\ell_{1}}-h_{\ell_{2}})(r_{2}-r_{1}) > 0$, and hence $\textrm{pow}(\sigma')-\textrm{pow}(\sigma) > 0$.
\end{proof}

\begin{cor}\label{Cor_perm_family}
In the representation of the determinant of matrix $\Xi_{\mathcal{H}}$ introduced in (\ref{eq_Leibniz}), the term(s) with highest exponent of $g$ are associated to permutation(s) in the family $\mathcal{F}$.
\end{cor}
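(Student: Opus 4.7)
The plan is to derive Corollary \ref{Cor_perm_family} as a direct consequence of Lemma \ref{Lem_permutation_family}. Since the Leibniz expansion in (\ref{eq_Leibniz}) writes the determinant as a sum of signed monomials $g^{\textrm{pow}(\sigma)}$ indexed by permutations $\sigma \in \mathcal{S}_{\alpha}$, the claim about the maximum exponent reduces to a purely combinatorial statement about the function $\textrm{pow}:\mathcal{S}_{\alpha}\to\mathbb{Z}$.

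I would proceed by contradiction. Suppose that some permutation $\sigma^{*}$ achieving the maximum value of $\textrm{pow}(\cdot)$ does not lie in $\mathcal{F}$. Then by Lemma \ref{Lem_permutation_family}, there exists a permutation $\sigma'$ with $\textrm{pow}(\sigma')>\textrm{pow}(\sigma^{*})$. This contradicts the assumed maximality of $\sigma^{*}$. Hence every permutation attaining the maximum exponent must belong to $\mathcal{F}$, which is exactly the statement of the corollary.

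Since this is a one-step deduction, I do not anticipate any substantive obstacle; the entire technical content has already been absorbed into Lemma \ref{Lem_permutation_family}, whose proof exploits the spacing conditions (\ref{eq_i_ell_condition_1})--(\ref{eq_i_ell_condition_2}) on the exponents $i_{1},\dots,i_{z}$ and the ordering (\ref{eq_h_i_order}) of the helper indices to guarantee the strict inequality after a single swap. The only minor issue worth noting in the write-up is that the corollary does not claim uniqueness of the maximizer: multiple permutations in $\mathcal{F}$ may share the same maximal exponent, and the corollary only asserts that any permutation not in $\mathcal{F}$ is strictly dominated. This observation will be important in the subsequent step of the overall argument, where one must show that these dominant terms do not cancel in $\mathbb{F}_{p}$, so that the determinant polynomial $f(x)$ in (\ref{eq_f_x}) is non-trivial and hence $g$ can be forced not to be among its roots by choosing the field $\mathbb{F}_{q}$ sufficiently large.
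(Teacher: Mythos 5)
Your proof is correct and takes essentially the same route as the paper, which states the corollary as an immediate consequence of Lemma \ref{Lem_permutation_family} without further argument: any maximizer of $\textrm{pow}(\cdot)$ outside $\mathcal{F}$ would be strictly dominated by the lemma, a contradiction. Your remark that uniqueness is not claimed here and is deferred to the cancellation argument is also consistent with how the paper proceeds (via Lemma \ref{Lem_Unique_Largest_exponent}).
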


\begin{lem}\label{Lem_Unique_Largest_exponent}
For the matrix $\Xi_{\mathcal{H}}$, as introduced in (\ref{eq_Xi_def}), let $f(x)$ denote the polynomial introduced in (\ref{eq_f_x}). Then the term with highest exponent of $g$ in $f(g)$ is unique. 
\end{lem}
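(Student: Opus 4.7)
The plan is to first invoke Corollary \ref{Cor_perm_family}, which constrains any maximizer of $\textrm{pow}(\sigma)$ to lie in the family $\mathcal{F}$, and then show that within $\mathcal{F}$ the maximum is attained by a unique permutation. The key idea is that the restriction to $\mathcal{F}$ turns the global maximization into independent block-wise assignment problems, each of which has a unique optimum by a rearrangement-style argument.

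Concretely, I would start from the observation that inside any single diagonal block of $\Xi_{\mathcal{H}}$ the row-block index $\ell$ and the column-block index $\ell'$ are both fixed, so by (\ref{eq_Xi_ij}) the ``coarse'' contribution $\ell'\,i_\ell$ to the exponent is the same for every entry of that block. Consequently, for any $\sigma\in\mathcal{F}$ the sum $\sum_r \ell'(r)\,i_{\ell(r)}$ depends only on the block-alignment pattern imposed by $\mathcal{F}$ (and on the number of rows each block carries), and is therefore a constant over $\sigma\in\mathcal{F}$. This reduces the maximization of $\textrm{pow}(\sigma)$ over $\mathcal{F}$ to the independent maximization, inside each diagonal block, of the ``fine'' contribution $\sum_r (r-1)\,h_{\sigma(r)}$.

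Within a single block, the row indices $r$ and the helper labels $h_{\ell''}$ form two strictly increasing sequences (the latter by (\ref{eq_h_i_order})). A direct swap argument in the spirit of Lemma \ref{Lem_permutation_family} then forces the unique optimizer to be the ``sorted'' pairing: if $r_1<r_2$ lie in the same block and $\sigma(r_1),\sigma(r_2)$ satisfy $h_{\sigma(r_1)}>h_{\sigma(r_2)}$, exchanging the two images yields $\sigma'\in\mathcal{F}$ with
\begin{align*}
\textrm{pow}(\sigma')-\textrm{pow}(\sigma)=\bigl(h_{\sigma(r_1)}-h_{\sigma(r_2)}\bigr)\bigl(r_2-r_1\bigr)>0,
\end{align*}
so $\sigma$ is not optimal. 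An analogous swap within a block forbids any proper subset of $h$-values other than the $(d_{\min}-2b)$ largest from being chosen at the optimum, if block rows and block columns differ in size. Hence the optimizer must select the largest available $h$-values and pair them in ascending order with the rows of the block, which determines $\sigma$ uniquely on each block and therefore globally. Since no two distinct $\sigma\in\mathcal{F}$ produce the same top exponent, the highest-degree monomial of $f(g)$ cannot cancel in (\ref{eq_Leibniz}), and uniqueness follows.

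The hard part will be formalizing the invariance of the coarse contribution used in the first step, namely that $\sum_r \ell'(r)\,i_{\ell(r)}$ does not depend on the intra-block matching. This reduces to carefully unpacking the block partition of $\Xi_{\mathcal{H}}$: for any $\sigma\in\mathcal{F}$ the entries $\Xi_{\mathcal{H}}(r,\sigma(r))$ visit each $(\ell,\ell')$-block a prescribed number of times (namely the number of rows the block contributes), regardless of how rows are matched to specific $h_{\ell''}$ columns inside the block. Once this rigidity is established, the rearrangement-type uniqueness argument above closes the proof immediately.
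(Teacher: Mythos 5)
Your proposal is correct and follows essentially the same route as the paper: restrict to the family $\mathcal{F}$ via Corollary~\ref{Cor_perm_family}, argue the $\Omega$-derived (``coarse'') part of the exponent is a constant over $\mathcal{F}$, and pin down the unique maximizer of the remaining (``fine'') part within each diagonal block by a rearrangement/swap argument. Two small inaccuracies are worth flagging, neither of which breaks the argument: first, when $d>d_{\min}$ the column-block index $\ell'$ is \emph{not} fixed inside a diagonal block of $\mathcal{F}$ (the $\mathcal{F}$-blocks have width $d_{\min}-2b$ while $\ell'$ changes every $d-2b$ columns, so an $\mathcal{F}$-block can straddle two $\Omega$-column groups), so the invariance of the coarse sum really does require the counting/bijection observation you defer to at the end, not entry-wise constancy --- fortunately your swap formula is still exact because the $\ell'$-terms cancel pairwise in the exchange; second, the clause about ``forbidding any proper subset of $h$-values other than the $(d_{\min}-2b)$ largest'' is moot, since for $\sigma\in\mathcal{F}$ each block row of size $d_{\min}-2b$ maps bijectively onto the $\mathcal{F}$-block column of the same size, so there is no subset selection --- all $d_{\min}-2b$ helper labels of that block column (which are distinct because $d_{\min}-2b\le d-2b$) are used, and only their ordering is free, which the swap argument determines uniquely.
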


\begin{proof}
Based on the Corollary \ref{Cor_perm_family}, the highest exponent of $g$ appears in permutation(s) from the family $\mathcal{F}$. Hence, to prove this lemma we only need to show there exists a unique permutation $\sigma^{*}\in \mathcal{F}$, such that $\textrm{pow}(\sigma^{*})$ is the maximum among all other permutations in $\mathcal{F}$. To this end, we first note that from (\ref{eq_Xi_ij}) it is clear that every entity of the matrix $\Xi_{\mathcal{H}}$ can be considered as the product of two components. The first component is in the form 
\begin{align}\label{eq_Xi_ij_omega_component}
\left(g^{i_{\ell}}\right)^{\ell'},
\end{align}
for some $\ell\in \{1, \cdots , z\}$, and $\ell' \in \{0, \cdots , z_{d}-1\}$. We refer to this component as the \emph{omega} component. The other component is in the form
\begin{align}\label{eq_Xi_ij_phi_component}
\left(g^{h_{\ell''}} \right)^{r-1},
\end{align}
for some $h_{\ell''} \in \mathcal{H}$, and $r\in \{1, \cdots , \alpha\}$. We refer to this component as the \emph{phi} component. 

Likewise, each term in $f(g)$, associated with a permutation $\sigma$, which is written as
\begin{align}
\prod_{i=1}^{\alpha}{\Xi_{\mathcal{H}}(i,\sigma(i))}, \nonumber
\end{align}
can be decomposed into two components. One component is the product of all the omega components of the entities $\Xi_{\mathcal{H}}(i,\sigma(i))$, and the we refer to it as the \emph{omega} component, and the other part is the product of all the phi components of $\Xi_{\mathcal{H}}(i,\sigma(i))$'s and is referred to as the \emph{phi} component. 

Since any permutation in $\mathcal{F}$ maps all the rows in block row $i$ to all the columns in block column $i$, it is then easy to check that the omega component of all the terms associated with permutations in $\mathcal{F}$ are the same. To complete the proof we then only need to show that the phi component of the term associated with a unique permutation $\sigma^{*}\in \mathcal{F}$ has the largest exponent of $g$. 

From the structure of the matrix $\Xi_{\mathcal{H}}$, as presented in (\ref{eq_Xi_def}), we can see that each column is associated with one of the helpers $h_{\ell}\in \mathcal{H}$, such that the exponent of $g$ in the phi component of all the entities in that column is a multiple of $h_{\ell}$. Let's refer to such $h_{\ell}$ as the \emph{helper index} of the column. Moreover, from (\ref{eq_Xi_ij_phi_component}) and using the "rearrangement inequality" \cite{Rearrangement} Section 10.2, Theorem 368, we conclude that the exponent of $g$ in the phi component of a term associated with a permutation $\sigma \in \mathcal{F}$ is maximized when in each block row $i$, the helper index of the column assigned to each row $r$ increases with as $r$ increases. Note that, this mapping in each block row is well defined since the size of each block column is $d_{\min} - 2b \leq d - 2b$ for any $d \in D$, and hence each helper index only appears at most once in each block column. Therefore, the permutation defined based on the maximizing assignment is unique in $\mathcal{F}$. This completes the proof.
\end{proof}

The following example illustrates the unique permutation associated with the maximum exponent of $g$ in the expression of the determinant $|\Xi_{\mathcal{H}}|$ as represented in (\ref{eq_Leibniz}).

\begin{example}
Consider $n=6$, $k=4$, $D=\{4,5\}$, and $b = 1$. Then notice that $d_{\min} = 4$, and setting $\alpha = 6$ satisfies (\ref{eq_PernodeCap}. Assume node $f=1$ is failed and consider a repair based on $d = 5$ helpers. We then focus on a subset $\mathcal{H} = \{h_{1},h_{2},h_{3}\}$ of helpers of size $d-2b = 3$ for this example, such that $1\leq h_{1}<h_{2}<h_{3} \leq n$. From (\ref{eq_Xi_def}) we have,
\begin{align}
\Xi_{\mathcal{H}} = \left[ \begin{array}{c c}
V_{1}~ & ~g^{i_{1}}V_{1} \\
V_{2}~ & ~g^{i_{2}}V_{2} \\
V_{3}~ & ~g^{i_{3}}V_{3}
\end{array}
\right], \nonumber
\end{align}
for some $i_{1}$, $i_{2}$, and $i_{3}$ satisfying (\ref{eq_i_ell_condition_1}), and (\ref{eq_i_ell_condition_2}), and
\begin{align}
V_{1} = \left[ \begin{array}{c c c}
\left(g^{h_{1}} \right)^0~ & ~\left(g^{h_{2}} \right)^0~ & ~\left(g^{h_{3}} \right)^0 \\
\left(g^{h_{1}} \right)^1~ & ~\left(g^{h_{2}} \right)^1~ & ~\left(g^{h_{3}} \right)^1
\end{array}\right],~~V_{2} = \left[ \begin{array}{c c c}
\left(g^{h_{1}} \right)^2~ & ~\left(g^{h_{2}} \right)^2~ & ~\left(g^{h_{3}} \right)^2 \\
\left(g^{h_{1}} \right)^3~ & ~\left(g^{h_{2}} \right)^3~ & ~\left(g^{h_{3}} \right)^3
\end{array}\right],~~V_{3} = \left[ \begin{array}{c c c}
\left(g^{h_{1}} \right)^4~ & ~\left(g^{h_{2}} \right)^4~ & ~\left(g^{h_{3}} \right)^4 \\
\left(g^{h_{1}} \right)^5~ & ~\left(g^{h_{2}} \right)^5~ & ~\left(g^{h_{3}} \right)^5
\end{array}\right].   \nonumber
\end{align}
Note that in this example $z_{d} = \alpha/ (d-2b) = 6/3 = 2$, and each submatrix $V_{j},~j\in\{1,2,3\}$ is of size $(d_{\min}-2b)\times(d-2b) = 2 \times 3$. Moreover, each block row and block column is of size $d_{\min}-2b = 2$. Finally the elements $\Xi_{\mathcal{H}}(i,\sigma^{*}(i))$, for the permutation $\sigma^{*}$ corresponding to the maximum exponent of $g$ in (\ref{eq_Leibniz}) are illustrated in red as follows

\begin{align}
\Xi_{\mathcal{H}} = \left[\begin{array}{c c} 
\left[ \begin{array}{c c | c}
{\color{red}\left(g^{h_{1}} \right)^0}~ & ~\left(g^{h_{2}} \right)^0~ & ~\left(g^{h_{3}} \right)^0 \\
\left(g^{h_{1}} \right)^1~ & ~{\color{red}\left(g^{h_{2}} \right)^1}~ & ~\left(g^{h_{3}} \right)^1
\end{array}\right]&\left[ \begin{array}{c | c c}
\left(g^{h_{1}} \right)^0 g^{i_{1}}~ & ~\left(g^{h_{2}} \right)^0 g^{i_{1}}~ & ~\left(g^{h_{3}} \right)^0 g^{i_{1}} \\
\left(g^{h_{1}} \right)^1 g^{i_{1}}~ & ~\left(g^{h_{2}} \right)^1 g^{i_{1}}~ & ~\left(g^{h_{3}} \right)^1 g^{i_{1}}
\end{array}\right] \\ \hline 
\left[ \begin{array}{c c | c}
\left(g^{h_{1}} \right)^2~ & ~\left(g^{h_{2}} \right)^2~ & ~\left(g^{h_{3}} \right)^2 \\
\left(g^{h_{1}} \right)^3~ & ~\left(g^{h_{2}} \right)^3~ & ~{\color{red}\left(g^{h_{3}} \right)^3}
\end{array}\right]
&\left[ \begin{array}{c | c c}
{\color{red}\left(g^{h_{1}} \right)^2 g^{i_{2}}}~ & ~\left(g^{h_{2}} \right)^2 g^{i_{2}}~ & ~\left(g^{h_{3}} \right)^2 g^{i_{2}} \\
\left(g^{h_{1}} \right)^3 g^{i_{2}}~ & ~\left(g^{h_{2}} \right)^3 g^{i_{2}}~ & ~\left(g^{h_{3}} \right)^3 g^{i_{2}}
\end{array}\right] \\ \hline
\left[ \begin{array}{c c | c}
\left(g^{h_{1}} \right)^4~ & ~\left(g^{h_{2}} \right)^4~ & ~\left(g^{h_{3}} \right)^4 \\
\left(g^{h_{1}} \right)^5~ & ~\left(g^{h_{2}} \right)^5~ & ~\left(g^{h_{3}} \right)^5
\end{array}\right]
&\left[ \begin{array}{c | c c}
\left(g^{h_{1}} \right)^4 g^{i_{2}}~ & ~{\color{red}\left(g^{h_{2}} \right)^4 g^{i_{2}}}~ & ~\left(g^{h_{3}} \right)^4 g^{i_{2}} \\
\left(g^{h_{1}} \right)^5 g^{i_{2}}~ & ~\left(g^{h_{2}} \right)^5 g^{i_{2}}~ & ~{\color{red}\left(g^{h_{3}} \right)^5 g^{i_{2}}}
\end{array}\right]
\end{array}
\right] \nonumber
\end{align}

As illustrated above, in the second row and column blocks, the permutation $\sigma^{*}$ assigns the forth row of $\Xi_{\mathcal{H}}$  to column three, and the third row to column four, since column three has helper index $h_{3}$ which is by assumption larger than the helper index of column four, namely $h_{1}$.
\end{example}

Note that in the Leibniz expansion for the determinant of $\Xi_{\mathcal{H}}$, each term is a power of the primitive element of $\mathbb{F}_{q}$, namely $g$. Moreover, using the result of Lemma \ref{Lem_Unique_Largest_exponent} we conclude that the term associated with $\sigma(i) = i$, provides the largest exponent of $g$ which is unique, and hence does not get cancelled by any other term in the expansion. Then the polynomial $f(x)$ is non-trivial as the coefficient of the term with highest power in $f(x)$ is one. Let $\textrm{deg}(f)$ denote the degree of the polynomial $f(x)$, and assume the degree of the expansion field $\mathbb{F}_{q}$ over $\mathbb{F}_{p}$, namely $m$, is larger than $\textrm{deg}(f)$. In other words, assume
\begin{align}\label{eq_field_size}
\log_{p}(q) > \textrm{deg}(f).
\end{align}
 
Now we use the following well-known result to show that the $g$ could not be a root of the polynomial $f(x)$.

\begin{thm}\label{Thm_divisible}\cite{Algebra_book}
Let $\mathbb{F}_{q}$ be a finite field with $q=p^{m}$ elements, for some prime number $p$. Also let $g$ be a primitive element in $\mathbb{F}_{q}$, with the minimal polynomial $\varrho(x)\in \mathbb{F}_{p}[x]$. If $f(x)\in\mathbb{F}_{p}[x]$ with $f(g) = 0$, then $\varrho(x)|f(x)$.
\end{thm}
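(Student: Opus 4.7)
The plan is to apply the Euclidean division algorithm in the polynomial ring $\mathbb{F}_{p}[x]$, which is available because $\mathbb{F}_{p}$ is a field. First I would divide $f(x)$ by the minimal polynomial $\varrho(x)$ to obtain a unique representation $f(x) = \varrho(x) q(x) + r(x)$ with $q(x), r(x) \in \mathbb{F}_{p}[x]$ and either $r(x) = 0$ or $\deg(r) < \deg(\varrho)$.

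Next I would evaluate the identity at $x = g$. By hypothesis $f(g) = 0$, and since $\varrho$ is the minimal polynomial of $g$ over $\mathbb{F}_{p}$ we also have $\varrho(g) = 0$. The identity therefore reduces to $r(g) = 0$. So the task becomes showing that $r(x)$ must be the zero polynomial.

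The main step is ruling out a nonzero remainder, and this is where the minimality property of $\varrho$ is used. Suppose for contradiction that $r(x) \neq 0$; then $\deg(r) < \deg(\varrho)$. Dividing $r$ by its leading coefficient (which lies in $\mathbb{F}_{p}^{\times}$ and hence is invertible) produces a monic polynomial $\tilde r(x) \in \mathbb{F}_{p}[x]$ with $\tilde r(g) = 0$ and $\deg(\tilde r) < \deg(\varrho)$. This contradicts the defining property of $\varrho$ as the monic polynomial of least degree in $\mathbb{F}_{p}[x]$ that vanishes at $g$. Hence $r(x) = 0$ and $f(x) = \varrho(x) q(x)$, which is exactly $\varrho(x) \mid f(x)$.

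I do not expect any real obstacle: the argument is the standard characterization of the annihilator ideal of an algebraic element, namely that the kernel of the evaluation map $\mathbb{F}_{p}[x] \to \mathbb{F}_{q}$, $x \mapsto g$, is the principal ideal $\langle \varrho(x) \rangle$. The only points worth being explicit about are that division with remainder is valid in $\mathbb{F}_{p}[x]$ and that leading coefficients in $\mathbb{F}_{p}$ are invertible, so normalization to a monic contradicting polynomial is legitimate.
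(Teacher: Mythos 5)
Your proof is correct and complete: the division $f(x)=\varrho(x)q(x)+r(x)$ in $\mathbb{F}_{p}[x]$, evaluation at $g$ to get $r(g)=0$, and the degree/minimality contradiction (after normalizing $r$ to be monic) is the standard argument that the kernel of the evaluation map $\mathbb{F}_{p}[x]\to\mathbb{F}_{q}$ is the principal ideal $\langle\varrho(x)\rangle$. The paper itself does not prove the theorem --- it cites a textbook and offers only the one-line remark that the result ``holds due to the fact that the minimal polynomial is irreducible, and the ring of polynomials is an integral domain,'' which gestures at a slightly different route: from $f(g)=0$ one argues that $\gcd(f,\varrho)\neq 1$ (else a B\'ezout identity $af+b\varrho=1$ evaluated at $g$ gives $0=1$), and irreducibility of $\varrho$ then forces $\varrho\mid f$. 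Your argument is more elementary in that it never invokes irreducibility of $\varrho$, only its minimality of degree, and it makes explicit the two points the paper leaves implicit (validity of division with remainder over a field and invertibility of leading coefficients). Both approaches are sound; yours is self-contained where the paper relies on a citation.
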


The above result holds due to the fact that the minimal polynomial is irreducible, and the ring of polynomials is an integral domain.

Using (\ref{eq_field_size}), it is clear that the polynomial $f(x)$ is not divisible by the minimal polynomial $\varrho(x)$, and hence using Theorem \ref{Thm_divisible}, we conclude that the determinant of the matrix $\Xi_{\mathcal{H}}$ is non-zero, which in turn proves the non-singularity of $\Theta_{\mathcal{H}}$.

\begin{rmrk}[Field size requirement]
Note that in order to satisfy (\ref{eq_field_size}) and guarantee $f(x) \in \mathbb{F}_{p}[x]$, we need both $p$ and $q$ to be large enough. For the sake of completeness here we provide lower bounds on each one that guarantee the required conditions, although they might not necessarily be the tightest lower bounds.

Having 
\begin{align}
p > \alpha ! \nonumber
\end{align}
guarantees $f(x) \in \mathbb{F}_{p}[x]$. Moreover, selecting $i_{1}, \cdots , i_{z}$ as 
\begin{align}
i_{j} = \alpha n (j-1) + 1, \nonumber
\end{align}
satisfies both (\ref{eq_i_ell_condition_1}) and (\ref{eq_i_ell_condition_2}). The largest exponent of $g$, realized in the term associated with the permutation $\sigma^{*}$, is then upper bounded by
\begin{align}
\alpha^{2} n \left(1+z^2 \right). \nonumber
\end{align}
Finally the upper bound on the required field size is 
\begin{align}
\left( \alpha ! \right)^{\left(\alpha ^2 n\left(1+z^2 \right) \right)}. \nonumber
\end{align}

This is of course very huge for practical settings, however, note that the goal of this appendix is to provide a proof for the the non-singularity of the matrix $\Theta_{\mathcal{H}}$. We present a different coding scheme for practical settings in Section \ref{Sec_Coding_Sch_II}, which reduces the field size requirement to $n$.
\end{rmrk}


\section{Proof of Lemma \ref{Lem_Converse}}\label{App_ConverseLemma}

\begin{proof}
The proof follows ideas similar to \cite{Regenerating}, \cite{ErrRes_Salim_Journal}, and \cite{BWAdapt_Opportunistic}. However, to derive an upper bound on the capacity of a BAER setting, we introduce a \emph{genie-aided} version of this code. Then we derive the upper bound on the capacity $F$, by finding an appropriate cut-set in the \emph{information flow graph} corresponding to the genie-aided version.

In the genie-aided version of $\mathcal{C}(n$, $k$, $D$, $b$, $\alpha$, $\gamma(\cdot))$, when we select the set of $d$ helper nodes for a repair, the genie identifies a subset of size $d-2b$ of the selected helpers as \emph{genuine} helpers, and we will only receive repair data from them. Similarly, in the download process after choosing the set of $k$ nodes, the genie identifies a subset of size $k-2b$ of genuine nodes among them, and the data collector only collects data from this subset. From Lemma \ref{Lem_Sufficiency_of_a-2b_nodes} we know that limiting the connections in the genie-aided version will not reduce the storage capacity. Hence, the storage capacity of the genie-aided version is an upper bound for $F$; the storage capacity of the original setting.

\begin{figure*}[!ht]
\psfragscanon
\psfrag{A}[cc][cc][1.5][0]{$\text{S}$}
\psfrag{B}[cc][cc][1.5][0]{{$\text{DC}$}}
\psfrag{C}[cc][cc][1.5][0]{{$1_{\text{in}}$}}
\psfrag{D}[cc][cc][1.5][0]{{$1_{\text{out}}$}}
\psfrag{E}[cc][cc][1.5][0]{{$2_{\text{in}}$}}
\psfrag{F}[cc][cc][1.5][0]{{$2_{\text{out}}$}}
\psfrag{G}[cc][cc][1.5][0]{{$n_{\text{in}}$}}
\psfrag{H}[cc][cc][1.5][0]{{$n_{\text{out}}$}}
\psfrag{I}[cc][cc][1.5][0]{{${\ell_1}_{\text{in}}$}}
\psfrag{J}[cc][cc][1.5][0]{{${\ell_1}_{\text{out}}$}}
\psfrag{K}[cc][cc][1.5][0]{{${\ell_2}_{\text{in}}$}}
\psfrag{L}[cc][cc][1.5][0]{{${\ell_2}_{\text{out}}$}}
\psfrag{S}[cc][cc][1.5][0]{{${\ell_{k-2b}}_{\text{in}}$}}
\psfrag{M}[cc][cc][1.5][0]{{${\ell_{k-2b}}_{\text{out}}$}}
\psfrag{O}[cc][cc][1.5][0]{{$\infty$}}
\psfrag{P}[cc][cc][1.5][0]{{$d-2b~\Bigg\{$}}
\psfrag{Q}[cc][cc][1.5][0]{{$d'-2b-1~\Bigg\{$}}
\psfrag{R}[cc][cc][1.5][0]{{$d''-k+1~\Bigg\{$}}
\psfrag{N}[cc][cc][1.5][0]{{$\alpha$}}
\centering
\resizebox{5 in}{!}{
\includegraphics[keepaspectratio=true,scale=1,trim={0 10cm 0 0},clip]{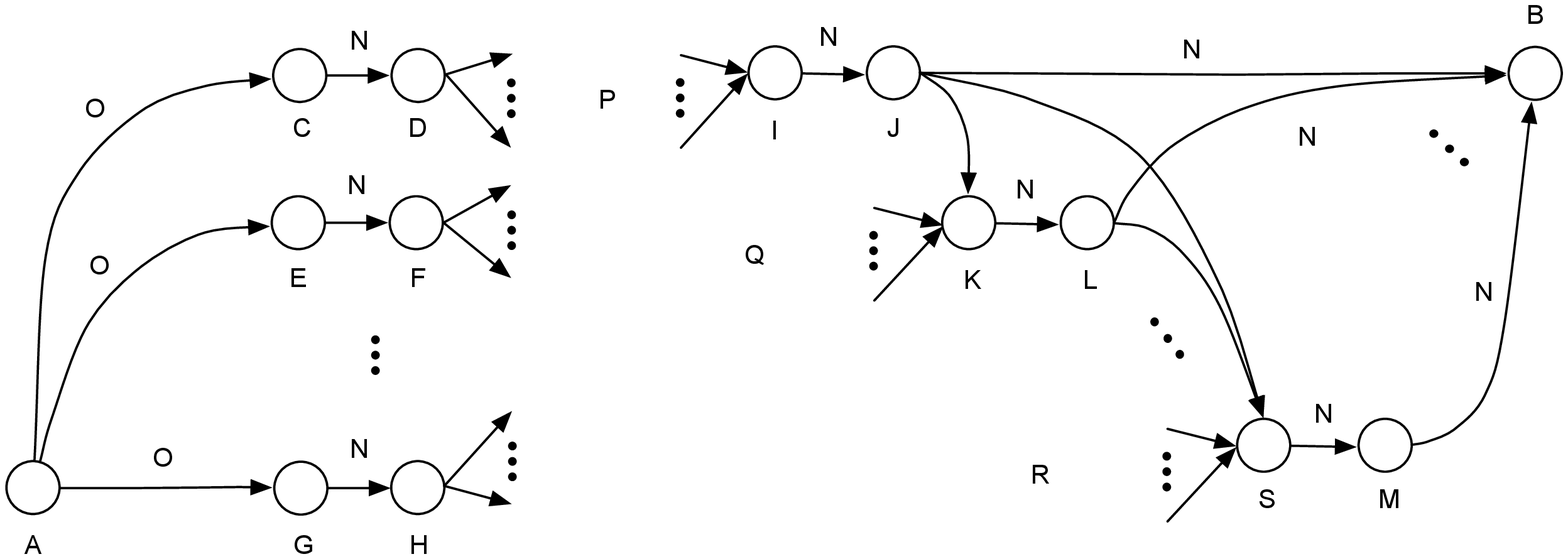}}
\caption{The information flow graph of the genie-aided version of a BAER regenerating code $\mathcal{C}(n$, $k$, $D$, $b$, $\alpha$, $\gamma(\cdot))$.}\label{Fig_InformationFlowGraph}
\end{figure*}

To derive an upper bound on the storage capacity of the genie-aided version, we will consider the \emph{information flow graph} as introduced in \cite{Regenerating}. The  information flow graph is a \emph{directed acyclic graph} (DAG) model to represent the flow of information during a sequence of repairs and data reconstructions in the network. The source of information is represented as a single node which has only out-going edges, and any data collector is represented as a single node which only has in-coming edges. Every storage node $\ell$, which has once been used in the network, is represented by a pair of nodes $\ell_{\text{in}}$, and $\ell_{\text{out}}$ in the DAG such that an edge with capacity $\alpha$ takes the flow of information from $\ell_{\text{in}}$ to $\ell_{\text{out}}$. This edge represents the per node storage capacity constraint for node $\ell$, hence we refer to such edges as \emph{storage edges}. In addition to storage edges there are three other types of edges in the information flow graph, namely the \emph{download edges}, the \emph{repair edges}, and the \emph{source edges}. Download edges have capacity $\alpha$ and take information flow from $\ell_{\text{out}}$ to a data collector node if $\ell$ is among the genuine selected nodes for the data collector. Repair edges take information flow from $\ell_{\text{out}}$ to $\ell'_{\text{in}}$ if $\ell'$ is a replacement node in the distributed storage network, and $\ell$ is one of the selected genuine helpers for the repair. The capacity of repair edges in the information flow graph is then $\gamma(d)/(d-2b)$, for the chosen parameter $d$ in the corresponding repair. Finally, we also consider a set of $n$ source edges with infinite capacity, taking information flow from the source node to the input node of initial $n$ storage nodes in the network. Figure \ref{Fig_InformationFlowGraph} depicts one example of an information flow graph.

Corresponding to any specific sequence of repair and data reconstruction processes, there exists a specific information flow graph. Any cut-set in the DAG model for an information flow graph consists of a set of edges such that after removing them, there is no path from the source to the data collector. As a result, the sum capacity of all the edges in a cut-set provides an upper bound on the capacity of information which could be stored in the corresponding distributed storage network and restored by a data collector after the sequence of repairs associated to the information flow graph is performed. We are going to consider the information flow graph depicted in Fig. \ref{Fig_InformationFlowGraph}.

As depicted in the Fig \ref{Fig_InformationFlowGraph} in our scenario a data collector is downloading the data stored in the network by accessing a set of $k-2b$ genuine nodes in the genie-aided setting. These nodes are indexed as $\ell_{1}, \cdots, \ell_{k-2b}$. We assume that each one of the nodes $\ell_{1}, \cdots, \ell_{k-2b}$ is a replacement node, added to the network through a repair procedure. We also assume for any $i\in \{1,\cdots,k-2b\}$, the repair for node $\ell_{i}$ is performed after the repair for any node $\ell_{j},~j<i$, and all of the nodes $\ell_{j},~j\in\{1,\cdots,i-1\}$ are used as genuine helpers in the repair of the node $\ell_{i}$, as depicted in Fig. \ref{Fig_InformationFlowGraph}. However, note that we let the number of helpers participating in each repair to be independently chosen from the set $D$.
 
We now describe the procedure of forming a cut-set for the information flow graph described above, by choosing a subset of storage or repair edges. For any $i\in\{1,\cdots,k-2b\}$, we either choose its storage edge, or all the $d-2b-i+1$ repair edges coming from the genuine helpers not in the set $\{\ell_{1},\cdots,\ell_{i-1}\}$ to node ${\ell_{i}}_{\text{in}}$. In order to choose we compare $\alpha$, the capacity of the storage edge for node $\ell_{i}$, with the sum of the capacities of the described repair edges, and whichever is smaller its corresponding edges will be added to the cut-set. Then, considering all the possible choices for the number of helpers in all the $k-2b$ repair procedures, we can find the following upper-bound on the total storage capacity using the best cut-set achieved by this scheme as follows,
\begin{align}
F \leq \sum_{i=1}^{k-2 b}{\min\left(\alpha,\min_{d\in D}\left((d-2b-i+1)\beta(d)\right)\right)} = \sum_{j=0}^{k-2 b-1}{\min\left(\alpha,\min_{d\in D}\left((d-2b-j)\frac{\gamma(d)}{d}\right)\right)}. \nonumber
\end{align} 

In the case of the MBR mode however, we know from Theorem \ref{THM_MBR_OptimalRBW}, 
\begin{align}
\gamma_{\text{MBR}}(d) = \frac{\alpha d}{d-2b}. \nonumber
\end{align}
As a result, for any $j \in \{0, \cdots, k-2b-1\}$ we get
\begin{align}
\min_{d\in D}\left((d-2b-j)\frac{\gamma_{\text{MBR}}(d)}{d}\right) = (d_{\min}-2b-j)\frac{\alpha}{d_{\min}-2b}, \nonumber
\end{align}
and hence,
\begin{align}
F_{\text{MBR}} &\leq \sum_{j=0}^{k-2 b-1}{\min \left(\alpha,(d_{\min}-2b-j)\frac{\alpha}{d_{\min}-2b}\right)}. \nonumber \\
&= \frac{\alpha (k-2 b)}{d_{\min}-2 b}\left(d_{\min}-b-\frac{k-1}{2}\right). \nonumber
\end{align} 
\end{proof}

\bibliographystyle{IEEEtran}
\bibliography{IEEEabrv,DSS_Bibliography}

\end{document}